\newtheorem{theorem}{Theorem}[section]
\newtheorem{conjecture}{Conjecture}[section]
\newtheorem{corollary}{Corollary}[section]
\newtheorem{proposition}{Proposition}[section]
\newtheorem{lemma}{Lemma}[section]
\newtheorem{remark}{Remark}[section]
\newcommand{\nabb}{\mbox{$\nabla \mkern-13mu /$\,}}
\title{Decay properties of Klein-Gordon fields on 
Kerr-AdS spacetimes}
\author{Gustav Holzegel\footnote{Princeton University, Department of Mathematics, Fine Hall, Washington Road, Princeton, NJ 08544, United States.}
\, and Jacques Smulevici\footnote{Laboratoire de Math\'ematiques, Universit\'e Paris-Sud 11, b\^at. 425, 91405 Orsay, France.}
}
\begin{document}
\maketitle
\begin{abstract}
This paper investigates the decay properties of solutions to the massive linear wave equation $\Box_g \psi + \frac{\alpha}{l^2} \psi =0$ for $g$ the metric of a Kerr-AdS spacetime satisfying $|a|l<r_+^2$ and $\alpha<\frac{9}{4}$ satisfying the Breitenlohner Freedman bound. We prove that the non-degenerate energy of $\psi$ with respect to an appropriate foliation of spacelike slices decays like $\left(\log t^\star\right)^{-2}$. 
Our estimates are expected to be sharp from heuristic and numerical arguments in the physics literature suggesting that general solutions will only decay logarithmically. The underlying reason for the slow decay rate can be traced back to a stable trapping phenomenon for asymptotically Anti de Sitter black holes which is in turn a consequence of the reflecting boundary conditions 
at null-infinity. 
\end{abstract}




 \tableofcontents



\section{Introduction}
In a sequence of recent papers \cite{Holzegelwp, HolzegelAdS,gs:lwp,gs:stab}, the study of boundedness and decay properties of 
waves on asymptotically Anti-de Sitter (AdS) black hole spacetimes was initiated. The analysis focused on the dynamics of the massive wave equation
\begin{align} \label{mwe}
\Box_g \psi + \frac{\alpha}{l^2} \psi = 0 
\end{align}
for $g$ an asymptotically AdS spacetime with cosmological constant $\Lambda= -3l^{-2}$ and $\alpha$ satisfying $\alpha<\frac{9}{4}$, the Breitenlohner-Freedman bound. The well-posedness of this equation on such backgrounds was established in \cite{Vasy2, Holzegelwp,claude:wp} under suitable boundary conditions at null-infinity. (See also \cite{Bachelot, Ishibashi} for earlier results on exactly AdS.) The uniform boundedness of solutions to (\ref{mwe}) was proven in \cite{HolzegelAdS} for Kerr-AdS spacetimes, provided the angular momentum per unit mass of the black hole is not too large.\footnote{In fact, the result was shown to hold for a class of $C^1$ perturbations of these spacetimes.} The proof relied on three crucial ingredients:
\begin{enumerate}[(a)]
\item the existence of a globally causal Killing vectorfield on the black hole exterior (the Hawking-Reall vectorfield \cite{HawkingReall}) for the range $|a|l < r_+^2$ where $r_+=r_+\left(M,a,l\right)$ is the $r$-value at the horizon in Boyer-Lindquist coordinates, \label{it1}
\item the exploitation of weighted Hardy inequalities allowing one to absorb the (wrong-signed) mass-term in the range $0<\alpha< \frac{9}{4}$, \label{it2}
\item the redshift effect near the event horizon coupled with the Hawking-Reall vectorfield allowed to prove boundedness \emph{without} the construction of an integrated decay (Morawetz) estimate. This goes back to an idea of Dafermos and Rodnianski in the asymptotically-flat, Schwarzschild-case, cf.~\cite{Mihalisnotes}. \label{it3}
\end{enumerate}
The ingredients (\ref{it1}) and (\ref{it3}) are particularly remarkable, when compared with the situation in the asymptotically flat case, where the boundedness statement for solutions of the massless wave equation on Kerr with $a \ll M$ is already intimately connected with the decay properties of the solution  \cite{DafRodKerr}: The absence of a globally causal Killing field implies that techniques based purely on spacetime vectorfields are insufficient to control the solution and instead microlocal analysis is required (or higher order differential operators, see \cite{AndBlue}). Interestingly, integrated \emph{decay} estimates had to be constructed for superradiant frequency ranges to prove \emph{boundedness} of the solution. The Kerr-AdS-case, on the other hand, is (precisely because of (\ref{it1}) and (\ref{it3})) more similar to the asymptotically-flat Schwarzschild case, where the boundedness and the decay statements can also be entirely decoupled. 

Regarding (\ref{it1}) above, we also remark that in the range $|a| l > r_+^2$ solutions are expected to grow because superradiant effects are continuously amplified by the reflecting AdS boundary \cite{Cardosoinstability1, Cardosoinstability2, blackholebomb}.

While the decay of solutions on asymptotically-flat Kerr black holes has recently been understood for the subextremal range $|a|<M$ \cite{DafRodlargea} after a period of intense research over the last ten years \cite{DafRod, DafRod2, Mihalisnotes, DafRodlargea, DafRodsmalla, DafRodnew, Toha1, Toha2, AndBlue}, the question of decay of solutions to (\ref{mwe}) remained unanswered in \cite{HolzegelAdS}. Very recently the authors, in the context of a spherically symmetric non-linear problem \cite{gs:lwp, gs:stab}, proved in particular that spherically symmetric solutions of (\ref{mwe}) decay exponentially on the Schwarzschild-AdS black hole exterior provided $\alpha\leq2$ holds. The latter result exploited a version of the redshift near null-infinity, as well as the absence of any trapping phenomena in spherical symmetry.

Beyond spherically symmetric solutions of the wave equation, however, the picture changes radically. This may be anticipated at the heuristic level from an analysis of geodesic flow (which is the high frequency or ``geometric optics" approximation for the propagation of waves) on the Schwarzschild-AdS manifold (i.e.~$a=0$). As in the asymptotically flat case, the radial equation for the null-geodesics sees a potential $V\left(r\right)$ which exhibits a local maximum at $r=3M$ corresponding to the familiar photon sphere. Consequently, ingoing geodesics which do not have sufficient energy to surmount the potential will be reflected off towards null-infinity. While in the asymptotically flat case these (now outgoing) geodesics eventually leave the spacetime through null-infinity, in the asymptotically AdS case, they are reflected back into the spacetime. This is a stable trapping phenomenon, which constitutes a serious obstacle to prove decay. The above reasoning does not take ``tunneling-effects" (i.e.~the uncertainty principle for the wave equation) into account. Indeed, an exponentially small part of the energy will tunnel through the potential and leave the black hole exterior through the event horizon. From these heuristics one may expect a logarithmic decay rate and this is indeed what we shall prove below.  

\subsection{Logarithmic Decay}
We now state our main results directing the reader to Section \ref{sec:norms} for a precise definition of the energy norms and functional spaces involved in Theorem \ref{pendia}. The statement below also makes reference to a foliation of the Kerr-AdS black hole exterior by spacelike slices, $\Sigma_{t^\star}$, which  will be defined in Section \ref{sec:KerrAdS}. For the reader familiar with the formalism of Penrose diagrams, the foliation may be read off from Figure 1.

\begin{theorem} \label{theorem1}
Let $\left(\mathcal{M},g \right)$ be a Kerr-AdS spacetime with parameters $\left(M,a,l\right)$ for which
both the Hawking-Reall condition $r_+^2 > |a| l$ and the regularity condition $|a|<l$ hold.
Let $\psi$ be a $CH_{AdS}^2$ solution of (\ref{mwe}) with $-\infty < \alpha< \frac{9}{4}$ on this background.  Assume moreover that one of the following three conditions on the parameters hold:
\begin{enumerate}
\item $\alpha \le 1$  ,
\item $1 < \alpha \le 2$ and $|a| < \frac{l}{2}$  ,
\item $a$ is sufficiently small depending only on $M$, $l$ and $\alpha$. 
\end{enumerate}
Then, for the foliation of spacelike slices $\Sigma_{t^\star}$ (intersecting the event horizon) defined in Section \ref{Foliations}, we have the decay estimate  
\begin{align} \label{es:maines}
\| \psi \|_{H^1_{AdS}\left(\Sigma_{t^\star}\right)} \lesssim \left(\log \ t^\star \right)^{-1} \left(E_2 \left[\psi \right]\right)^\frac{1}{2}
\end{align}
for any $t^\star \geq 2$, where $\lesssim$ allows a constant depending only on the parameters $M$, $a$, $l$ of the background spacetime and on $\alpha$.
\end{theorem}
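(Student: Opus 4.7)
The plan is to derive the logarithmic decay by combining the uniform boundedness of the non-degenerate $H^1_{AdS}$ energy already established in \cite{HolzegelAdS} with an integrated local energy decay estimate that loses exponentially at high frequencies. This is the by-now standard Burq-type strategy for extracting logarithmic rates in the presence of stable trapping. The uniform boundedness rests on the three ingredients (a)--(c) of the introduction: the Hawking-Reall Killing field $K = \partial_{t^\star} + \Omega_H \partial_\phi$, which is causal in the whole exterior precisely in the range $|a|l<r_+^2$, gives non-negativity of the energy flux through $\Sigma_{t^\star}$; the redshift vectorfield gives coercivity at the event horizon; the Hardy inequalities in the Breitenlohner-Freedman range absorb the wrong-signed mass term at null-infinity. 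The three case distinctions in the statement reflect exactly the ranges of $(\alpha,|a|/l)$ for which those Hardy estimates close.

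The key new ingredient is a resolvent estimate with exponential loss. Passing to the stationary picture via the Fourier transform in $t^\star$, the Killing symmetry in $\phi$, and the separation of variables using (oblate spheroidal) angular eigenfunctions, \eqref{mwe} reduces to a family of radial ODEs in $r$ with frequency-dependent potentials $V_{\omega m \ell}(r)$. These potentials exhibit a stable local maximum in the exterior corresponding to the trapped null geodesics reflected by the AdS boundary. The target estimate is of the form
\begin{equation*}
\| \chi (P - \omega^2)^{-1} \chi f \|_{L^2} \lesssim e^{C|\omega|} \| f \|_{L^2}
\end{equation*}
for a spatial cutoff $\chi$ and a constant $C=C(M,a,l,\alpha)$. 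The proof proceeds by an Agmon-type exponential weight tailored to tunnel through the barrier, with boundary conditions arranged so the weight remains compatible with the Hardy inequality at the conformal boundary and with the redshift estimate at the horizon. Superradiant frequencies (those with $\omega(\omega-m\Omega_H)<0$) are handled by conjugating the multiplier identity with the Hawking-Reall frame, so the boundary fluxes at the horizon and at infinity retain a good sign.

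With the exponential-loss resolvent estimate in place, the proof concludes by splitting $\psi$ in frequency as $\psi_{\le \Lambda} + \psi_{>\Lambda}$. The low-frequency piece is controlled by the resolvent bound and decays at rate $e^{C\Lambda}/t^\star$, while the high-frequency piece is bounded by $\Lambda^{-1}$ times $E_2[\psi]^{1/2}$, using the commutation of \eqref{mwe} with $K$ followed by a second application of the boundedness statement. Optimising via $\Lambda \sim \frac{1}{2C}\log t^\star$ balances the two and yields the claimed rate $(\log t^\star)^{-1}$ in the $H^1_{AdS}$ norm.

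The main obstacle is the Carleman/resolvent estimate of the second step in the asymptotically AdS setting. In classical stable-trapping situations one works either in compact geometry or in the asymptotically Euclidean category, so that the Agmon weight may be chosen globally and the only nontrivial boundary term sits at infinity. For Kerr-AdS the weight must be designed to respect simultaneously the Dirichlet/conformal condition at $\mathcal{I}$ (handled through Hardy, which constrains the allowable $\alpha$), the coercive redshift boundary term at $\mathcal{H}^+$, and the superradiant cross-coupling in the Kerr regime. Constructing such a weight and verifying that the resulting positive bulk term dominates both the error coming from the wrong-signed mass term and the angular frequency contributions is the technical heart of the argument, and is the reason the three parameter ranges appear in the hypotheses; once that estimate is in hand, Fourier synthesis and the logarithmic interpolation above are routine.
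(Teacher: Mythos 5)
Your overall architecture is essentially the paper's: frequency-localize, prove an integrated local energy decay estimate with exponential loss in the low-frequency cut-off for the trapped part, gain a power of the cut-off on the high-frequency remainder via one commutation and the boundedness statement, and balance by taking the cut-off proportional to $\log t^\star$. The exponential Agmon/Carleman-type weight you describe is exactly the mechanism used in the paper (the weight is $f_n = r^{-n}\exp(-k\arctan(r/l))$, whose boundedness is what keeps the loss from being worse than exponential on the noncompact range of $r$).

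There is, however, a concrete inconsistency in the frequency bookkeeping that would sink the argument as written. Your resolvent estimate carries the loss $e^{C|\omega|}$, i.e.\ exponential in the time-frequency, but the trapping in Kerr-AdS is angular trapping at the photon sphere: after separating variables, the barrier of the radial potential has height $\sim \lambda_{m\ell}/r^2$, and the tunneling loss must be $e^{C\sqrt{\lambda_{m\ell}+a^2\omega^2}}$ (the paper's $e^{C\sqrt{L}}$ with $L$ cutting off $\lambda_{m\ell}+a^2\omega^2$). For Schwarzschild-AdS ($a=0$) an estimate losing only $e^{C|\omega|}$ would be insensitive to the mechanism entirely, and on the high-frequency side the gain $\Lambda^{-1}$ that you invoke comes precisely from commuting with angular derivatives and being localized at $\lambda_{m\ell}\gtrsim \Lambda$, so the quantity you cut off and the quantity in the exponential must be the same. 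Two further issues: (i) there is no superradiance in the Hawking-Reall regime $|a|l<r_+^2$ (the paper's near-stationary split $|\omega-\omega_+|\lesssim\delta\sqrt{\mathcal{L}^\star}$ is made for a different technical reason, namely that the exponentially weighted multiplier degenerates there and a different, ``almost-stationary'' current is needed); (ii) converting the integrated estimate into decay of the energy on every slice is not merely a division by $t^\star$ — because the nondegenerate horizon flux of the full solution does not decay, one cannot control the cut-off solution's $N$-flux on the horizon directly, and the paper resolves this with a local $N$-estimate between slices, a pigeonhole to find a good initial slice, and then a bootstrap to export the rate; your proposal should address this step explicitly.
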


By what have become standard techniques (commuting with $T$, the redshift and angular momentum vectorfields, cf.~\cite{Mihalisnotes}, \cite{HolzegelAdS}), one obtains this $\left(\log \ t^\star\right)^{-1}$-decay for all higher order norms and also pointwise estimates via Sobolev embedding. In particular, one has

\begin{corollary}
Under the assumptions of Theorem 1 and with $\psi$ in addition in $CH^3_{AdS}$ one has the pointwise bound
\begin{align}
|\psi| \lesssim \left(\log \ t^\star \right)^{-1} \left( \left(E_2\left[\psi\right]\right)^\frac{1}{2} + \left(E_2\left[\partial_{t^\star} \psi\right]\right)^\frac{1}{2} \right) \, .
\end{align}
\end{corollary}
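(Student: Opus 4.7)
The plan is to bootstrap the energy decay estimate of Theorem \ref{theorem1} to higher-order Sobolev norms by commuting equation (\ref{mwe}) with appropriate vectorfields, and then to convert the resulting control into a pointwise bound by Sobolev embedding on the spacelike slices $\Sigma_{t^\star}$. This is exactly the strategy sketched in the paragraph immediately following Theorem \ref{theorem1}, so the task here is to flesh out the three standard steps in a form tailored to the AdS setting.

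First, I would note that the Hawking-Reall Killing vectorfield (which agrees with $\partial_{t^\star}$ in the relevant coordinate system) and the axial rotation vectorfield $\Phi$ both commute with $\Box_g + \alpha / l^2$. Consequently, for any $CH^3_{AdS}$ solution $\psi$, the derivatives $\partial_{t^\star} \psi$ and $\Phi \psi$ are themselves $CH^2_{AdS}$ solutions of (\ref{mwe}) satisfying the same Dirichlet-type boundary conditions at null-infinity, to which Theorem \ref{theorem1} applies directly. This yields the $(\log t^\star)^{-1}$-decay of $\|\partial_{t^\star} \psi\|_{H^1_{AdS}(\Sigma_{t^\star})}$ and of the corresponding angular commutators, with $E_2[\partial_{t^\star}\psi]$ appearing on the right-hand side. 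To control the transverse-to-horizon derivative uniformly up to $\mathcal{H}^+$, I would commute in addition with the redshift vectorfield $Y$ of \cite{Mihalisnotes, HolzegelAdS}; the commutator $[\Box_g, Y]$ produces a good-signed zeroth-order term at the horizon which can be absorbed in the energy identity, so Theorem \ref{theorem1} applied to $Y\psi$ in a neighbourhood of $r_+$ (and the analysis of $\partial_r \psi$ away from the horizon) gives $(\log t^\star)^{-1}$ control of all first derivatives of $\partial_{t^\star}\psi$ and of $\psi$ in the energy norm.

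At this stage one controls the angular, temporal and transverse-to-horizon second derivatives of $\psi$; the remaining second-order radial derivative in the bulk is recovered by reading equation (\ref{mwe}) as a weighted elliptic relation for $\psi$ on each $\Sigma_{t^\star}$ once $\partial_{t^\star}^2 \psi$ has been bounded. Collecting these estimates produces
\begin{equation*}
\| \psi \|_{H^2_{AdS}(\Sigma_{t^\star})} \lesssim \left(\log t^\star\right)^{-1} \left( \left(E_2[\psi]\right)^{1/2} + \left(E_2[\partial_{t^\star}\psi]\right)^{1/2} \right).
\end{equation*}
Since $\Sigma_{t^\star}$ is three-dimensional, a Sobolev embedding of the form $H^2_{AdS}(\Sigma_{t^\star}) \hookrightarrow L^\infty(\Sigma_{t^\star})$ then yields the claimed pointwise bound.

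The main obstacle is the asymptotic AdS end $r \to \infty$: unlike in the asymptotically flat or cosmological cases, the Sobolev embedding must be compatible with the weighted norms $H^k_{AdS}$ and with the boundary conditions at the conformal boundary selected by the $CH^k_{AdS}$ functional framework of Section \ref{sec:norms}. The behaviour near $\mathcal{H}^+$ is comparatively benign since one works in coordinates regular across the horizon and the redshift commutator already controls all tangential and transverse derivatives there. Once the weighted embedding has been verified (it is a standard Hardy-Sobolev computation in the coordinates of Section \ref{sec:KerrAdS}), the corollary follows without any further use of dispersive information beyond that provided by Theorem \ref{theorem1}.
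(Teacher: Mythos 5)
Your proposal spells out exactly the ``standard techniques'' that the paper itself only cites for this corollary (commuting with $T$, the angular momentum operators $\Omega_i$, and the redshift vectorfield near $\mathcal{H}^+$; recovering the remaining radial second derivative elliptically from the equation; then Sobolev embedding $H^2_{AdS}(\Sigma_{t^\star})\hookrightarrow L^\infty$), so this is the same approach and is essentially correct. One small slip: the Hawking-Reall vectorfield is $K = T + \tfrac{a\Xi}{r_+^2+a^2}\Phi$, not $\partial_{t^\star}=T$ (they coincide only when $a=0$); since both $T$ and $\Phi$ are Killing this does not affect the commutation argument, but the identification should not be stated.
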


We remark that we have made no serious attempt to optimize the conditions on the parameter $a$ in items 2 and 3 of Theorem \ref{theorem1}. They can most likely be improved with further work.

\begin{figure}[h] \label{pendia}
\[
\input{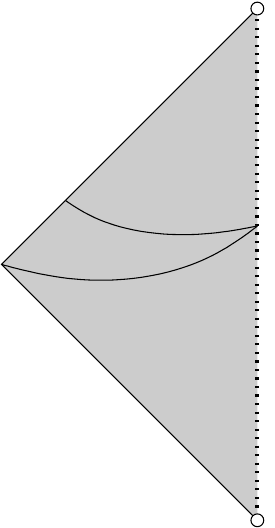_t}
\]
\caption{The spacelike hypersurfaces $\Sigma_t$ and $\Sigma_{t^\star}$.}
\end{figure}

\subsection{Absence of time-periodic solutions}
Obviously, Theorem \ref{theorem1} excludes time-periodic solutions. This situation is in sharp contrast with that of pure $AdS$, where periodic solutions may be easily constructed, cf.~\cite{Breitenlohner}. It turns out that the absence of periodic solutions can be proven in a more elementary fashion which is why we include and prove this as a separate theorem.
\begin{theorem} \label{theorem2}
Let $\left(\mathcal{M},g \right)$ be a Kerr-AdS spacetime with parameters $\left(M,a,l\right)$ such that $|a| l < r_+^2$ and $|a|<l$. Let $\psi$ be a solution of (\ref{mwe}) in the class $CH^1_{AdS}$, with $\alpha < 9/4$, with $\psi$ of the form $\psi=e^{i\omega t}\hat{\psi}$, where $\omega \in \mathbb{R}$ and $\hat{\psi}$ is constant in $t$.
Assume moreover that one of the three conditions on the parameters given in Theorem \ref{theorem1} hold. Then $\psi=0$.
\end{theorem}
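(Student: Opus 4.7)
The plan is to exploit the globally causal Hawking--Reall Killing vectorfield $K = T + \Omega_H \Phi$, whose angular velocity $\Omega_H$ makes it null on $\mathcal{H}^+$ and timelike in the exterior under the assumption $|a|l < r_+^2$, in order to extract strong boundary vanishing on the horizon, and then to propagate it into the bulk by an elliptic energy argument controlled by the Hardy/Breitenlohner--Freedman package already used in \cite{HolzegelAdS}.

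The first step is the $K$-energy identity applied on a slab $\{t_1^\star \leq t^\star \leq t_2^\star\}$. Since $K$ is Killing and the pointwise stress-energy density of $\psi = e^{i\omega t}\hat\psi$ is manifestly $t$-invariant, the two slice terms cancel; the flux at null-infinity vanishes by the boundary condition implicit in the class $CH^1_{AdS}$; and the horizon integrand is non-negative, being a positive multiple of $|K\psi|^2$ because $K$ is null on $\mathcal{H}^+$. Together these force $K\psi \equiv 0$ on $\mathcal{H}^+$. Expanding $\hat\psi = \sum_{m \in \mathbb{Z}} e^{im\phi}\hat\psi_m(r,\theta)$ and using that $\Phi$ commutes with $\Box_g$, each mode $\psi_m = e^{i\omega t + im\phi}\hat\psi_m(r,\theta)$ itself solves (\ref{mwe}) and satisfies
\begin{equation*}
(\omega + m\Omega_H)\,\hat\psi_m(r_+,\theta) = 0.
\end{equation*}
At most one index $m_0$ can satisfy $\omega + m_0\Omega_H = 0$; for every other $m$ one obtains $\hat\psi_m|_{r_+} \equiv 0$, while the exceptional mode (if present) is $K$-invariant in the co-rotating coordinates $(\tilde t, r, \theta, \tilde\phi)$ adapted to $K$.

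Next the aim is to reduce (\ref{mwe}) to an elliptic boundary-value problem on $\{r > r_+\}$ by separating the $\tilde t$-dependence of each mode in co-rotating coordinates. Ellipticity of the reduced operator is a direct consequence of $K$ being timelike in the exterior. Testing this reduced equation against $\overline{\hat\psi_m}$ and integrating by parts on the spatial slice produces a quadratic form in $\hat\psi_m$; the mass term is absorbed, in the BF range $\alpha < \frac{9}{4}$ and under one of the three parameter alternatives stated in Theorem \ref{theorem1}, by the weighted Hardy inequalities used to prove boundedness in \cite{HolzegelAdS}. The boundary term at infinity vanishes by the AdS boundary condition, while at the horizon it vanishes either from $\hat\psi_m|_{r_+} = 0$ (when $m \neq m_0$) or from $K$-invariance together with the degeneration of the inverse metric along the null generators of $\mathcal{H}^+$ (when $m = m_0$). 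The resulting coercive form then forces $\hat\psi_m \equiv 0$ for every $m$, and hence $\psi \equiv 0$.

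The cleanest part of the argument is the horizon vanishing in the first step, which is almost immediate from stationarity once the Hawking--Reall positivity is in hand. The main obstacle is expected to be the superradiant index $m_0$ in the elliptic step: one has only regularity at the horizon rather than Dirichlet data, so the horizon contribution in Green's identity must be handled by exploiting that the elliptic operator in co-rotating coordinates degenerates precisely along the null generators of $\mathcal{H}^+$. The elliptic coercivity via Hardy is by now a routine move in this AdS setting, and the parameter conditions of Theorem \ref{theorem1} are precisely those needed to make it uniform across the relevant range of $\alpha$.
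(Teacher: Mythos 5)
Your first step --- extracting $K\psi\equiv 0$ on $\mathcal H^+$ from the $K$-energy identity on a time-slab, and hence $\hat\psi_m|_{r_+}=0$ for every non-exceptional $m$ --- is correct and is a perfectly good alternative to what the paper does (the paper instead derives $u(r_+)=0$ from the conserved microlocal current $Q_T[u]=\omega\,\mathrm{Im}(u'\bar u)$ combined with the outgoing regularity condition \eqref{oov}). Your treatment of the single exceptional $K$-invariant mode is exactly the paper's Step~1 (the Bekenstein-type identity \eqref{eq:posneg} with $P=g(K,K)/\det A\ge 0$ and a Hardy inequality).

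The gap is in the elliptic step for the non-exceptional modes $m\neq m_0$. Testing the reduced equation against $\overline{\hat\psi_m}$ and integrating by parts over $(r,\theta)$ produces the bulk quadratic form
\begin{equation*}
\int \sqrt{-g}\Bigl(g^{rr}|\partial_r\hat\psi_m|^2 + g^{\theta\theta}|\partial_\theta\hat\psi_m|^2 + \bigl[\omega^2 g^{tt}+2\omega m\,g^{t\tilde\phi}+m^2 g^{\tilde\phi\tilde\phi}\bigr]|\hat\psi_m|^2 - \tfrac{\alpha}{l^2}|\hat\psi_m|^2\Bigr),
\end{equation*}
and the Fourier coefficient of the zeroth-order term is, using the explicit inverse metric,
\begin{equation*}
\omega^2 g^{tt}+2\omega m\,g^{t\tilde\phi}+m^2 g^{\tilde\phi\tilde\phi}
= -\frac{\bigl(\omega(r^2+a^2)-m\Xi a\bigr)^2}{\Sigma\,\Delta_-} + \frac{\bigl(\omega a\sin\theta - \tfrac{m\Xi}{\sin\theta}\bigr)^2}{\Sigma\,\Delta_\theta}.
\end{equation*}
This is \emph{not} sign-definite: the first term is unbounded below as $r\to r_+$ precisely when $\omega(r_+^2+a^2)-m\Xi a\neq 0$, i.e.\ for every $m\neq m_0$. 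The cancellation that makes the $K$-invariant case work --- the two singular $\Delta_-^{-1}$ pieces in $g^{tt}$ and $g^{\tilde\phi\tilde\phi}$ combine with $g^{t\tilde\phi}$ into the non-negative quantity $P$ only when the $K$-derivative is zero --- simply does not extend once $K\hat\psi_m\neq 0$. Moreover, knowing $\hat\psi_m|_{r_+}=0$ alone is not enough to absorb the $\Delta_-^{-1}|\hat\psi_m|^2$ term into $\Delta_-|\partial_r\hat\psi_m|^2$ by a Hardy-type inequality; the weights here are at the critical borderline where no such inequality holds without further quantitative input (try $\hat\psi_m\sim(r-r_+)^\epsilon$ with $\epsilon\to 0$). ``Ellipticity of the reduced operator'' (which is automatic since $g^{rr},g^{\theta\theta}>0$ and does not require $K$ timelike) says nothing about coercivity of the associated form. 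The paper resolves exactly this difficulty by working with the fully separated ODE \eqref{eq:uode}: conservation of $Q_T$ forces $\mathrm{Im}(u'\bar u)=0$, which combined with \eqref{oov} gives $u(r_+)=0$; then the microlocal redshift current $Q^{V_{red}^{-1}}_{red}$ (which vanishes on the horizon, is non-negative, and is non-increasing there by \eqref{Qreddec}) forces $u\equiv 0$ in a neighbourhood of the horizon, and linear ODE uniqueness from trivial data does the rest. Your write-up needs to either import that redshift/ODE mechanism for the $m\neq m_0$ modes or supply a genuinely coercive replacement quadratic form; as stated, the elliptic step does not close.
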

The above theorem implies the absence of time-periodic solutions in $CH^1_{AdS}$: Indeed, given a solution $\psi$ in $CH^1_{AdS}$ which is periodic in $t$ with period $\frac{1}{\omega}$ for some non-zero $\omega \in \mathbb{R}$, we have the expansion $\psi = \sum_{n=-\infty}^\infty e^{i \omega_n t} \hat{\psi}_n$, with $\omega_n = 2\pi \cdot n \cdot \omega$. By Theorem \ref{theorem2} all the $\hat{\psi}_n$ are zero.
\begin{remark}
In view of the recent results of  \cite{GHHCMW}, Theorem \ref{theorem2} actually holds without any restrictions on the parameters other than $|a| l < r_+^2$, $|a|<l$ and $\alpha<9/4$. See Section \ref{noperiodic}.
\end{remark}
\subsection{Quasi-normal modes}
The question of decay for solutions of wave equations in the presence of trapping has a long tradition in the context of the so-called \emph{obstacle problem} in Minkowski space. Here, one studies the propagation of waves outside of a compact object. The analogue of trapped null-geodesics are piecewise straight lines reflected off the boundary of the obstacle but (due to the geometry of the obstacle) remaining in a compact region of space. For this problem, one can prove that, as in the setting studied here, a logarithmic local energy decay estimate always holds, even in the presence of stable trapping. In \cite{nb:del}, this decay result was obtained via an analysis of the so-called \emph{resonances}, also known as quasi-normal modes in the physics literature. Resonances are the poles of the meromorphic continuation of the (truncated) resolvent of the associated elliptic operator. In the black hole setting, such techniques have been very successfully applied on asymptotically
de Sitter manifolds \cite{Melrose, Haefner, Vasy, VasyDyatlov, Dyatlov1, Dyatlov2}.
 
Finally, the issue of locating the quasi-normal modes is also a popular theme in the physics literature \cite{Hubeny, Festuccia}. For the case of the wave equation on Schwarzschild-AdS, it was found numerically in \cite{Festuccia} that the imaginary part of the quasi-normal modes approach the axis exponentially fast as the angular momentum mode $\ell$ goes to infinity. This is suggestive of general solutions decaying at best logarithmically.

\subsection{The non-linear stability problem}
Let us relate our results to the problem of non-linear stability for asymptotically AdS spacetimes. As mentioned above, there is no decay for the wave equation on pure AdS, in fact, one can construct an infinity of periodic solutions on this background. For the associated non-linear problem, that is the evolution under the Einstein equations of perturbations of Cauchy data for AdS (supplemented with appropriate boundary conditions), this is highly suggestive of instability. This fact was first pointed out in a talk of M.~Dafermos \cite{Newtontalk}, and suggested independently (using different arguments) by the results of \cite{AndAdS}. Recently, there has been some interesting numerical work \cite{BizonAdS} in support of AdS being unstable.

Turning from pure AdS to black hole spacetimes, one has the result of \cite{gs:stab} proving asymptotic stability of Schwarzschild-AdS with respect to \emph{spherically symmetric} perturbations in the context of the Einstein-Klein-Gordon system. With this in mind, one may expect a similar result to hold for general perturbations of Schwarzschild-AdS, or more generally, Kerr-AdS spacetimes. On the contrary, the analysis of this paper suggests that this belief is likely to be misconceived: While the trapping mechanism is absent in the spherically symmetric case (allowing one to prove exponential decay in this symmetry class), it will, in general, prevent the linear solution from decaying faster than logarithmically. For the non-linear problem, the logarithmic decay rate of the linear solution will be too weak to close a standard bootstrap argument. Thus one may conjecture:
\begin{conjecture}
Kerr-AdS spacetimes are dynamically unstable for generic perturbations.
\end{conjecture}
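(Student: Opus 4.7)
The plan is to establish the instability conjecture for a concrete nonlinear problem (Einstein-vacuum with $\Lambda<0$, or the Einstein-Klein-Gordon system with reflecting boundary conditions on $\mathcal{I}$) by exhibiting initial data arbitrarily close to Kerr-AdS whose maximal future development fails to asymptote to any member of the Kerr-AdS family. A preliminary step is to fix a precise functional framework and formulation: Dirichlet-type conditions on $\mathcal{I}$ so that infinity remains totally reflecting, and a statement that in every neighbourhood of Kerr-AdS data (in a chosen Sobolev topology compatible with the well-posedness theory of \cite{Holzegelwp,gs:lwp,claude:wp}) there exist perturbations whose non-degenerate energy on $\Sigma_{t^\star}$ does not decay to zero, or whose asymptotic geometric invariants do not settle down.

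The driving mechanism should be a weakly turbulent transfer of energy to high angular frequencies, powered by the same stable-trapping phenomenon that produces the logarithmic bound of Theorem \ref{theorem1}. Concretely, the first analytic step is to construct, at the linear level, quasimode solutions of (\ref{mwe}) whose energy is tightly concentrated in a neighbourhood of the trapped photon sphere of Kerr-AdS and whose lifetime grows super-polynomially (in fact exponentially, compatibly with the numerical predictions of \cite{Festuccia}) in the angular parameter $\ell$. The sharpness of the logarithmic rate on such packets is precisely what forces the linear dissipation timescale to become enormous, opening the door to nonlinear resonant amplification.

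Given these quasimodes, the next step is a resonant interaction analysis of the Einstein (or matter) equations linearised around Kerr-AdS, in the spirit of the Bizo\'n--Rostworowski program \cite{BizonAdS} on pure AdS. One would isolate triplets of quasi-normal frequencies $\omega_{n_1}+\omega_{n_2}\simeq\omega_{n_3}$ with non-vanishing coupling coefficients (concentrated on the trapping set) and show via a two-timescale expansion that these resonances produce secular growth of the high-$\ell$ amplitudes on a timescale much shorter than the linear decay time $\exp(c\ell)$. Initial data exciting such a cascade should then either destroy dispersive decay outright, or, pushing the argument further, concentrate enough energy to force the formation of a trapped surface strictly inside $\Sigma_{t^\star}$, giving a genuine dynamical instability; genericity would be addressed by showing that the complementary (non-resonant) set is nowhere dense in the ambient topology.

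The principal obstacle, and the reason this remains a conjecture rather than a theorem, is the control of the nonlinear evolution over timescales comparable to the instability time. In contrast to pure AdS, where the linearised spectrum is purely real and a Birkhoff-type normal-form analysis operates cleanly on $[0,\varepsilon^{-2}]$, for Kerr-AdS the quasi-normal frequencies carry exponentially small but strictly negative imaginary parts: one must balance these weakly damped resonances against weakly growing amplitudes while closing a bootstrap in a norm strong enough to accommodate the quasilinear nature of the Einstein equations and the delicate behaviour at $\mathcal{I}$ established in \cite{HolzegelAdS,Holzegelwp}. Retaining simultaneously enough frequency localisation to detect the resonant cascade and enough regularity to run energy estimates seems to require a substantially new analytic input beyond those used to prove Theorem \ref{theorem1}.
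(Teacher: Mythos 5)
The statement you were asked to prove is a \emph{conjecture}, and the paper offers no proof of it; the entire justification in the text is the one-line heuristic that the $(\log t^\star)^{-2}$ decay of Theorem \ref{theorem1} is expected to be too slow to close a standard nonlinear bootstrap, in contrast to the exponential decay available in spherical symmetry which underlies the stability result of \cite{gs:stab}. Your proposal is therefore not comparable to a paper proof because there is none; it is a research-program sketch, and you correctly say so yourself in your final paragraph.

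As a program sketch it is reasonable and goes considerably beyond what the paper actually argues: the paper does not construct quasimodes, does not invoke a resonant cascade or a two-timescale normal-form analysis, and makes no claim about trapped-surface formation or genericity. Those are all plausible ingredients (and the quasimode/stable-trapping picture and the link to \cite{Festuccia,BizonAdS} are consistent with the paper's heuristics), but none of them is established, and the central obstruction you name — controlling the quasilinear Einstein evolution over the exponentially long instability timescale while retaining both frequency localisation and sufficient regularity at $\mathcal{I}$ — is precisely why the statement remains a conjecture. Two further points deserve flagging even at the level of a sketch: (i) the notion of ``instability'' must be pinned down carefully, since failure of the scalar field to decay does not by itself preclude convergence of the geometry to a nearby Kerr-AdS with different parameters, and a statement about the full Einstein system requires a gauge-invariant formulation of what ``does not settle down'' means; and (ii) on pure AdS the linear spectrum is real, which is what makes the Bizo\'n--Rostworowski normal-form analysis clean — importing that machinery to a black-hole background where the spectrum has (exponentially small) imaginary parts and a continuous piece is a genuinely new problem, not a routine adaptation, exactly as you note. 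In short: you have not proved the conjecture, the paper has not proved it either, and the gap you identify is the real one.
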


\subsection{Outline}
In Section \ref{sec:KerrAdS}, we present the family of Kerr-AdS Lorentzian manifolds. Sections \ref{sec:norms}  and \ref{se:wpbs} contain the functional framework for the Klein-Gordon field as well as a review of previous material (well-posedness and boundedness results) required later in the paper. In Section \ref{se:pre}, we exploit the separation of variables for the wave equation (\ref{mwe}) to localize the solutions in various frequency ranges associated to both the time and the angular variables. This localization allows to replace (\ref{mwe}) by an ordinary differential equation with a (frequency dependent) potential. The relevant properties of this potential are also proven in this section. 

Our decay result is based on an integrated decay estimate for the low frequency part, which is derived via the construction of microlocalized energy currents which generate positive spacetime terms (pioneered in the black hole context in \cite{Mihalisnotes, DafRodsmalla, DafRodlargea}). These currents are introduced in Section \ref{se:mct} and applied in Sections \ref{phiflatt2} and \ref{phiflatt}. A key ingredient is the use of \emph{exponential} weights in these currents, with the constant in the exponential depending on the frequency cut-off. In Section \ref{se:sephif}, we collect our frequency space estimates to obtain an integrated decay estimate on the low frequency part of the solution. The high frequency part of the solution is estimated in Section \ref{se:esph} using simply the boundedness statement: This estimate gains a smallness factor (after commutation with angular derivatives) from the fact that the solution is supported on high frequencies. Finally, we conclude the proof of the main theorem in Section \ref{se:pth1} by interpolating between the high and low frequency estimates. 

A special case of Kerr-AdS spacetimes are the Schwarzschild-AdS spacetimes. In this case, we can improve the estimates for the low frequency part of the solution to establish exponential decay of the energy (with the constant depending on the frequency cut-off). This is done in Section \ref{se:srsc}. Section \ref{noperiodic} contains the proof of Theorem \ref{theorem2}. This section can in fact be read independently of Sections \ref{phiflatt2}--\ref{se:srsc}. Finally, Appendix \ref{ap:suf} contains some standard estimates associated to the frequency decomposition.

%
%
%
%
%
%
%
%
\section{The Kerr-AdS family of spacetimes} \label{sec:KerrAdS}
The following presentation of Kerr-AdS spacetimes is strongly inspired by the presentation of the Kerr spacetimes found in \cite{DafRodsmalla}.
\subsection{The fixed manifold with boundary $\mathcal{R}$}
Let $\mathcal{R}$ denote the manifold with boundary
$$
\mathcal{R}=[0,\infty) \times \mathbb{R} \times \mathbb{S}^2.
$$
We define standard coordinates $y^*$ for $\mathbb{R}^+$, $t^*$ for $\mathbb{R}$ and $(\theta,\phi)$ for $\mathbb{S}^2$. This defines a coordinate system on $\mathcal{R}$, which is global modulo the well-known degeneration of the spherical coordinates. 

We define the event horizon $\mathcal{H}^+$ to be the boundary of $\mathcal{R}$
$$
\mathcal{H}^+=\partial \mathcal{R}=\{ y^*=0 \}.
$$ 
We denote by $T$ and $\Phi$ the vector fields $\partial_{t^*}$ and $\partial_{\phi}$, where the latter is to be understood as the extension of the coordinate vector field to all of $\mathbb{S}^2$ in the usual way.
Let $\phi_{\tau}$ denote the one-parameter family of diffeomorphisms generated by $T$.
The manifold $\mathcal{R}$ will coincide with the domain of outer communication of the black hole spacetimes including the future event horizon $\mathcal{H}^+$. 
\subsection{The parameter space}
The Kerr-AdS family of spacetimes depends on three parameters $(a,M,l)$, which are respectively the angular momentum per unit mass, the mass and a quantity related to the cosmological constant $\Lambda$ via $\Lambda= -\frac{3}{l^2}$. The parameters are assumed to satisfy $M>0$ as well as $|a| < l$
(the latter is needed to exclude naked singularities). Specifically, we define the parameter space $\mathcal{P}$ to be
$$
\mathcal{P}:=\{ (a,M,l) \in \mathbb{R} \times (0,\infty)^2: |a| < l \}.
$$

For any $(a,M,l) \in \mathcal{P}$, denote by $r_+(a,M,l)$ the largest real root of the polynomial 
\begin{equation}\label{def:deltam}
\Delta_-(r)= \left(r^2+ a^2\right) \left(1+ \frac{r^2}{l^2}\right) - 2Mr.
\end{equation}

\subsection{Kerr-AdS-star coordinates}
Let $r(a,M,l,y^*)$ be a smooth real valued function defined on $\mathcal{P} \times [0,\infty)$ such that $r_{|(a,M,l) \times [0,\infty)}$ is a diffeomorphism onto $[r_+(a,M,l),\infty)$ which coincides with $y^*$ for $y^* > 3M$. For each parameter tuple $(a,M,l) \in \mathcal{P}$, the collection $(t^*,r(a,M,l,y^*),\phi,\theta)$ determines a coordinate system on $\mathcal{R}$, global up to the degeneration of the spherical coordinates, which we call \emph{Kerr-AdS-star} coordinates.

\subsection{The tortoise coordinate $r^*$}
Given parameters $(a,M,l) \in \mathcal{P}$, we define the coordinate $r^*$ on $\{ r > r_+ \}$ as
\begin{eqnarray}
\frac{d r^*}{dr}=\frac{r^2+a^2}{\Delta_-(r)} \textrm{ \ \ \ \ , \ \ \ }
r^*(r=+\infty)=\pi/2,
\end{eqnarray}
with $\Delta_-(r)$ as in \eqref{def:deltam}.
Note that $r^*=-\infty$ at the horizon. We will use the notation $R^\star_\infty=\frac{\pi}{2}$. 

We also define $R^\star_{-\infty}$ to be some value of $r^\star$ close to the horizon, i.e. $R^\star_{-\infty}$ is a large negative number, and we will eventually consider the limit $R^\star_{-\infty} \rightarrow -\infty$ in our estimates.


\subsection{Boyer-Lindquist coordinates}
Finally, given parameters $(a,M,l) \in \mathcal{P}$, we define Boyer-Lindquist coordinates $(t,r,\theta, \tilde{\phi})$ from the Kerr-AdS-star coordinates as follows:

\begin{equation}
t = t^* - A\left(r\right) \textrm { \ \ \ and \ \ \ } \tilde{\phi} = \phi - B\left(r\right)
\end{equation}
where 
\begin{equation}
\frac{dA}{dr} = \frac{2Mr}{\Delta_- \left(1+\frac{r^2}{l^2}\right)} \textrm{ \ \ \ and \ \ \ } \frac{dB}{dr} = \frac{a \left(1-\frac{a^2}{l^2}\right)}{\Delta_-}
\end{equation}
and $A$ and $B$ vanishing at infinity.

The Boyer-Lindquist coordinate system only covers $\mathrm{int}\left(\mathcal{R}\right)$, i.e. the set $\{ r > r_+ \}$, which is the exterior of the black hole excluding the event horizon. Boyer-Lindquist coordinates are of particular relevance because it is with respect to these coordinates that the wave equation separates.

\subsection{The Kerr-AdS metric for fixed $(a,M,l)$}
We may now introduce the Kerr-AdS metric as the unique smooth extension to $\mathcal{R}$ of the tensor given on the Boyer-Lindquist chart by:
\begin{align}\label{eq:metricbl}
g_{KAdS} = \frac{\Sigma}{\Delta_-} dr^2 + \frac{\Sigma}{\Delta_\theta} d\theta^2 + \frac{\Delta_\theta \left(r^2+a^2\right)^2 - \Delta_- a^2 \sin^2 \theta}{\Xi^2 \Sigma} \sin^2 \theta d\tilde{\phi}^2 \nonumber \\
- 2 \frac{\Delta_\theta \left(r^2+a^2\right)-\Delta_-}{\Xi \Sigma}a \sin^2 \theta \ d\tilde{\phi} dt - \frac{\Delta_- - \Delta_\theta a^2 \sin^2 \theta}{\Sigma} dt^2
\end{align}
where $\Delta_-$ is defined by \eqref{def:deltam} and 
\begin{align}
\Sigma &= r^2 + a^2 \cos^2 \theta, 
\ \ \ \ \ 
\Delta_\pm = \left(r^2+ a^2\right) \left(1+ \frac{r^2}{l^2}\right) \pm 2Mr
 \\
\Delta_\theta &= 1 - \frac{a^2}{l^2} \cos^2 \theta,
 \ \ \ \ \ 
\Xi = 1 - \frac{a^2}{l^2} \, .
\end{align}
Note the components of the inverse in these coordinates:
\begin{align}
g^{tt} = - \frac{\left(r^2+a^2\right)^2}{\Sigma \Delta_-} + \frac{a^2 \sin^2 \theta}{\Sigma \Delta_\theta} \, ,
\end{align}
\begin{align}
g^{t\tilde{\phi}} = \frac{\Xi \left(r^2+a^2\right)}{\Delta_- \Sigma} a - \frac{\Xi}{\Delta_\theta \Sigma} a 
\textrm{\ , \ \ \ \ } g^{\tilde{\phi} \tilde{\phi}} = \frac{\Xi^2}{\Sigma \Delta_\theta} \frac{1}{\sin^2 \theta} - \frac{\Xi^2 a^2}{\Sigma  \Delta_-} \, ,
\end{align}
\begin{align}
g^{rr} = \frac{\Delta_-}{\Sigma}, \ \ \ \ \ g^{\theta \theta} = \frac{\Delta_\theta}{\Sigma}, \ \ \  \ \ \sqrt{g} = \frac{\Sigma}{\Xi}\sin \theta \, .
\end{align}

That the tensor \eqref{eq:metricbl} indeed extends to a smooth metric on $\mathcal{R}$ is clear from examining its form in Kerr-AdS-star coordinates for which we have:
\begin{eqnarray}
g_{\theta \theta} &=& \frac{\Sigma}{\Delta_\theta} \textrm{ \ \ , \ \ }
g_{t^\star t^\star} = g_{tt}  \textrm{ \ \ , \ \ } \\
g_{\phi \phi} &=& g_{\tilde{\phi} \tilde{\phi}} \textrm{ \ \ , \ \ }
g_{t^\star \phi} = g_{t \tilde{\phi}}  \textrm{ \  , \ \ } \\
g_{rr} &=& \frac{1}{\Sigma \left(1+\frac{r^2}{l^2}\right)^2}\left(\Delta_+ - a^2 \sin^2 \theta \left(k_0 + \frac{\Sigma}{l^2}\right) \right)  \textrm{ \  , \ \ } \\
g_{\phi r} &=& -\frac{a \sin^2 \theta}{\Xi \Sigma k_0} \left[\Xi \Sigma + 2Mr \right]  \textrm{ \  , \ \ } \\
g_{t^\star r} &=& \frac{1}{\Sigma k_0} \left(2Mr - \frac{a^2}{l^2} \sin^2\theta \Sigma \right) \, .
\end{eqnarray}
with $k_0 = 1 + \frac{r^2}{l^2}$. The inverse components are
\begin{eqnarray}
g^{t^\star t^\star} &=& -\frac{\Xi \Delta_+ + a^2 \sin^2 \theta \left(-k_0 \Xi + \frac{2Mr}{l^2}\right)}{k_0^2 \Delta_\theta \Sigma} \, , \\
g^{\theta \theta} &=& \frac{\Delta_\theta}{\Sigma} \textrm{ \ \ , \ \ }
g^{\phi \phi} = \frac{\Xi^2}{\Delta_\theta \Sigma \sin^2 \theta} \textrm{ \ \ , \ \ }
g^{t^\star \phi} = \frac{a \Xi}{k_0 l^2 \Delta_\theta} \, , \\
g^{rr} &=& \frac{\Delta_-}{\Sigma}  \textrm{ \ \ , \ \ \ }
g^{\phi r} = \frac{a \Xi}{\Sigma} \textrm{ \ \ , \ \ \ \ \ \ \ \ \ \ \ \ \ \ }
g^{t^\star r} = \frac{2Mr}{k_0 \Sigma} \, .
\end{eqnarray}
The unit normal of a constant $t^\star$ slice is
\begin{equation}
n_\Sigma = \sqrt{-g^{t^\star t^\star}} \partial_{t^\star} - \frac{g^{t^\star r}}{\sqrt{-g^{t^\star t^\star}}}\partial_r - \frac{g^{t^\star \phi}}{\sqrt{-g^{t^\star t^\star}}} \partial_\phi \textrm{ \ \ \ , \  $g(n_\Sigma,n_\Sigma)=-1$} \, ,
\end{equation}
with the determinant of the metric induced on constant $t^\star$ slices being
\begin{equation} \label{indmeas}
\sqrt{\det h} = \sqrt{\det g_{t^\star=const}} = \frac{\Sigma}{\Xi} \sin \theta \left(\sqrt{-g^{t^\star t^\star}}\right) \, .
\end{equation}
Finally, note that one (formally) recovers the familiar asymptotically flat Kerr metric from (\ref{eq:metricbl}) in the limit as $l \rightarrow \infty$, while for $a \rightarrow 0$ one recovers Schwarzschild-AdS. The metric $g_{KAdS}$ is seen to be asymptotically AdS after a change of coordinates as shown explicitly in appendix B of \cite{Henneaux}.
\subsection{Coordinate Vectorfields and Isometries} \label{covecf}
Note the following relation between the coordinate vectorfields in regular coordinates $\left(t^\star,r,\theta,\phi\right)$ and in Boyer-Lindquist tortoise coordinates $\left(t,r^\star,\theta,\tilde{\phi}\right)$:
\begin{align} 
& \ \  \ \ \  \ \ \  \ \ \ \  \ \ \  \ \ \  \ \  \  \ \ \  \ \ \ \  \ \ \  \ \ \  \ \  \partial_{t^\star} = \partial_t \ \ \ , \ \ \ \partial_{\phi} = \partial_{\tilde{\phi}} \, , \ \  \nonumber \\
& \partial_r =  \frac{1}{1+\frac{r^2}{l^2}} \partial_t - \frac{r-r_+}{\Delta_-} \cdot \frac{a \Xi \left(r+r_+\right)}{r_+^2 + a^2} \partial_{\tilde{\phi}} + \frac{r^2 + a^2}{\Delta_-} \left(-\partial_t + \partial_{r^\star} - \frac{a\Xi}{r_+^2 + a^2} \partial_{\tilde{\phi}}\right) \nonumber \, .
\end{align}
This implies in particular that the vectorfield
\begin{align} \label{Zdef}
Z = \frac{1}{\Delta_-} \left(-\partial_t + \partial_{r^\star} - \frac{a\Xi}{r_+^2 + a^2} \partial_{\tilde{\phi}}\right)
\end{align}
extends continuously to the event horizon and that $\Delta_- Z \psi$ vanishes linearly in $r-r_+$ on the event horizon.

The vectorfields $T=\partial_{t}$ and $\Phi=\partial_{\tilde{\phi}}$ are both Killing fields. An important linear combination is the  so-called Hawking-Reall vectorfield
\begin{align}
K = T + \frac{a \Xi}{r_+^2+a^2} \Phi \, ,
\end{align}
which in addition to being Killing is globally causal on the black hole exterior, provided that $|a|l<r_+^2$ holds. Note that $K$ becomes null on the event horizon, while it is strictly timelike on the exterior.
\subsection{Foliations} \label{Foliations}
We denote by $\Sigma_{t^\star}$ (respectively $\Sigma_t$) the hypersurfaces of constant $t^\star$ (respectively constant $t$), cf.~Figure \ref{pendia} in the introduction. We write $\Sigma_0$ for the slice $\Sigma_{t^\star=0}$, which is where (for convenience) we will prescribe data for (\ref{mwe}). We also make the convention that the Greek letter $\tau$ is always associated with the $t^\star$ foliation.
\section{The norms} \label{sec:norms}
Let $\slashed{g}$ and $\slashed{\nabla}$ denote the induced metric and the covariant derivative on the spheres $\mathbb{S}^2_{t^\star, r}$ of constant $t^\star$ and $r$ in $\mathcal{R}$. 

 We write $|\slashed{\nabla} ... \slashed{\nabla} \psi|^2=\slashed{g}^{AA^\prime} ... \slashed{g}^{BB^\prime} \slashed{\nabla}_A ... \slashed{\nabla}_{B} \psi  \slashed{\nabla}_{A^\prime} ... \slashed{\nabla}_{B^\prime} \psi$ to denote the induced norm on the spheres.
We define the following non-degenerate energy norms for the scalar field $\psi$, cf.~\cite{Holzegelwp}:
\begin{eqnarray}
\| \psi \|^2_{H^{0,s}_{AdS}\left(\Sigma_{t^\star}\right)} &=&  \int_{\Sigma_{t^\star}} r^s  \psi^2 r^2 dr \sin \theta d\theta d{\phi}\nonumber \\
\| \psi \|^2_{H^{1,s}_{AdS}\left(\Sigma_{t^\star}\right)} &=& 
\int_{\Sigma_{t^\star}} r^s \left[ r^2 \left(\partial_r \psi \right)^2 + | \nabb \psi |^2 + \psi^2 \right] r^2 dr \sin \theta d\theta d{\phi}\nonumber \\
\| \psi \|^2_{H^{2,s}_{AdS}\left(\Sigma_{t^\star}\right)} &=& \| \psi \|^2_{H^{1,s}_{AdS}\left(\Sigma_{t^\star}\right)} \nonumber \\
&&\hbox{} + \int_{\Sigma_{t^\star}} r^s \bigg[  r^4 \left(\partial_r \partial_r \psi \right)^2 
+r^2 | \nabb \partial_r \psi|^2 +| \nabb \nabb \psi |^2 \bigg] r^2 dr \sin \theta d\theta d{\phi}\nonumber
\end{eqnarray}
Higher order norms may be defined similarly. We denote by $H^{k,s}_{AdS}(\Sigma_{t^*})$, the space of functions $\psi$ such that $ \nabla^i \psi \in L^2_{loc}(\Sigma_{t^*})$ for $i=0,..., k$ and $\| \psi \|^2_{H^{k,s}_{AdS}\left(\Sigma_{t^\star}\right)} < \infty$.
We denote by $CH_{AdS}^{k,s}$ the set of functions $\psi$ defined on $\mathcal{R}$, such that 
\begin{align}
\psi \in &\bigcap_{q=0,..,k} C^q\left(\mathbb{R}_{t^\star};H^{k-q,s_q}_{AdS}(\Sigma_{t^\star})\right) \nonumber \\
&\textrm{where $s_k = -2$, $s_{k-1} = 0$ and $s_j = s$ for $j=0, ... , k-2$. }\nonumber
\end{align}
When $s=0$, we will feel free to drop the $s$ in the notation, i.e. $H^{k,0}_{AdS}:=H^{k}_{AdS}$ and $CH_{AdS}^{k}:=CH_{AdS}^{k,0}$. 

We also define the densities
\begin{align} \label{endensdef}
  e_1 \left[\psi\right] &=  \frac{1}{r^2} \left(\partial_{t^\star} \psi\right)^2  + r^2  \left(\partial_{r} \psi\right)^2 + | \slashed{\nabla} \psi|^2 + \psi^2 \, ,
\nonumber \\
e_2 \left[\psi\right] &=   e_1 \left[\psi\right] + e_1 \left[\partial_{t^\star} \psi \right] + \sum_{i=1}^3 e_1 \left[\Omega_i \psi\right]  + r^4 \left(\partial_r \partial_r \psi \right)^2   \, ,
\end{align}
which the reader can convert to Boyer-Lindquist coordinates using the formulae of Section \ref{covecf}. Here the $\Omega_i$ denote the standard basis of angular momentum operators on the unit sphere in the coordinates $(\theta,\phi)$. Finally, let us define (and impose finiteness of) the initial energies
\begin{align} \label{endef}
\begin{split}
  E_1 \left[\psi\right] &=  \|\psi\|^2_{H^{1,0}_{AdS}\left(\Sigma_0\right)} + \|\partial_{t^\star} \psi\|^2_{H^{0,-2}_{AdS}\left(\Sigma_0\right)} < \infty  \, ,  \\
E_2 \left[\psi\right] &=  \|\psi\|^2_{H^{2,0}_{AdS}\left(\Sigma_0\right)} + \|\partial_{t^\star} \psi\|^2_{H^{1,0}_{AdS}\left(\Sigma_0\right)}+ \sum_{i} \| \Omega_i \psi \|^2_{H^1_{AdS\left(\Sigma_0\right)}} \\ & \ \ \ \ \ + \|\partial_{t^\star} \partial_{t^\star} \psi\|^2_{H^{0,-2}_{AdS}\left(\Sigma_0\right)} < \infty \,.
\end{split}
\end{align}

\section{Well-posedness and boundedness statement} \label{se:wpbs}
We review in this section previous results concerning Klein-Gordon fields in Kerr-AdS spacetimes which will be used later in this paper. 

From Definition 5.1 of \cite{Holzegelwp} we recall that an \emph{$H^2$-initial data set} $\left(u,v,w\right)$ for \eqref{mwe} on a slice $\Sigma_{t^\star}$ of constant $t^\star$, say $t^\star=0$ for convenience, is determined by specifying freely a pair  $v \in H^{1,0}_{AdS}\left(\Sigma_0\right)$, $w \in H^{0,-2}_{AdS}\left(\Sigma_0\right)$, which then uniquely determines a function $u \in H^{2,s}_{AdS} \left(\Sigma_0\right)$ (for any $s < \sqrt{9-4\alpha}$) by solving an elliptic equation. The following statement is proven in \cite{Holzegelwp}:

\begin{theorem}[Well-posedness in the energy class, \cite{Holzegelwp}] \label{theo:lwp}
Fix a Kerr-AdS spacetime with parameters $\left(M,l,a\right)$ and let $\alpha < 9/4$. Let $(u,v,w)$ be an $H^2_{AdS}$-initial data set on the spacelike slice $\Sigma_{0}$.
Then, there exists a unique solution $\psi$ of \eqref{mwe} such that $\psi|_{t^\star=0}=u$, $\partial_t \psi |_{t^\star=0}=v$ and such that $\psi$ belongs to the class of $CH^2_{AdS}$ functions (unique within $CH^1_{AdS}$). 
\end{theorem}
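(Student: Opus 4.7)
The plan is to follow the approach of \cite{Holzegelwp}, combining a standard energy-method existence argument with elliptic theory to accommodate the Dirichlet-type condition that must be imposed implicitly at null infinity. Since $\mathcal{I}$ is a timelike boundary of the conformal compactification of $\mathcal{R}$, the Cauchy problem for \eqref{mwe} is not closed by prescribing Cauchy data alone; in the Breitenlohner--Freedman range $\alpha<9/4$, the requirement that $\psi$ lie in the weighted spaces $H^{k,s}_{AdS}$ selects the faster-decaying asymptotic branch at $r=\infty$, which effectively enforces a Dirichlet condition at $\mathcal{I}$.

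First I would establish the basic a~priori energy estimate. Using the Hawking--Reall Killing field $K$, which is globally causal on the exterior under the assumption $|a|l<r_+^2$, one obtains a conserved energy controlling $\|\psi\|_{H^{1,0}_{AdS}(\Sigma_{t^\star})}^2$ up to the contribution of the wrong-signed mass term $\tfrac{\alpha}{l^2}\psi^2$ when $\alpha>0$. The crucial point is that the BF condition $\alpha<9/4$ precisely allows this term to be absorbed by a weighted Hardy inequality on $\Sigma_{t^\star}$, schematically of the form
\begin{equation}
\int r^{-2}\psi^{2}\,r^{2}\,dr \,\lesssim\, \int (\partial_{r}\psi)^{2}\,r^{4}\,dr,
\end{equation}
whose sharp constant leaves just enough room to dominate the mass term throughout the stated range of $\alpha$. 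This yields coercive control of the $H^{1,0}_{AdS}$-norm of $\psi$ in terms of its initial value. Commuting \eqref{mwe} with $T$ and repeating the argument produces the corresponding bound on $\partial_{t^\star}\psi$, while an analogous Hardy estimate controls the second time derivative in $H^{0,-2}_{AdS}$.

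With these a~priori bounds in hand, existence of a $CH^2_{AdS}$-solution follows from a standard approximation scheme: one solves \eqref{mwe} on the truncated region $\{r_+\le r\le R\}$ with Dirichlet conditions at $r=R$ (or equivalently via a Galerkin projection onto eigenfunctions of the associated spatial operator), and passes to the limit $R\to\infty$ using the $R$-independent energy bounds. Weak compactness delivers a limiting $\psi\in CH^2_{AdS}$ solving \eqref{mwe} with the prescribed data, while uniqueness within $CH^1_{AdS}$ is an immediate consequence of the basic energy estimate applied to the difference of two solutions. The elliptic equation determining $u$ from the free data $(v,w)$ encodes precisely the consistency condition $\partial_{t^\star}^{2}\psi|_{t^\star=0}=w$ read off from \eqref{mwe} at $t^\star=0$, together with the Dirichlet condition at $\mathcal{I}$; solving this elliptic problem in $H^{2,s}_{AdS}$ for some $s<\sqrt{9-4\alpha}$ is itself a direct application of the same Hardy-based coercivity.

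The main difficulty to be navigated is the treatment of the asymptotic region $r\to\infty$. One must choose function spaces whose norms simultaneously enforce the correct boundary behaviour at $\mathcal{I}$, admit a sharp Hardy inequality throughout the full BF range, and are preserved by the evolution. The regime $\alpha$ close to $9/4$ is particularly delicate because the two asymptotic branches merge and the sharp Hardy constant degenerates; this forces the use of the weighted hierarchy $H^{k,s_k}_{AdS}$ with carefully matched exponents $s_k$ so that the slowly-decaying mode is excluded uniformly at every order of regularity.
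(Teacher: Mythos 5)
This theorem is not proved in the present paper; it is simply recalled and cited from \cite{Holzegelwp}, so there is no ``paper's own proof'' to compare against in detail. That said, your sketch does capture the high-level strategy one expects: a weighted energy estimate combined with a Hardy inequality to absorb the wrong-signed mass term in the BF range, a domain truncation or Galerkin approximation followed by a compactness argument, uniqueness from the basic energy estimate, and the elliptic determination of $u$ from $(v,w)$ encoding the constraint $\partial^2_{t^\star}\psi|_{t^\star=0}=w$ together with the implicit Dirichlet condition at $\mathcal{I}$.

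There is, however, one concrete misstep. You base the a~priori estimate on the Hawking--Reall Killing field $K$ and explicitly invoke the non-superradiance condition $|a|l<r_+^2$. That assumption is \emph{not} part of the well-posedness theorem: Theorem~\ref{theo:lwp} is stated for arbitrary $(a,M,l)\in\mathcal{P}$, i.e.\ only $|a|<l$, and it is Theorem~\ref{theo:bndness} (uniform boundedness) that requires $|a|$ small enough for $K$ to be causal. For local-in-time well-posedness one does not need a globally causal Killing multiplier; any smooth timelike vector field together with a Gronwall argument suffices, so the estimate should not be tied to $K$. Secondly, the real technical point in \cite{Holzegelwp} for reaching the whole range $\alpha<9/4$ is the use of twisted (renormalized) radial derivatives of the form $r^{-\kappa}\partial_r(r^\kappa\,\cdot)$, which build the expected boundary asymptotics into the energy and give coercivity uniformly up to the BF threshold; your reference to ``carefully matched exponents $s_k$'' gestures at the issue but does not supply this mechanism, and a plain Hardy inequality of the form you write will not close on its own as $\alpha\to 9/4$.
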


As usual, higher regularity is propagated and one can obtain the analogous statement for $H_{AdS}^k$-norms with arbitrary $k$. 

 
We collect below also the main vector-field estimates leading to the uniform boundedness statement for solutions to (\ref{mwe}). To state them in a form useful for the remainder of the paper, we recall the notion of the energy momentum tensor associated with $\psi$ satisfying (\ref{mwe}),
\begin{align}
\mathbb{T}_{\mu \nu}\left[\psi\right]  = \partial_\mu \psi \partial_\nu \psi - \frac{1}{2} g_{\mu \nu} \left(g^{\gamma \delta} \partial_\gamma \psi \partial_\delta \psi - \frac{\alpha}{l^2} \phi^2 \right) \, .
\end{align}
For a given spacetime vectorfield $X$, we define the energy current
\begin{align}
J^X_\mu \left[\psi\right] = \mathbb{T}_{\mu \nu}\left[\psi\right] X^\nu \, ,
\end{align}
which satisfies the identity
\begin{align}
\nabla^\mu \left(J^X_\mu \left[\psi\right] \right)=  \nabla^\mu \left(\mathbb{T}_{\mu \nu}\left[\psi\right] X^\nu\right) = \mathbb{K}_{\mu \nu}\left[\psi\right] {}^{(X)} \pi^{\mu \nu} \, ,
\end{align}
with ${}^{(X)}\pi^{\mu \nu} = \frac{1}{2} \left( \nabla^{\mu} X^\nu + \nabla^{\nu} X^{\mu} \right)$ being the deformation tensor of $X$. The latter vanishes for $X$ a Killing field. 

For the following theorem and for the remainder of the paper, we denote by $C=C_{M,l,a,\alpha}$ a constant depending only on the fixed parameters in the problem. The symbols $\lesssim$ and $\gtrsim$ are used to state inequalities which hold modulo such constants.

\begin{theorem} \label{theo:bndness}
Given parameters $M>0$ and $l>0$ there exists an $a_{max}>0$ such that for all Kerr-AdS spacetimes with parameters $M$, $l$, $a$ with $|a|<a_{max}$, the following is true: Any solution $\psi$ of (\ref{mwe}) with $\alpha<\frac{9}{4}$ arising via Theorem \ref{theo:lwp} from initial data defined on $\Sigma_0$ satisfies
\begin{itemize}
\item Conservation law: 
\begin{align} \label{conslaw}
 \int_{\Sigma_{t^\star}} J^K_{\mu}\left[\psi\right] n^\mu + \int_{\mathcal{H}\left(t^\star,0\right)} \left[K\left(\psi\right)\right]^2 =  \int_{\Sigma_{0}} J^K_{\mu}\left[\psi\right] n^\mu \, .
\end{align}
\item Control of non-degenerate norms: There exists a globally timelike vectorfield $N$ (with $-g\left(N,N\right)>c>0$) on the black hole exterior with\footnote{Recall that if $0<\alpha<\frac{9}{4}$, then $J^N_\mu \left[\psi\right] n^\mu$ is actually not positive definite. However, the integral over $t^\star$ slices of that expression is, cf.~\cite{HolzegelAdS}.}
\begin{align}
\|\psi\|^2_{H^{1,0}_{AdS}\left(\Sigma_{t^\star}\right)} + \|\partial_{t^\star} \psi\|^2_{H^{0,-2}_{AdS}\left(\Sigma_{t^\star}\right)} \lesssim \int_{\Sigma_{t^\star}} J^N_{\mu}\left[\psi\right] n^\mu \, .
\end{align}
The analogous statement holds with $N$ replaced by $K$ on the right- and norms degenerating near the horizon on the left-hand side implying that all of the terms appearing in (\ref{conslaw}) are non-negative.

\item Uniform boundedness statement:
\begin{align} \label{unibs}
\int_{\Sigma_{t^\star}} J^N_{\mu}\left[\psi\right] n^\mu \lesssim  \int_{\Sigma_{0}} J^N_{\mu}\left[\psi\right] n^\mu
\end{align}
for all $t^\star$ slices to the future of $\Sigma_0$. Similar estimates hold for higher order energies.
\item
Redshift estimate: There exists $3M>r_1>r_0>r_+$ such that the estimate
\begin{align} \label{reold}
\int_{\mathcal{R}\left(\tau_1,\tau_2\right) \cap \{r\leq r_0\}} J^N_{\mu}\left[\psi\right] n^\mu \lesssim \int_{\mathcal{R}\left(\tau_1,\tau_2\right) \cap \{r_0 \leq r\leq r_1\}} J^N_{\mu}\left[\psi\right] n^\mu \nonumber \\ + C_{M,l,a, \alpha} \int_{\mathcal{R}\left(\tau_1,\tau_2\right) \cap \{r\leq r_1\}} \phi^2 +  \int_{\Sigma_{\tau_1}} J^N_{\mu}\left[\psi\right] n^\mu 
\end{align}
holds for any $\tau_2 \geq \tau_1 \geq 0$.
\end{itemize}
\end{theorem}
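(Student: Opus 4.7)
The plan is to prove the four items in turn, using the three ingredients highlighted in the introduction: the globally causal Killing field $K$ (guaranteed by $|a|l<r_+^2$), weighted Hardy inequalities adapted to the AdS radial weight, and the redshift effect near $\mathcal{H}^+$ in the spirit of Dafermos-Rodnianski.

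For the conservation law \eqref{conslaw}, I would apply the divergence theorem to $\nabla^\mu J^K_\mu[\psi]=\mathbb{T}_{\mu\nu}[\psi]\,{}^{(K)}\pi^{\mu\nu}$, whose right-hand side vanishes because $K$ is Killing. Integration over the slab between $\Sigma_0$ and $\Sigma_{t^\star}$ produces boundary terms on the two spacelike slices, on the horizon segment, and at conformal infinity $\mathcal{I}$; the latter vanishes because the $CH^2_{AdS}$ class builds in the Dirichlet (reflecting) boundary condition, while on $\mathcal{H}^+$ the field $K$ is null and tangent to the generators, so its flux reduces, up to a positive normalisation, to $(K\psi)^2$.

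For the non-degenerate norm control, I take $N=K+\chi Y$, where $\chi$ is a smooth cutoff localised near $\mathcal{H}^+$ and $Y$ is a strictly timelike vector field transverse to $\mathcal{H}^+$, tuned so that $-g(N,N)>c>0$ throughout the exterior. The pointwise density $J^N_\mu[\psi]n^\mu$ is \emph{not} positive when $0<\alpha<\tfrac{9}{4}$, because the mass term contributes with the wrong sign; one absorbs it via a weighted Hardy inequality of the schematic form
\[
\int_{\Sigma_{t^\star}} r^{s-2}\psi^2 \lesssim \int_{\Sigma_{t^\star}} r^s(\partial_r\psi)^2,
\]
valid for $\alpha<\tfrac{9}{4}$ with the appropriate choice of $s$ (ingredient (b) of the introduction). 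Replacing $N$ by $K$ gives the analogous statement with weights degenerating at $\mathcal{H}^+$, from which the non-negativity of each term in \eqref{conslaw} is immediate.

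The redshift estimate is an adaptation of the Dafermos-Rodnianski construction: I would build a $\phi_\tau$-invariant vector field $Y_{\mathrm{rs}}$ transverse to $\mathcal{H}^+$ together with a small $\varepsilon>0$ so that $\mathbb{T}_{\mu\nu}[\psi]\,{}^{(N_1)}\pi^{\mu\nu}\gtrsim J^N_\mu n^\mu$ pointwise in a horizon neighbourhood $\{r\leq r_0\}$, where $N_1=K+\varepsilon Y_{\mathrm{rs}}$. The positive surface gravity and the Hawking-Reall condition (which keeps $K$ strictly timelike just outside $\mathcal{H}^+$) are what make this possible. Applying the divergence theorem in $\mathcal{R}(\tau_1,\tau_2)\cap\{r\leq r_1\}$, the good near-horizon bulk absorbs the errors generated by the cutoff in the annulus $\{r_0\leq r\leq r_1\}$, while zeroth-order contributions from ${}^{(N_1)}\pi$ and from the mass are collected into the $\int\psi^2$ error term. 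The uniform boundedness \eqref{unibs} then follows by chaining this redshift inequality against the degenerate $K$-conservation law. The principal obstacle throughout is the non-pointwise character of the energy positivity when $0<\alpha<\tfrac{9}{4}$: every application of the divergence theorem must be paired with a Hardy correction, and this is precisely where the AdS asymptotics, via the $r^2$-growth of the available weights, are essentially used.
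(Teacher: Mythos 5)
The paper does not prove this theorem; it imports it verbatim from \cite{HolzegelAdS}, with the three key ingredients — the globally causal Hawking--Reall field $K$, Hardy absorption of the wrong-signed mass term, and the horizon redshift — only recalled at the level of the introduction. Your sketch reproduces precisely those ingredients and is consistent with the cited proof, so this is essentially the same approach. The one step that is merely gestured at is the passage from the local redshift estimate \eqref{reold} (which carries spacetime error terms on the right) to the time-uniform bound \eqref{unibs}: this is the genuinely delicate part of \cite{HolzegelAdS}, handled there by controlling the annular and zeroth-order spacetime errors via the $J^K$ spacetime integral and a Dafermos--Rodnianski-style iteration/pigeonhole argument, rather than a simple ``chaining'' against the $K$-conservation law.
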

In \cite{HolzegelAdS}, the estimate (\ref{reold}) was crucial to obtain the boundedness statement, as the spacetime terms on the right hand side were shown to be controlled by the spacetime integral of $J^K_{\mu}\left[\psi\right] n^\mu$. Here, we will use (\ref{reold}) in conjunction with an integrated decay estimate in the interior for a certain range of frequencies. See Proposition \ref{localN} in Section \ref{sec:hozimprove}.

Analogous boundedness statements can be obtained for higher norms. For instance, recalling (\ref{endensdef}) and (\ref{endef}) it is not hard to show the uniform estimate
\begin{align} \label{bndadded}
\int_{\Sigma_{t^\star}} e_2 \left[\psi\right] r^2 \sin \theta dr d\theta d\phi \lesssim E_2\left[\psi\right] \, .
\end{align}

\section{Preliminaries} \label{se:pre}
In the remainder of the paper, we study the solution $\psi$ arising from initial data prescribed on $\Sigma_{0}$ as described in Theorem \ref{theo:lwp}. Without loss of generality we will assume $\psi$ to be smooth and obtain statements in $H^k_{AdS}$ by density. We also recall the notation $\lesssim$ and $\gtrsim$ from Section \ref{se:wpbs}.
\subsection{Time cut-off} \label{timecutoff}
Let $\chi\left(z\right)$ be a smooth real cut-off function which is equal to zero for $z \leq 0$ and equal to $1$ for $z \geq 1$. For any $\tau \in \mathbb{R}$, we define $\chi_{\tau} \left(t^\star\right) = \chi \left(t^\star\right) \chi \left(\tau-t^\star\right)$ and finally
\begin{align}
\psi^{\tau}\left(t^\star, \cdot \right) = \chi_{\tau} \left(t^\star\right) \ \psi \left(t^\star, \cdot \right)
\end{align}
which is supported in $0 \leq t^\star \leq \tau$ only. 
If $\psi$ is a solution of the wave equation \eqref{mwe}, then $\psi^{\tau}$ satisfies the inhomogeneous equation
\begin{align} \label{bpf}
\Box_{g_{KAdS}} \psi^{\tau} + \frac{\alpha}{l^2} \psi^\tau= F := 2 \nabla^\alpha \chi^\star \nabla_{\alpha} \psi + \left(\Box_{g_{KAdS}}  \chi^{\tau}\right) \psi \, ,
\end{align}
with $F$ supported in the two strips
\begin{align}
\{ 0 \leq t^\star \leq 1 \} \cup \{\tau - 1 \leq t^\star \leq \tau \} \, .
\end{align}
Note that 
\begin{align}
r^2 |F|^2 \lesssim \frac{1}{r^2} \left(\partial_{t^\star} \psi\right)^2 + r^2 \left(\partial_r \psi\right)^2 + | \slashed{\nabla} \psi |^2 + \psi^2  \, ,
\end{align}
and hence in view of the support of $F$ and Theorem \ref{theo:bndness}, for any $\tau$,
\begin{align} \label{errorloc}
\int_{\mathcal{R}\left(0,\tau\right)} \left(r^2 |F|^2 \right) r^2  \sin \theta dt^\star dr d\theta d\phi \lesssim \|\psi\|^2_{H^1_{AdS}\left(\Sigma_0\right)} \, .
\end{align}
Note also that
\begin{align}
 \int_{\Sigma_{t^\star}} J^N_{\mu}\left[\psi^\tau \right] n^\mu \lesssim  \int_{\Sigma_{t^\star}} J^N_{\mu}\left[\psi\right] n^\mu \lesssim \|\psi\|^2_{H^1_{AdS}\left(\Sigma_0\right)}
\end{align}
and
\begin{align} \label{hozcon}
\int_{\mathcal{H}\left(t^\star,0\right)} \left[K\left(\psi^\tau\right)\right]^2 \lesssim \|\psi\|^2_{H^1_{AdS}\left(\Sigma_0\right)} \, .
\end{align}
To see (\ref{hozcon}), convert the zeroth order error-term in the strips arising from the cut-off into a spacetime term using the radial derivative of $\psi$. Finally, use the boundedness statement.

In view of the time cut-off, for any $r>r_+$, $\psi^{\tau}$ is now Schwartz in the $t$-variable (with values in $H^{2}_{AdS}$) and we can take the (inverse) Fourier transform,
\begin{align} \label{foutime}
\psi^{\tau} \left(t,r,\theta,\tilde{\phi}\right) = \frac{1}{\sqrt{2\pi}} \int_{-\infty}^\infty e^{-i\omega t} \widehat{\psi^{\tau}} \left(\omega,r,\theta,\tilde{\phi}\right) d\omega \, .
\end{align}
We would like to decompose $\widehat{\psi^{\tau}}$ further using the separability properties of the wave equation. In this process, we will encounter certain self-adjoint operators on $\mathbb{S}^2$, whose eigenfunctions are related to the oblate spheroidal harmonics. We will study the spectral properties of these operators first. 
\subsection{The (modified) oblate spheroidal harmonics} \label{se:sphhar}
Define the $L^2\left(\sin \theta d\theta d\tilde{\phi}\right)$-self-adjoint operator $P$ acting on $H^1\left(\mathbb{S}^2\right)$-complex valued functions (see section 7 of \cite{DafRodsmalla} for a more detailed discussion) as
\begin{eqnarray}\label{def:pop}
-P\left(\xi\right) f &=& \frac{1}{\sin \theta} \partial_\theta \left(\Delta_\theta \sin \theta \partial_\theta f \right) + \frac{\Xi^2}{\Delta_\theta} \frac{1}{\sin^2 \theta} \partial_{\tilde{\phi}}^2 f \nonumber \\ 
&&\hbox{}+ \Xi \frac{\xi^2}{\Delta_\theta} \cos^2 \theta f - 2 i  \xi \frac{\Xi}{\Delta_\theta} \frac{a^2}{l^2} \cos^2 \theta \  \partial_{\tilde{\phi}}f \, \, ,
\end{eqnarray}
with $\xi = a \omega$. We also define the operator $P_{\alpha}$, which is equal to
\begin{equation} \label{def:palpha}
P_{\alpha} \left(\xi\right) = \left\{
\begin{array}{rl} 
P \left(\xi\right) + \frac{\alpha}{l^2}a^2 \sin^2 \theta & \text{if } \alpha > 0 \, , \\
P\left(\xi\right) + \frac{|\alpha|}{l^2}a^2 \cos^2 \theta & \text{if } \alpha \leq 0 \, .
\end{array} \right. 
\end{equation}
For $l \rightarrow \infty$ the operator reduces to the oblate spheroidal operator on $\mathbb{S}^2$ considered in \cite{DafRodsmalla}. If also $a=0$, we retrieve the Laplacian on the round sphere. Standard elliptic theory justifies the definitions (cf.~appendix \ref{ap:suf})
\begin{align} \label{def:spherop}
P\left(\xi\right) \textrm{ \ has eigenvalues } \tilde{\lambda}_{m \ell}\left(\xi\right) \textrm{ with eigenfunctions $\tilde{S}_{m\ell} \left(a\xi, \cos \theta \right) e^{im \tilde{\phi}}$}, \nonumber \\
P_\alpha \left(\xi\right) \textrm{ \ has eigenvalues }\lambda_{m \ell}\left(\xi\right) \textrm{ with eigenfunctions $S_{m\ell} \left(a\xi, \cos \theta \right) e^{im \tilde{\phi}}$}.
\end{align}
To analyse the spectral properties of these operators, we introduce\footnote{Recall that in the asymptotically flat case, the range $0\leq m \omega \leq m \omega_+$ corresponds to the superradiant frequencies. In our case, there is no ``superradiance'' in view of the properties of the Hawking-Real vector field. Nevertheless it will be useful to rewrite the potential in terms of $\omega-\omega_+$, see below.}
\begin{align} \label{def:orop}
\omega_+ = \frac{m a \Xi}{r_+^2 +a^2} \ \ \ \ \ , \ \ \ \ \omega_r\left(r\right) = \omega_r = \frac{m a \Xi}{r^2 +a^2}  \, .
\end{align}
\begin{lemma}[Spectral properties of $P_{(\alpha)}+\xi^2$] \label{spheroidal}
The $\lambda_{m \ell}$ satisfy the estimates
\begin{align}
\lambda_{m\ell} + \xi^2 &\geq \Xi^2 \cdot |m| \left(|m| +1 \right) \, ,  \\
\lambda_{m\ell} + \xi^2 &\geq \Xi^2 \cdot |m| \left(|m| +1 \right) + \frac{m^2 \Xi^2 a^4}{\left(r_+^2 + a^2\right)^2} - C_{a,l} |m| |\omega -\omega_+| \, .
\end{align}
The same estimates hold replacing $\lambda_{m\ell}$ by $\tilde{\lambda}_{m \ell}$.
\end{lemma}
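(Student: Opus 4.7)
The plan is to obtain both estimates from a Rayleigh quotient analysis on each fixed $m$-th Fourier mode, comparing the quadratic form of $P(\xi)$ with that of the round-sphere Laplacian. First I would reduce to bounding $\tilde\lambda_{m\ell}$: since the difference $P_\alpha - P$ is multiplication by the non-negative function $\frac{\alpha}{l^2}a^2\sin^2\theta$ (when $\alpha>0$) or $\frac{|\alpha|}{l^2}a^2\cos^2\theta$ (when $\alpha\leq 0$), monotonicity of eigenvalues gives $\lambda_{m\ell}\geq\tilde\lambda_{m\ell}$, so it suffices to prove both estimates for $\tilde\lambda_{m\ell}$. Restricting to $f = S(\theta)e^{im\tilde\phi}$, the variational characterization expresses $\tilde\lambda_{m\ell}+\xi^2$ as the infimum over $S$ of $\|S\|^{-2}\int_0^\pi[\Delta_\theta (S')^2 + V(\theta;\xi)S^2]\sin\theta\,d\theta$, where
\begin{align*}
V(\theta;\xi) = \frac{\Xi^2 m^2}{\Delta_\theta\sin^2\theta} + \frac{\Xi\xi^2 \cos^2\theta}{\Delta_\theta} + \frac{2m\xi\Xi a^2\cos^2\theta}{l^2\Delta_\theta} + \xi^2.
\end{align*}

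For the first bound I would compare this with the $\Xi^2$-rescaled round-sphere quadratic form. The kinetic term is controlled using $\Delta_\theta\geq\Xi\geq\Xi^2$, so it suffices to establish the pointwise inequality $V\geq\Xi^2 m^2/\sin^2\theta$. Completing the square in $m$ and $\xi$ yields the algebraic identity
\begin{align*}
V - \frac{\Xi^2 m^2}{\sin^2\theta} = \frac{1}{\Delta_\theta}\Bigl[\Bigl(\tfrac{\Xi am\cos\theta}{l\sin\theta} + \tfrac{\xi a\sin\theta\cos\theta}{l}\Bigr)^2 + \xi^2\bigl(1 + \cos^2\theta\bigl(1 - \tfrac{a^2}{l^2}(2+\sin^2\theta)\bigr)\bigr)\Bigr].
\end{align*}
The first summand is a perfect square; for the residual, writing $s=a^2/l^2\in[0,1)$ and $y=\cos^2\theta\in[0,1]$, the quantity $1+y(1-s(3-y))$ is a quadratic in $y$ whose minimum value equals $-(9s-1)(s-1)/(4s)$ when $s\geq 1/3$, which is strictly positive precisely under the standing hypothesis $|a|<l$. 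Combined with the standard round-sphere $m$-mode Rayleigh bound, this yields $\tilde\lambda_{m\ell}+\xi^2\geq\Xi^2|m|(|m|+1)$.

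For the refined estimate, I would first prove it at the distinguished frequency $\omega=\omega_+$, where $\xi_+ := a\omega_+ = ma^2\Xi/(r_+^2+a^2)$ satisfies $m\xi_+\geq 0$. At this frequency the potential of $P(\xi_+)$ alone (i.e.\ omitting the $\xi_+^2$ summand) obeys
\begin{align*}
V(\theta;\xi_+) - \xi_+^2 - \frac{\Xi^2 m^2}{\sin^2\theta} = \frac{\cos^2\theta}{\Delta_\theta}\Bigl[\frac{\Xi^2 m^2 a^2}{l^2\sin^2\theta} + \Xi\xi_+^2 + \frac{2m\xi_+\Xi a^2}{l^2}\Bigr] \geq 0,
\end{align*}
manifestly, as a sum of three non-negative terms. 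The first-bound argument applied \emph{without} adding $\xi_+^2$ then yields the sharper conclusion $\tilde\lambda_{m\ell}(\xi_+)\geq\Xi^2|m|(|m|+1)$. To propagate this to general $\omega$, I would use a min-max perturbation: $P(\xi)-P(\xi_+)$ is multiplication by $a(\omega-\omega_+)\frac{\Xi\cos^2\theta}{\Delta_\theta}[\xi+\xi_+ + 2ma^2/l^2]$, and since $|\xi_+|\lesssim|m|$ the sup-norm of this difference is bounded by $C_{a,l}(|m||\omega-\omega_+|+|\omega-\omega_+|^2)$. Testing $P(\xi_+)$ on the minimizer for $P(\xi)$ gives $\tilde\lambda_{m\ell}(\xi)\geq\tilde\lambda_{m\ell}(\xi_+)-C_{a,l}(|m||\omega-\omega_+|+|\omega-\omega_+|^2)$. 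Adding $\xi^2\geq \xi_+^2 - C_{a,l}|m||\omega-\omega_+|$ (from $\xi=\xi_++a(\omega-\omega_+)$) and identifying $\xi_+^2 = m^2\Xi^2 a^4/(r_+^2+a^2)^2$ gives the claimed bound; the residual $|\omega-\omega_+|^2$ error is absorbed into $|m||\omega-\omega_+|$ when $|\omega-\omega_+|\leq|m|$, and otherwise the refined bound follows trivially from the first one since its right-hand side becomes dominated by $\Xi^2|m|(|m|+1)$.

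The step I expect to be the main obstacle is the pointwise non-negativity of the residual potential $1+\cos^2\theta(1-(a/l)^2(2+\sin^2\theta))$, whose verification requires an explicit one-variable optimization: this is precisely where the hypothesis $|a|<l$ enters the spectral estimate (the inequality degenerates at $|a|=l$) and where the $\Xi^2$ prefactor on the right-hand side originates. The statement for $\tilde\lambda_{m\ell}$ in place of $\lambda_{m\ell}$ follows from the same argument by the reduction above.
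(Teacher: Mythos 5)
There is a genuine sign error in your potential $V(\theta;\xi)$ that propagates through both estimates. Starting from the definition of $-P(\xi)$ in the paper, when you negate to get $P(\xi)$ and restrict to $f = S(\theta)e^{im\tilde\phi}$, the potential of $P(\xi)+\xi^2$ should read
\begin{align*}
V_{\mathrm{correct}} = \frac{\Xi^2 m^2}{\Delta_\theta\sin^2\theta} - \frac{\Xi\xi^2\cos^2\theta}{\Delta_\theta} - \frac{2m\xi\Xi a^2\cos^2\theta}{l^2\Delta_\theta} + \xi^2,
\end{align*}
whereas you wrote $+$ on the second and third terms. Your subsequent ``algebraic identity'' and the one-variable optimization over $y=\cos^2\theta$ are all internally consistent with the incorrect $V$, so what you actually prove is $V_{\mathrm{student}}\geq \Xi^2 m^2/\sin^2\theta$, which is a different statement from the one you need; the two quantities differ by $\frac{2\Xi\cos^2\theta}{\Delta_\theta}(\xi^2 + 2m\xi a^2/l^2)$, which can have either sign. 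Ironically, with the correct signs the residual after completing the square is simply $\xi^2\sin^2\theta\,\Delta_\theta\geq 0$ — no optimization in $y$ is needed — which is exactly the paper's manifestly non-negative $P_c$ term; your ``main obstacle'' (the quadratic minimization over $y$) is an artifact of the sign error, not a genuine feature of the problem.

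The sign error also breaks your second step: at $\omega=\omega_+$ the quantity $V(\theta;\xi_+)-\xi_+^2-\Xi^2m^2/\sin^2\theta$ becomes, with the correct $V$,
\begin{align*}
\frac{\cos^2\theta}{\Delta_\theta}\left[\frac{\Xi^2 m^2 a^2}{l^2\sin^2\theta} - \Xi\xi_+^2 - \frac{2m\xi_+\Xi a^2}{l^2}\right],
\end{align*}
which is \emph{not} manifestly a sum of non-negative terms — the last two terms are non-positive since $m\xi_+\geq 0$. Its non-negativity is equivalent, after substituting $\xi_+=ma^2\Xi/(r_+^2+a^2)$ and taking $\sin\theta\to 1$, to $r_+^4\geq a^2l^2$, i.e.\ precisely the Hawking--Reall condition. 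So the second bound \emph{does} depend on $r_+^2>|a|l$ (as the paper's hint explicitly indicates), whereas your argument never invokes it and claims the positivity is ``manifest.'' Structurally your strategy (separate into the $\Delta_\theta$-modified Laplacian $\tilde P$, use $\Delta_\theta\geq\Xi^2$ to reach the round-sphere $m$-mode Rayleigh bound, and show the leftover potential is non-negative, then perturb off $\omega=\omega_+$) is the same as the paper's; the error is in the execution, and it changes which hypotheses are needed and where. Re-derive $V$ carefully from the operator definition, and you will find the residual collapses to the paper's $P_c$ and the role of $r_+^2>|a|l$ becomes visible.
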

\begin{proof}
We will establish the estimates for $\tilde{\lambda}_{m\ell}$ (i.e.~the operator $P\left(\xi\right)$). The same proof goes through simply neglecting the additional non-negative term present in $P_{\alpha}\left(\xi\right)$.
Recalling (\ref{def:spherop}) and writing shorthand $f= \tilde{S}_{m \ell} \left(\xi, \cos \theta\right)e^{im\tilde{\phi}}$, we find
\begin{align}
P\left(\xi\right) f + \xi^2 f = \nonumber \\
 -\frac{1}{\sin \theta} \partial_\theta \left(\Delta_\theta \sin \theta \partial_\theta f \right) - \frac{\Xi^2}{\Delta_\theta} \frac{1}{\sin^2 \theta} \partial_{\tilde{\phi}}^2 f + \frac{\xi^2}{\Delta_\theta} \sin^2 \theta f - 2 m \xi \frac{\Xi}{\Delta_\theta} \frac{a^2}{l^2} \cos^2 \theta f \nonumber \\
= -\frac{1}{\sin \theta} \partial_\theta \left(\Delta_\theta \sin \theta \partial_\theta f \right) - \Xi^2\frac{1}{\sin^2 \theta}  \partial_{\tilde{\phi}}^2 f  +P_c \left(f\right)
= \tilde{P}\left(\xi\right) f + P_c \left(f\right) \, , \nonumber
\end{align}
where
\begin{align}
P_c =  \frac{1}{\Delta_\theta} \left( \xi \sin \theta - m \Xi \frac{a^2}{l^2} \frac{\cos^2 \theta}{\sin \theta}\right)^2 +  \frac{\Xi^2 m^2 a^2}{l^2} \frac{\cos^2 \theta}{\sin^2 \theta} \geq 0 \, .
\end{align}
For $P_c$ one can also derive the estimate (use $\sin^{-2} \theta \geq 1 + \cos^2 \theta$, $\Delta_\theta^{-1} \geq 1 + \frac{a^2}{l^2} \cos^2 \theta$ and the $r_+^2 > |a| l$ condition)
\begin{align}
P_c \geq  \frac{m^2 \Xi^2 a^4}{\left(r_+^2 + a^2\right)^2} - C_{a,l} |m| |\omega -\omega_+| \, .
\end{align}
For $\tilde{P}$ we integrate by parts to see that
\begin{align}
\int_{\mathbb{S}^2} \tilde{P} \left(\xi\right) f \cdot \bar{f} = \int_{\mathbb{S}^2} \left[ \Delta_\theta |\partial_\theta f|^2 +  \Xi^2   \frac{1}{\sin \theta} |\partial_{\tilde{\phi}} f|^2 \right] d\theta d\tilde{\phi} \, , \nonumber \\ 
\geq \Xi^2 \int_{\mathbb{S}^2} \left[ |\partial_\theta f|^2 +   \frac{1}{\sin^2 \theta} |\partial_{\tilde{\phi}} f|^2 \right] \sin \theta d\theta d\tilde{\phi} \nonumber
\end{align}
holds for any $f \in H^1\left(\mathbb{S}^2\right)$.
\end{proof}
From Lemma \ref{spheroidal} an application of the Cauchy-Schwarz inequality yields

\begin{corollary} \label{corhelp}
There exists a constant $c_{M,l,a}$ such that
for any $0<\delta<c_{M,l,a}$ the following holds: In the regime $\left(\omega-\omega_+\right)^2 \leq \delta^2 \mathcal{L}:= \delta^2 \left(\lambda_{m\ell} + a^2 \omega^2\right)$ we have
\begin{align}
\lambda_{m\ell} + \xi^2 &\geq\frac{1}{1+ \frac{\delta}{c_{M,l,a}}} \left[ \Xi^2 \cdot |m| \left(|m| +1 \right) + \frac{m^2 \Xi^2 a^4}{\left(r_+^2 + a^2\right)^2} \right] \, .
\end{align}
\end{corollary}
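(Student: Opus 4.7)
The plan is to derive the corollary from the second inequality in Lemma \ref{spheroidal} by absorbing the cross-term $C_{a,l}|m||\omega-\omega_+|$ using a weighted Cauchy--Schwarz/AM-GM inequality, where the weight is tuned to $\delta$ so that the hypothesis $(\omega-\omega_+)^2 \leq \delta^2 \mathcal{L}$ can be used to pull the bad contribution back onto $\lambda_{m\ell}+\xi^2$ itself. First I would note that, by definition, $\xi = a\omega$, so $\mathcal{L} = \lambda_{m\ell} + a^2\omega^2 = \lambda_{m\ell}+\xi^2$, and by the first estimate in Lemma \ref{spheroidal} this quantity is non-negative, which legitimizes the absorption.

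Next, I would apply the weighted inequality $2|m||\omega-\omega_+| \leq \eta^{-1} m^2 + \eta(\omega-\omega_+)^2$ and plug in the hypothesis to obtain
\begin{align*}
C_{a,l}|m||\omega-\omega_+| \leq \tfrac{C_{a,l}}{2\eta}\, m^2 + \tfrac{C_{a,l}\eta\, \delta^2}{2}\bigl(\lambda_{m\ell}+\xi^2\bigr).
\end{align*}
Choosing $\eta = 1/\delta$ balances the two contributions and yields
\begin{align*}
C_{a,l}|m||\omega-\omega_+| \leq \tfrac{C_{a,l}\delta}{2}\, m^2 + \tfrac{C_{a,l}\delta}{2}\bigl(\lambda_{m\ell}+\xi^2\bigr),
\end{align*}
which substituted into the second estimate of Lemma \ref{spheroidal} gives
\begin{align*}
\Bigl(1+\tfrac{C_{a,l}\delta}{2}\Bigr)(\lambda_{m\ell}+\xi^2) \geq \Xi^2|m|(|m|+1) + \frac{m^2\Xi^2 a^4}{(r_+^2+a^2)^2} - \tfrac{C_{a,l}\delta}{2} m^2.
\end{align*}

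I would then absorb the remaining $-\tfrac{C_{a,l}\delta}{2} m^2$ into the dominant term $\Xi^2|m|(|m|+1)$, using $m^2 \leq |m|(|m|+1)$; this costs a multiplicative factor $1 - \tfrac{C_{a,l}\delta}{2\Xi^2}$ on the whole right-hand bracket (since this factor is $\leq 1$ it can also be applied to the second, nonnegative summand). The resulting inequality reads
\begin{align*}
\lambda_{m\ell}+\xi^2 \geq \frac{1 - \tfrac{C_{a,l}\delta}{2\Xi^2}}{1 + \tfrac{C_{a,l}\delta}{2}} \left[\Xi^2|m|(|m|+1) + \frac{m^2\Xi^2 a^4}{(r_+^2+a^2)^2}\right].
\end{align*}

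Finally, I would show that for $\delta$ small enough (depending only on $M,l,a$ through $C_{a,l}$ and $\Xi$) the prefactor dominates $(1+\delta/c_{M,l,a})^{-1}$ for an appropriate $c_{M,l,a}$: expanding, one needs $c_{M,l,a}^{-1} \geq \tfrac{C_{a,l}}{2\Xi^2}+\tfrac{C_{a,l}}{2}+O(\delta)$, which can be arranged by taking $c_{M,l,a}$ sufficiently small and then restricting $\delta < c_{M,l,a}$. I expect no serious obstacle here; the entire argument is a bookkeeping of constants, and the only point to be careful about is that $\mathcal{L}$ and $\lambda_{m\ell}+\xi^2$ coincide so that the hypothesis feeds back into the very quantity being estimated.
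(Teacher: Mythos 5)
Your proof is correct and follows essentially the same route the paper intends: the paper states only that the corollary follows from Lemma \ref{spheroidal} ``by an application of the Cauchy--Schwarz inequality,'' and your weighted AM--GM with weight $\eta=1/\delta$, followed by absorption of the residual $m^2$-term into $\Xi^2|m|(|m|+1)$ and a comparison of $\frac{1-A\delta}{1+B\delta}$ with $(1+\delta/c_{M,l,a})^{-1}$, is exactly the bookkeeping that one-liner summarizes.
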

Note that the pre-factor can be chosen as close to $1$ as one desires by choosing $\delta$ appropriately small.



\subsection{The separation of variables} \label{se:sepv}
We are ready to obtain the fully separated wave equation. We decompose the $\widehat{\psi^{\tau}}$ of equation (\ref{foutime}) as
\begin{align}
\widehat{\psi^{\tau}}\left(\omega, r , \theta, \tilde{\phi}\right)  &= \sum_{m\ell} \left(\widehat{\psi^{\tau}}\right)^{(a\omega)}_{m\ell} \left(r\right) S_{m\ell} \left(a \omega, \cos \theta\right) e^{im\tilde{\phi}} 
\end{align}
where ($d\tilde{\sigma} = \sin \theta d\theta d\tilde{\phi}$)
\begin{align}
 \left(\widehat{\psi^{\tau}}\right)^{(a\omega)}_{m\ell} \left(r\right) = \frac{1}{\sqrt{2\pi}} \int_{-\infty}^\infty dt \int_{\mathbb{S}^2\left(t,r\right)} d\tilde{\sigma} e^{i\omega t} \ S_{m\ell} \left(a \omega, \cos \theta\right) e^{-im\tilde{\phi}} \  \psi^{\tau}\left(t,r,\theta,\tilde{\phi}\right) \nonumber .
\end{align}
The analogous Fourier-decomposition for the source term $F$ in (\ref{bpf}) yields
\begin{align}
F^{(a\omega)}_{m\ell} \left(r\right)= \frac{1}{\sqrt{2\pi}} \int_{-\infty}^\infty dt \int_{\mathbb{S}^2\left(t,r\right)} d\tilde{\sigma} \ e^{i\omega t} \ S_{m\ell} \left(a \omega, \cos \theta\right) e^{-im\tilde{\phi}} \  F \left(t,r,\theta,\tilde{\phi}\right) \nonumber \, .
\end{align}
After the renormalization
\begin{align}
u^{(a\omega)}_{m\ell} \left(r\right)  = \left(r^2+a^2\right)^\frac{1}{2} \Psi^{(a\omega)}_{m\ell} \left(r\right) \,, 
\end{align}
we finally obtain the equation
\begin{align}
\left[ u^{(a\omega)}_{m\ell} \left(r\right)\right]^{\prime \prime} + \left(\omega^2 - V^{(a\omega)}_{m\ell} \left(r\right) \right) u = H^{(a\omega)}_{m\ell} \left(r\right) \label{eq:uode} \, ,
\end{align}
with the right hand side being equal to
\begin{align}
H^{(a\omega)}_{m\ell} \left(r\right)  = \frac{\Delta_-}{\sqrt{r^2 + a^2}} F^{(a\omega)}_{m\ell} \left(r\right) \,  .
\end{align}
The potential $V^{(a\omega)}_{m\ell} \left(r\right)$ is defined as
\begin{align}
V^{(a\omega)}_{m\ell} \left(r\right) = V^{(a\omega)}_{+, m \ell}\left(r\right) + V^{(a\omega)}_{0, m \ell} \left(r\right) +V_\alpha \left(r\right) \, ,
\end{align}
where
\begin{eqnarray} \label{eq:v+}
V^{(a\omega)}_{+, m \ell}\left(r\right) &=& -\Delta_-^2 \frac{3 r^2 }{\left(r^2 + a^2\right)^4} + \Delta_-\frac{5\frac{r^4}{l^2} + 3r^2 \left(1+\frac{a^2}{l^2}\right) - 4Mr + a^2}{\left(r^2 + a^2\right)^3} \nonumber \\
&=& \left(r^2+a^2\right)^{-\frac{1}{2}} \left(\sqrt{r^2+a^2}\right)^{\prime \prime},\\
V^{(a\omega)}_{0, m \ell} \left(r\right)&=& \frac{\Delta_- \left(\lambda_{m\ell} + \omega^2 a^2\right)-\Xi^2 a^2 m^2 - 2m\omega a \Xi \left(\Delta_- - \left(r^2+a^2\right)\right)}{\left(r^2 + a^2\right)^2}\nonumber \\
V_\alpha \left(r\right) &=&-\frac{\alpha}{l^2} \frac{\Delta_-}{\left(r^2+a^2\right)^2} \left(r^2 + \Theta\left(\alpha\right) a^2 \right) \label{eq:va} \, ,
\end{eqnarray}
with $\Theta \left(\alpha\right)=1$ for $\alpha > 0$ and $\Theta \left(\alpha\right)=0$ if $\alpha\leq 0$ (recall that the $\lambda_{m\ell}$ also depend on $\alpha$ from (\ref{def:palpha})). Note that $V_+$ grows like $\frac{2r^2}{l^4}$ near infinity while the $V_0$-part remains bounded.
\subsection{Properties of the Potential}
We will now study the potential in some detail. We will establish
good positivity properties of the potential near the event horizon for all cases of Theorem \ref{theorem1}. For $\alpha<1$ we in fact obtain \emph{global} (in $r$) positivity estimates. 
For the remaining $\alpha$-range, we will have to allow for a negative contribution in the global estimate.

We will suppress the sub- and superscripts for the potential and its components in this subsection, i.e.~we will write $V$ for $V^{(a\omega)}_{m\ell}$, $V_+$ for $V^{(a\omega)}_{+, m \ell}$, etc.   

\begin{lemma}[Frequency independent properties]  \label{lem:Vplus} 
If any of the three conditions of Theorem \ref{theorem1} holds, there exists $\rho_{+} > r_+$, depending only on $a$, $l$, $M$ and $\alpha$ such that, for $r \in [r_+,\rho_{+}]$,
\begin{align} \label{ineq:Vplusalpha}
V_+ + V_\alpha > C \frac{\Delta_-}{\left(r^2+a^2\right)^2}  \  r^2, 
\end{align}
where $C$ is a constant depending only on $M$, $l$, $a$, $\alpha$. Moreover, again for any of the three conditions of Theorem \ref{theorem1}, we have the global estimate
\begin{align} \label{ida2}
V_+ + V_\alpha + \frac{1}{4l^2} \frac{\Delta_-}{r^2 + a^2} \geq C \frac{\Delta_-}{\left(r^2+a^2\right)^2}  \  r^2 \, .
\end{align}
For $\alpha<1$ the last term on the left hand side of (\ref{ida2}) can be dropped.
\end{lemma}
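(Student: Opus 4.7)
The plan is to show both inequalities by extracting the common factor $\Delta_-/(r^2+a^2)^2$ from $V_+ + V_\alpha$ and studying the resulting bracket
\[
B(r) := -\Delta_- \frac{3r^2}{(r^2+a^2)^2} + \frac{5r^4/l^2 + 3r^2(1+a^2/l^2) - 4Mr + a^2}{r^2+a^2} - \frac{\alpha}{l^2}\bigl(r^2 + \Theta(\alpha) a^2\bigr),
\]
so that $V_+ + V_\alpha = \frac{\Delta_-}{(r^2+a^2)^2} B(r)$. The goal for \eqref{ineq:Vplusalpha} is to show $B(r) \geq C r^2$ on $[r_+, \rho_+]$, and for \eqref{ida2} to show $B(r) + \tfrac{1}{4}(r^2+a^2)/l^2 \geq C r^2$ globally, dropping the correction when $\alpha < 1$.

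First I would establish \eqref{ineq:Vplusalpha} at $r = r_+$, where the $\Delta_-$ factor in the first summand of $B$ vanishes. Substituting the horizon relation $2Mr_+ = (r_+^2+a^2)(1+r_+^2/l^2)$ collapses the middle quotient to $r_+^{-2}(r_+^2+a^2)^{-1}\bigl[3r_+^4/l^2 \cdot r_+^2 + r_+^4 + r_+^4 a^2/l^2 - a^2 r_+^2\bigr]$ after simplification, and one gets an explicit polynomial in $r_+, a, l, \alpha$. For $\alpha \leq 1$ this is manifestly positive of order $r_+^2$ using $r_+^2 > |a|l$ together with $|a| < l$; for $1 < \alpha \leq 2$ with $|a| < l/2$ the smallness of $a/l$ keeps the Kerr correction small enough for positivity to survive; for $|a|$ sufficiently small, one can treat the expression as a perturbation of its Schwarzschild--AdS value (which is strictly positive of size $r_+^2$ for $\alpha < 9/4$) and conclude. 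Once $B(r_+) \geq 2Cr_+^2$ is known, continuity of $B$ provides $\rho_+ > r_+$ such that $B(r) \geq C r^2$ on $[r_+, \rho_+]$, yielding \eqref{ineq:Vplusalpha}.

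For \eqref{ida2}, add $(r^2+a^2)/(4l^2)$ to $B(r)$ and compute its asymptotics: as $r\to\infty$, $\Delta_- = r^4/l^2 + O(r^2)$, and the three leading contributions in $r^2/l^2$ balance as $-3 + 5 - \alpha + \tfrac14 = \tfrac94 - \alpha > 0$, which is precisely the Breitenlohner--Freedman margin; when $\alpha < 1$ the $\tfrac14$ is unnecessary since $5-3-\alpha > 0$ suffices. This controls the large-$r$ regime. In a compact intermediate interval $[\rho_+, R]$ one either (i) in case $\alpha \leq 1$, observes that the $V_+$ part alone dominates $|V_\alpha|$ since $|V_\alpha|/(V_+$-positive part$) < 1$ uniformly, combined with the fact that $\Delta_-(r)/(r^2+a^2)^2 \cdot r^2$ is bounded away from zero on compact subsets of $r > r_+$, which I could reduce to an algebraic verification; (ii) in case $1 < \alpha \leq 2$ with $|a| < l/2$, the correction $\tfrac{1}{4}(r^2+a^2)/l^2$ provides the needed buffer, and the same compactness/algebra argument applies, with the parameter restriction $|a|<l/2$ used to control crossed $a^2/l^2$ terms; (iii) in case 3, one treats $B(r) + \tfrac14(r^2+a^2)/l^2$ as a small perturbation of the Schwarzschild--AdS bracket, which is strictly positive for $\alpha < 9/4$ and uniformly bounded below on any compact annulus by a standard compactness argument. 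Patching the near-horizon bound \eqref{ineq:Vplusalpha}, the large-$r$ asymptotic analysis, and the intermediate-region estimate then gives \eqref{ida2}.

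The main obstacle is the intermediate regime in cases 2 and 3, where neither the horizon expansion nor the asymptotic BF calculation directly applies; the delicate balance between the $V_+$ positive contribution $5r^4/(l^2(r^2+a^2))$, the negative $-3r^2\Delta_-/(r^2+a^2)^2$ arising from the second $r$-derivative of $\sqrt{r^2+a^2}$, and the indefinite-sign $-4Mr/(r^2+a^2)$ must be handled. I expect that for case 2 the condition $|a|<l/2$ is used precisely to bound cross terms of the form $a^2 r^2/l^2 \cdot \alpha$ against the positive $r^4/l^2$ contributions, while case 3 will reduce to a continuity/compactness argument anchored at the $a=0$ limit where $V_+ + V_\alpha$ is easily controlled and the bound is explicit. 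If the algebra at $r=r_+$ turns out to be unexpectedly delicate when $|a|l$ is close to $r_+^2$, one may need to split the bracket further into monotone pieces in $r-r_+$ rather than relying purely on evaluation at the horizon plus continuity.
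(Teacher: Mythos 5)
Your near-horizon argument for \eqref{ineq:Vplusalpha} is essentially the paper's: evaluate the bracket at $r_+$ using the horizon relation $\Delta_-(r_+)=0$ to eliminate $4Mr_+$, reduce to an explicit polynomial in $r_+,a,l,\alpha$, check strict positivity under each of the three conditions, and extend by continuity. (The paper reduces to $3\frac{r_+^4}{l^2}+r_+^2(1+\frac{a^2}{l^2})-a^2-\alpha\frac{r_+^4}{l^2}-2\alpha\frac{r_+^2a^2}{l^2}-\alpha\frac{a^4}{l^2}>0$, the same object your ``middle quotient'' simplification yields.) You gesture at the positivity in cases 2 and 3 rather than verifying it, but the strategy is identical.

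Where you diverge is the global estimate \eqref{ida2}. Your three-region patching (near horizon, compact middle, large $r$) is a different and considerably clumsier route than the paper's, which pivots on a single algebraic rewriting that you never produce:
\begin{align*}
V_+ = \frac{2\Delta_-}{(r^2+a^2)^2}\frac{r^2}{l^2} + \frac{\Delta_-}{(r^2+a^2)^4}\bigl(a^2\Delta_- + (r^2-a^2)\,2Mr\bigr)\geq 0,
\end{align*}
using $r\geq r_+>|a|$ for the sign of the second term. With this identity in hand, $V_+ + V_\alpha + \frac{1}{4l^2}\frac{\Delta_-}{r^2+a^2}$ is bounded below by $\frac{\Delta_-}{l^2(r^2+a^2)^2}\bigl[(2-\alpha+\tfrac14)r^2 + (\tfrac14-\alpha\Theta)a^2 + \text{nonneg.}\bigr]$, which exhibits the Breitenlohner--Freedman margin $\frac{9}{4}-\alpha$ you correctly spotted asymptotically as the coefficient of $r^2$ \emph{uniformly} on $[r_+,\infty)$, so there is no intermediate compact region to patch over. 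Your proposal leaves that middle region to ``a standard compactness argument'' and ``algebraic verification'', which, without the identity above, would require uniform lower bounds on a compact interval whose endpoints depend on all the parameters --- doable but much messier, and the weakest part of your plan. The cleaner move is to look for such a manifestly-positive decomposition of $V_+$ first; it simultaneously gives the $\alpha<1$ global case of \eqref{ineq:Vplusalpha} and the $\alpha<9/4$ case of \eqref{ida2} in one stroke.
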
 
\begin{proof}
We have
\begin{align} \label{Vplus}
V_+ = \frac{2\Delta_-}{\left(r^2+a^2\right)^2} \frac{r^2}{l^2}+ \frac{\Delta_-}{\left(r^2+a^2\right)^4} \left(a^2 \Delta_- + \left(r^2-a^2\right) 2Mr\right) \geq 0  \, ,
\end{align}
from which one easily obtains \eqref{ineq:Vplusalpha} globally if $\alpha < 1$. For $\alpha<\frac{9}{4}$ it is straightforward to check that, adding the extra term on the left, (\ref{ida2}) indeed holds under any of the conditions of Theorem \ref{theorem1}.  To obtain the estimate locally near the horizon without the extra term, we collect the leading order terms near the horizon from (\ref{eq:v+}) to find that we need to establish positivity for
\begin{eqnarray}
5\frac{r^4}{l^2} + 3r^2 \left(1+\frac{a^2}{l^2}\right) - 4Mr + a^2-\frac{\alpha}{l^2}(r^2+a^2)^2
\end{eqnarray}
at $r_+$. Using $\Delta_-(r_+)=0$ to evaluate the term $4Mr_+$, the sign of the above expression is the same as that of:
\begin{eqnarray}
3 \frac{r_+^4}{l^2}+r_+^2 \left(1+\frac{a^2}{l^2} \right)-a^2-\alpha \frac{r_+^4}{l^2}-2\alpha \frac{r_+^2}{l^2} a^2-\alpha \frac{a^4}{l^2},
\end{eqnarray}
which is easily seen to be strictly positive under any of the three conditions of Theorem \ref{theorem1}.
\end{proof}
Note that one can obtain the estimate (\ref{ineq:Vplusalpha}) near the horizon also for $\left(V_+ + V_\alpha\right)^\prime$. For $V_0$, we simply remark the following property.
\begin{lemma} \label{lem:V0}
Setting $\mathcal{L}=\lambda_{m \ell} + \omega^2 a^2$, we have
\begin{align} \label{V0-om}
V_0 - \omega^2 = - \left(\omega - \omega_r\right)^2 + \frac{\Delta_-}{\left(r^2+a^2\right)^2} \left(\mathcal{L} - 2m  a \Xi \omega\right) 
\end{align}
with the last term in (\ref{V0-om}) being non-negative.
\end{lemma}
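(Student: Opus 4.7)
The plan is to verify the identity \eqref{V0-om} by a direct algebraic expansion and then to establish non-negativity of $\mathcal{L}-2ma\Xi\omega$ — which suffices, since $\Delta_-\ge 0$ on the exterior — via the quadratic form representation implicit in the proof of Lemma \ref{spheroidal}.

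For the identity itself, I would start from the definition of $V^{(a\omega)}_{0,m\ell}$ given above and split the term $-2m\omega a\Xi(\Delta_--(r^2+a^2))$ as $-2m\omega a\Xi\Delta_- + 2m\omega a\Xi(r^2+a^2)$. The first piece combines with $\Delta_-(\lambda_{m\ell}+\omega^2a^2)$ to yield $\Delta_-(\mathcal{L}-2ma\Xi\omega)/(r^2+a^2)^2$, matching the last term on the right-hand side of \eqref{V0-om}. The remaining contributions, together with the $-\omega^2$ on the left-hand side, assemble to $-\omega^2+\tfrac{2ma\Xi\omega}{r^2+a^2}-\tfrac{m^2 a^2\Xi^2}{(r^2+a^2)^2}$, which is exactly $-(\omega-\omega_r)^2$ by \eqref{def:orop}. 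This is a short algebraic check.

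For the non-negativity, let $f=S_{m\ell}(a\omega,\cos\theta)e^{im\tilde\phi}$ be the $L^2$-normalized eigenfunction and write $\xi=a\omega$, so that $\mathcal{L}-2ma\Xi\omega=\langle f,(P_\alpha(\xi)+\xi^2-2ma\Xi\omega)f\rangle$. Using the decomposition $P_\alpha(\xi)+\xi^2=\tilde P(\xi)+P_c+(\alpha\text{-term})$ extracted from the proof of Lemma \ref{spheroidal}, integration by parts of the $\tilde P$-piece together with $\partial_{\tilde\phi}f=imf$ rewrites the integrand as
\begin{align*}
\Delta_\theta|\partial_\theta f|^2+\Big(\tfrac{\Xi^2 m^2}{\sin^2\theta}+P_c-2ma\Xi\omega+(\alpha\text{-term})\Big)|f|^2.
\end{align*}
The key algebraic step is then to exploit $\Delta_\theta=1-a^2\cos^2\theta/l^2$ to regroup the three $m^2\Xi^2$-type terms from $P_c$ together with $\Xi^2 m^2/\sin^2\theta$ into the single expression $\Xi^2 m^2/(\Delta_\theta\sin^2\theta)$, and similarly to write $-2ma\Xi\omega=-2ma\Xi\omega\,\Delta_\theta/\Delta_\theta$ so as to absorb the $P_c$ cross-term $-2ma\omega\Xi a^2\cos^2\theta/(l^2\Delta_\theta)$. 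All $\cos^2\theta/l^2$-corrections then cancel and the surviving pieces collapse to the perfect square
\begin{align*}
\mathcal{L}-2ma\Xi\omega=\int_{\mathbb{S}^2}\!\Big[\Delta_\theta|\partial_\theta f|^2+\tfrac{1}{\Delta_\theta}\big(a\omega\sin\theta-\tfrac{m\Xi}{\sin\theta}\big)^{2}|f|^2+(\alpha\text{-term})|f|^2\Big]\sin\theta\,d\theta\,d\tilde\phi,
\end{align*}
which is manifestly non-negative.

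The main obstacle is precisely this cancellation: one must carefully match the several $\Delta_\theta$-weighted contributions coming from $\tilde P(\xi)$, from both summands of $P_c$, and from the constant shift $-2ma\Xi\omega$, and verify that after regrouping via $\Delta_\theta=1-a^2\cos^2\theta/l^2$ all mixed $\cos^2\theta$ terms vanish and the clean square $(a\omega\sin\theta-m\Xi/\sin\theta)^2/\Delta_\theta$ emerges.
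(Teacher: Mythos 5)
Your proposal is correct and follows essentially the same route as the paper. The paper's proof of the positivity claim is terse: it simply records the identity $(P+\xi^2-2m\xi\Xi)f = -\frac{1}{\sin\theta}\partial_\theta(\Delta_\theta\sin\theta\,\partial_\theta f)+\frac{1}{\Delta_\theta}\bigl(\xi\sin\theta-\frac{m\Xi}{\sin\theta}\bigr)^2 f$, from which non-negativity is immediate (and $P_\alpha\ge P$ adds a non-negative multiplication operator). You reach exactly the same perfect square $\frac{1}{\Delta_\theta}\bigl(a\omega\sin\theta-\frac{m\Xi}{\sin\theta}\bigr)^2$, merely routing the algebra through the $\tilde P + P_c$ split from the proof of Lemma~\ref{spheroidal} rather than computing directly from the definition of $P$; the two computations are equivalent, and your handling of the $\alpha$-term is the explicit version of the paper's implicit use of $P_\alpha\ge P$.
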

\begin{proof}
The equality in (\ref{V0-om}) is an easy computation while 
the positivity claim is a consequence of the positivity of the operator $P+\xi^2-2m \xi \Xi$, which can be read off from
\begin{eqnarray} 
\left(P+\xi^2-2m\xi \Xi\right)f= -\frac{1}{\sin \theta} \partial_\theta \left(\Delta_\theta \sin \theta \partial_\theta f \right)+ \frac{1}{\Delta_{\theta}}\left( \xi \sin \theta-\frac{m \Xi}{\sin \theta} \right)^2f  \, . \nonumber
\end{eqnarray}
\end{proof}
To perform further analysis near the horizon we introduce 
\begin{align} \label{vred}
V_{red} = V + |\omega - \omega_r(r)|^2 - \omega^2 \ , \ \ \ \ \ \ \tilde{V} = V + |\omega - \omega_+|^2 - \omega^2  \, .
\end{align}
From the previous Lemmata, we deduce
\begin{lemma}
If any of the three conditions of Theorem \ref{theorem1} holds, there exists $\rho_+> r_+$, depending only on $a$, $l$, $M$ and $\alpha$, such that for all $r \in [r_+,\rho_+]$, we have:
\begin{align} \label{pol}
V_{red}&\ge C \frac{\Delta_-}{\left(r^2+a^2\right)^2}  \cdot r^2+\frac{\Delta_-}{(r^2+a^2)^2}\left( \mathcal{L}-2ma\Xi \omega \right)  
\end{align}
with $C > 0$ a constant depending only on $a$, $l$, $M$, $\alpha$. Moreover, the estimate (\ref{pol}) is global for any of the conditions of Theorem  \ref{theorem1}, provided the term $\frac{1}{4l^2} \frac{\Delta_-}{r^2+a^2}$ is added to the left hand side when $\alpha \geq 1$.
\end{lemma}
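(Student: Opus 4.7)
My plan is to assemble the stated lower bound on $V_{red}$ directly from the two preceding lemmas, by using the key cancellation designed into the very definition $V_{red} = V + |\omega - \omega_r|^2 - \omega^2$.

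The starting point is to split $V = V_+ + V_0 + V_\alpha$ and write
\begin{align*}
V_{red} = (V_+ + V_\alpha) + \bigl(V_0 - \omega^2\bigr) + |\omega - \omega_r|^2.
\end{align*}
I would then substitute the identity from Lemma \ref{lem:V0},
\begin{align*}
V_0 - \omega^2 = -(\omega - \omega_r)^2 + \frac{\Delta_-}{(r^2+a^2)^2}\bigl(\mathcal{L} - 2ma\Xi\omega\bigr),
\end{align*}
so that the $-(\omega-\omega_r)^2$ term exactly cancels the $+|\omega-\omega_r|^2$ piece of $V_{red}$. This reduces the expression to
\begin{align*}
V_{red} = (V_+ + V_\alpha) + \frac{\Delta_-}{(r^2+a^2)^2}\bigl(\mathcal{L} - 2ma\Xi\omega\bigr).
\end{align*}

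At this stage the problem is entirely about estimating $V_+ + V_\alpha$ from below, and that is precisely what Lemma \ref{lem:Vplus} provides. For the local statement, I invoke estimate \eqref{ineq:Vplusalpha}, which yields $V_+ + V_\alpha \ge C \frac{\Delta_-}{(r^2+a^2)^2} r^2$ on $[r_+, \rho_+]$ under any of the three parameter conditions; substituting this into the display above produces the claimed lower bound \eqref{pol}. For the global statement, I instead apply \eqref{ida2}, which in the $\alpha \ge 1$ regime gives the same type of bound only after moving the auxiliary term $\frac{1}{4l^2}\frac{\Delta_-}{r^2+a^2}$ to the other side; adding this term to the left of \eqref{pol} exactly matches the form provided by \eqref{ida2}. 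When $\alpha < 1$, no such correction is needed since the stronger form of Lemma \ref{lem:Vplus} is global.

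There is no serious obstacle here: the lemma is a packaging result, and the only non-trivial input — the algebraic identity for $V_0 - \omega^2$ together with the non-negativity of $\mathcal{L} - 2ma\Xi\omega$ — has already been extracted in Lemma \ref{lem:V0}. The role of the definition of $V_{red}$ is precisely to absorb the indefinite $-(\omega-\omega_r)^2$ contribution coming from $V_0 - \omega^2$, leaving a manifestly non-negative spheroidal piece plus the redshift-type term $V_+ + V_\alpha$ that is controlled by Lemma \ref{lem:Vplus}.
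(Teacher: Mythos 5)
Your proof is correct and is exactly the argument the paper has in mind when it writes ``From the previous Lemmata, we deduce'': the algebraic cancellation between $|\omega-\omega_r|^2$ in the definition of $V_{red}$ and the $-(\omega-\omega_r)^2$ term in Lemma~\ref{lem:V0} reduces the claim to the bound on $V_+ + V_\alpha$ already established in Lemma~\ref{lem:Vplus}.
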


In Section \ref{phiflatt2}, we will analyze the behavior of solutions in the regime 
\begin{align} \label{regime1}
|\omega - \omega_+ |^2 \leq \delta^2 \mathcal{L}^\star := \delta^2 \cdot \max \left(\lambda_{ml} + a^2 \omega^2, \Xi^2\right)
\end{align}
for some small $\delta$ to be determined. Note that $\mathcal{L}^\star = \mathcal{L}$ for $m\neq 0$.
Note also that the above range includes the near stationary regime $\omega \sim \omega_+$. For this, we will need the following estimate on $\tilde{V}$:

\begin{lemma} \label{lem:Vtilde} 
If any of the three conditions of Theorem \ref{theorem1} holds, then there exists a $\delta > 0$ 
depending only on $a$, $l$, $M$ and $\alpha$, such that for all frequencies $|\omega-\omega_+|^2 \le \delta^2 \mathcal{L}^\star$  we have the global estimate
\begin{align} \label{eq:global}
\tilde{V} + \frac{1}{4l^2}\frac{\Delta_-}{r^2+a^2} \ge C \frac{\Delta_-}{\left(r^2+a^2\right)^2}\left(m^2 + r^2\right) \, ,
\end{align}
where the second term on the left hand side can be dropped if $\alpha < 1$. 

Moreover, if any of the three conditions of Theorem \ref{theorem1} holds, then there exists a $\delta > 0$ and $\rho_+ > r_+$, depending only on $a$, $l$, $M$ and $\alpha$, such that for all frequencies $|\omega-\omega_+|^2 \le \delta^2 \mathcal{L}^\star$ and all $r \in [r_+, \rho_+]$, we have
\begin{eqnarray} 
\tilde{V} &\ge& C \ \Delta_ - \left[ 1 + \mathcal{L} \right] \label{es:tvp} \, ,  \\
  \tilde{V}^\prime &\ge& C\  \Delta_-  \left[ 1 + \mathcal{L} \right]  \label{es:tvpp} \, ,  \\ 
\frac{\tilde{V}'}{\tilde{V}}&\ge& C \, .\label{es:tvpotvp}
\end{eqnarray}
with $C > 0$ a constant depending only on $a$, $l$, $M$ and $\alpha$. 
\end{lemma}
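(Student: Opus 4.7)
I would start from the decomposition obtained by substituting Lemma \ref{lem:V0} into the definition of $\tilde{V}$:
\begin{align}
\tilde{V} = (V_+ + V_\alpha) + \frac{\Delta_-}{(r^2+a^2)^2}\bigl(\mathcal{L} - 2ma\Xi\omega\bigr) + \bigl[|\omega-\omega_+|^2 - (\omega-\omega_r)^2\bigr].
\end{align}
The $V_+ + V_\alpha$ piece is already controlled by Lemma \ref{lem:Vplus}, globally up to the $\frac{1}{4l^2}\frac{\Delta_-}{r^2+a^2}$ correction and locally on $[r_+,\rho_+]$ without it. The whole proof thus reduces to (i) producing a lower bound $\mathcal{L} - 2ma\Xi\omega \geq c(m^2+\mathcal{L})$ valid on the regime $|\omega-\omega_+|^2 \leq \delta^2\mathcal{L}^\star$, and (ii) absorbing the last bracket into the main positive terms.

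For (i), the case $m=0$ is immediate. For $m\neq 0$ one has $\mathcal{L}^\star = \mathcal{L}$ by Lemma \ref{spheroidal}, so Corollary \ref{corhelp} applies. I would expand $2ma\Xi\omega = \frac{2m^2 a^2\Xi^2}{r_+^2+a^2} + 2ma\Xi(\omega-\omega_+)$ and bound the cross term by Young's inequality $|2ma\Xi(\omega-\omega_+)| \leq \eta\mathcal{L} + C_\eta\delta^2 m^2$. Combining with the Corollary bound $\mathcal{L} \geq (1-O(\delta))\bigl[\Xi^2 m(m+1) + \tfrac{m^2\Xi^2 a^4}{(r_+^2+a^2)^2}\bigr]$, the effective $m^2$-coefficient of the resulting lower bound on $\mathcal{L} - 2ma\Xi\omega$ converges, as $\delta,\eta \to 0$, to
\begin{align}
\Xi^2\Bigl(1 + \tfrac{a^4}{(r_+^2+a^2)^2} - \tfrac{2a^2}{r_+^2+a^2}\Bigr) = \Xi^2\tfrac{r_+^4}{(r_+^2+a^2)^2} > 0,
\end{align}
a perfect square whose positivity is preserved quantitatively under the Hawking-Reall and regularity conditions. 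This gives $\mathcal{L} - 2ma\Xi\omega \geq c_1 m^2$ for some $c_1 > 0$; a dichotomy between $\mathcal{L} \geq Mm^2$ (where $(1-\eta)\mathcal{L}$ alone dominates $2ma\Xi\omega + C_\eta\delta^2 m^2$ for $M$ large) and $\mathcal{L} < Mm^2$ (where $c_1 m^2 \gtrsim \mathcal{L}/M$) then upgrades this to $\mathcal{L} - 2ma\Xi\omega \geq c(m^2+\mathcal{L})$. Choosing $\delta,\eta,1/M$ simultaneously small so that all constants line up is the main obstacle.

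For (ii), the identity $|\omega-\omega_+|^2 - (\omega-\omega_r)^2 = (\omega_r-\omega_+)(2\omega-\omega_r-\omega_+)$, combined with $|\omega_r-\omega_+| = |m|a\Xi\frac{|r^2-r_+^2|}{(r^2+a^2)(r_+^2+a^2)} \lesssim |m|\Delta_-$ (from $\Delta_-(r_+)=0$ with non-vanishing derivative) and $|2\omega-\omega_r-\omega_+| \lesssim \delta\sqrt{\mathcal{L}^\star} + |m|\Delta_-$, produces an error bounded by $C\delta\Delta_-(m^2+\mathcal{L}) + Cm^2\Delta_-^2$. For $\delta$ small and $\rho_+-r_+$ small this is absorbed into the main positive term $\frac{\Delta_-}{(r^2+a^2)^2}(\mathcal{L}-2ma\Xi\omega)$. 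Adding the appropriate (global vs.~local) version of Lemma \ref{lem:Vplus} then yields both (\ref{eq:global}) and (\ref{es:tvp}).

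For the derivative estimates (\ref{es:tvpp})--(\ref{es:tvpotvp}), observe first that $\omega_+$ is $r$-independent so $\tilde{V}' = V'$; differentiating the decomposition and using that $\mathcal{L},\omega,\omega_+$ do not depend on $r$ gives
\begin{align}
\tilde{V}' = (V_+ + V_\alpha)' + \Bigl[\tfrac{\Delta_-}{(r^2+a^2)^2}\Bigr]'(\mathcal{L}-2ma\Xi\omega) + 2\omega_r'(\omega-\omega_r).
\end{align}
At the horizon $\bigl[\Delta_-/(r^2+a^2)^2\bigr]'(r_+) = \Delta_-'(r_+)/(r_+^2+a^2)^2 > 0$ (the surface-gravity factor), so this remains bounded below by a positive constant on $[r_+,\rho_+]$ for $\rho_+$ close to $r_+$; moreover $(V_++V_\alpha)' \geq 0$ there by the derivative version of Lemma \ref{lem:Vplus}. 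Combining with step (i) and absorbing the last error (bounded by $C\delta(m^2+\mathcal{L}) + Cm^2\Delta_-$ using $|\omega_r'|\lesssim |m|$) yields the stronger pointwise bound $\tilde{V}' \geq c(1+\mathcal{L}) \geq c\Delta_-(1+\mathcal{L})$ on $[r_+,\rho_+]$, establishing (\ref{es:tvpp}). The ratio (\ref{es:tvpotvp}) then follows from this together with the matching upper bound $\tilde{V} \lesssim \Delta_-(1+\mathcal{L})$ on the compact interval (on which $\Delta_- \leq \Delta_-(\rho_+)$ is bounded).
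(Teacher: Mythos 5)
Your decomposition of $\tilde{V}$ is the same as the paper's, and the step~(i) expansion around $\omega_+$ with Corollary~\ref{corhelp} correctly identifies the perfect square $\Xi^2 r_+^4/(r_+^2+a^2)^2$. The genuine gap is in step~(ii), for the \emph{global} estimate \eqref{eq:global}. You bound the cross term by $C\delta\Delta_-(m^2+\mathcal{L})+Cm^2\Delta_-^2$ and propose to absorb it into $\frac{\Delta_-}{(r^2+a^2)^2}(\mathcal{L}-2ma\Xi\omega)$. Near the horizon that works, but globally it fails: as $r\to\infty$ your error term $m^2\Delta_-^2\sim m^2 r^8/l^4$ completely swamps the main positive term $\sim m^2/l^2$. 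Even with the sharp bound, $(\omega_r-\omega_+)^2\to m^2a^2\Xi^2/(r_+^2+a^2)^2$ at infinity, which is \emph{comparable} (ratio $\to a^2l^2/r_+^4$) to the leading good term, so it is not a small perturbation that $\delta$-smallness can kill. The paper instead performs the exact algebraic subtraction and then uses the identity $(r-r_+)(r+r_+)(r^2-r_+^2)/\Delta_-\le l^2$ (a consequence of \eqref{eq:hepo}) to reduce positivity to $r_+^4-a^2l^2>0$, i.e.\ precisely the Hawking--Reall condition. You notice the perfect square and mention the Hawking--Reall condition in step~(i), but that square is positive unconditionally; the HR inequality is actually needed at the later subtraction step, which your write-up never carries out.

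There is a second problem in the derivative estimate. In this paper $'$ denotes $d/dr^\star$ (tortoise), not $d/dr$, so $[\Delta_-/(r^2+a^2)^2]'(r_+)=\frac{\Delta_-}{r^2+a^2}\frac{d}{dr}[\Delta_-/(r^2+a^2)^2]\big|_{r_+}=0$, not $\Delta_-'(r_+)/(r_+^2+a^2)^2>0$ as you assert. Consequently your intermediate claim $\tilde{V}'\ge c(1+\mathcal{L})$ on $[r_+,\rho_+]$ is false; the correct rate carries the extra $\Delta_-$ factor, $\tilde{V}'\ge c\Delta_-(1+\mathcal{L})$, which is exactly what \eqref{es:tvpp} asserts and what the paper obtains via \eqref{two2}. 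Your final conclusions happen to coincide with the lemma, but the reasoning for \eqref{es:tvpp} rests on confusing the two radial variables and would not survive scrutiny.
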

\begin{proof} 
Note that
\begin{equation}
\tilde{V}=V_{red}+2(\omega-\omega_+)(\omega_r-\omega_+)-(\omega_r-\omega_+)^2 \, .  \label{eq:tvmvr} 
\end{equation}
To the first term we apply (\ref{pol}). Since for $m=0$ the estimates (\ref{eq:global})-(\ref{es:tvpotvp}) then follow immediately, we restrict to $m\neq 0$. The second term of (\ref{eq:tvmvr}) satisfies 
\[
| 2(\omega-\omega_+)(\omega_r-\omega_+)| \leq C_{M,l,a} \cdot \frac{\Delta_-}{(r^2+a^2)^2} \cdot \delta \cdot \mathcal{L}^\star \, .
\]
Returning to the first term $V_{red}$ and (\ref{pol}), we have from the properties of the $\lambda_{m\ell}$ collected in Lemma \ref{spheroidal}:
\begin{align}
\left( \mathcal{L}-2ma\Xi \omega \right)=\left( \mathcal{L}-2ma\Xi \omega_+ \right)+2ma\Xi(\omega- \omega_+) \nonumber \\
\ge \left(1-C_{M,l,a}\delta\right) \mathcal{L} - 2ma\Xi \omega_+ \nonumber \, ,
\end{align}
where we used Cauchy-Schwarz. Combining the two estimates and applying Corollary \ref{corhelp} we have
\begin{align}
\frac{\Delta_-}{(r^2+a^2)^2}\left( \mathcal{L}-2ma\Xi \omega \right) + 2(\omega-\omega_+)(\omega_r-\omega_+) \nonumber \\
\ge  \Bigg( \frac{1-C_{M,l,a} \delta}{1+\frac{\delta}{c_{M,l,a}}}  \left[ \Xi^2|m|(|m|+1) + \frac{m^2 a^4 \Xi^2}{\left(r_+^2 + a^2\right)^2} \right]  -2ma\Xi \omega_+ \Bigg)\frac{\Delta_-}{(r^2+a^2)^2} \, . \nonumber
\end{align}
Let us write $1-\tilde{\delta}$ for factor in front of the bracket above and finally subtract the last term of (\ref{eq:tvmvr}) from this expression. For $\tilde{\delta}=0$ we would have
\begin{align}
\left[\Xi^2 m^2 -2ma\Xi \omega_+ + \frac{m^2 a^4 \Xi^2}{\left(r_+^2 + a^2\right)^2}\right]\frac{\Delta_-}{(r^2+a^2)^2}-\left(\omega_r-\omega_+\right)^2 \nonumber \\
\geq \frac{\Delta_- \Xi^2 m^2}{\left(r^2+a^2\right)^2\left(r_+^2+a^2\right)^2} \left[ \left(r_+^2+a^2\right)^2 -2a^2 \left(r_+^2+a^2\right)+a^4 - \frac{a^2 \left(r^2-r_+^2\right)^2}{\Delta_-}\right]
\nonumber \\
\ge \frac{\Delta_- \Xi^2 m^2}{\left(r^2+a^2\right)^2\left(r_+^2+a^2\right)^2} \left[ r_+^4 - a^2 l^2 \right] \, ,\nonumber
\end{align}
where we used the estimate
\begin{align}
\frac{\left(r-r_+\right)\left(r+r_+\right)\left(r^2-r_+^2\right)}{\Delta_-} \leq l^2
\end{align}
in the last step, which in turn is a consequence of the identity
\begin{align} \label{eq:hepo}
\Delta_- = l^{-2} \left(r-r_+\right)\left(r^3 + r^2 r_+ + r \left( r_+^2 + a^2 +l^2\right) - a^2 l^2 r_+^{-1}\right) \, .
\end{align}
In view of the non-superradiance condition $r_+^2 > |a| l$, the last square bracket is bounded below by some $c_{M,l,a}$ and hence we can choose $\tilde{\delta}$ (hence $\delta$) sufficiently small to absorb the term $\tilde{\delta}  \left[ \Xi^2|m|(|m|+1) + \frac{m^2 a^4 \Xi^2}{\left(r_+^2 + a^2\right)^2} \right]$ ensuring that indeed (\ref{eq:global}) holds.

To establish  \eqref{es:tvp} we can redo the above computation with the only difference that near the horizon the term $\left(\omega_r - \omega_+\right)^2$ is of the order $m^2 \Delta_-^2 \leq \epsilon \Delta_- \mathcal{L}$ with the $\epsilon$ coming from the fact that $\Delta_-$ is small near the horizon. This leads to the better estimate
\begin{align} \label{eq:hie}
\frac{5}{6} \mathcal{L} - 2ma\Xi \omega_+ \geq c_{M,l,a} m^2
\end{align}
for sufficiently small $\delta$, therefore establishing (\ref{es:tvp}).

For \eqref{es:tvpp} and \eqref{es:tvpotvp}, recall the identity
\begin{align}
\tilde{V} = V_+ + V_\alpha + \frac{\Delta_-}{\left(r^2+a^2\right)^2} \left(\mathcal{L} - 2m a \Xi \omega\right) + 2 \left(\omega-\omega_+\right)\left(\omega_r-\omega_+\right) - \left(\omega_r - \omega_+ \right)^2 \nonumber \, ,
\end{align}
the fact that $\omega_r - \omega_+ \sim ma\Xi \Delta_-$ near the horizon and that $m^2 \lesssim \mathcal{L}$. From this we see that
\begin{align} \label{one1}
\tilde{V} \leq C_{M,l,a} \Delta_- \left(1 + \left((1+\tilde{\delta})\mathcal{L} - 2m a \Xi \omega_+ \right)\right) \leq C_{M,l,a} \Delta_- \left(1+\mathcal{L}\right)
\end{align}
holds for sufficiently small $\delta$ and $\rho_+$ sufficiently close to the horizon.
Similarly,
\begin{align} \label{two2}
\tilde{V}^\prime &\geq c_{M,l,a}\ \Delta_- \left(1 + \left(\mathcal{L} - 2m a \Xi \omega_+ \right)\right) - C_{M,l,a} \Delta_- \left(|m| \delta \sqrt{\mathcal{L}}  + m^2 \Delta_-\right) \nonumber \\
& \geq c_{M,l,a}\ \Delta_- \left(1 + \left(\left(1-C_{M,l,a}\delta\right)\mathcal{L} - 2m a \Xi \omega_+ \right)\right)
\nonumber \\
& \geq c_{M,l,a}\ \Delta_- \left(1 + c_{M,l,a} \mathcal{L}  \right) \, .
\end{align}
The last step following from (\ref{eq:hie}).
This establishes (\ref{es:tvpp}) and combining (\ref{one1}) and (\ref{two2}) yields also (\ref{es:tvpotvp}).
\end{proof}
\subsection{ $(\ell,m,\omega)$ low-frequency localisation} \label{sec:freqloc}
Let $\zeta\left(z\right)$ be a smooth cut-off function which is equal to $0$ for $z \geq 2$ and equal to $1$ for $z \leq 1$. Let moreover $\tilde{\zeta}\left(z\right)$ be an even cut-off function being equal to $1$ for $[-1/2,1/2]$ and equal to zero for $|z|\geq 1$.
We will cut off $\widehat{\psi^{\tau}}$ both in the $\omega$ variable and in the $(\ell,m)$ variables. Let us define, for fixed $L>1$, $\delta>0$ and $\mathcal{L}^\star=\max\left(\mathcal{L},\Xi^2\right)$ as defined in (\ref{regime1}), the following kernels:
\begin{align} \label{kernel}
\mathcal{K}_{\leq L, \omega \not\approx \omega_+} \left(t-s,\theta, \underline{\theta} ,\tilde{\phi},\tilde{\underline{\phi}}\right) 
= \int_{-\infty}^\infty d\omega \sum_{m,\ell}  \ \zeta\left(\frac{a^2 \omega^2 + \lambda_{m \ell}\left(\omega\right)}{L}\right)\cdot  \\ 
 \left[ 1 - \tilde{\zeta} \left(\left(\omega - \omega_+\right)\frac{1}{\delta \sqrt{\mathcal{L}^\star}} \right) \right]  e^{-i \omega \left(t-s\right)}  \cdot S_{m\ell} \left(a\omega,\cos \underline{\theta}\right)S_{m\ell} \left(a\omega,\cos \theta\right) e^{im \left(\tilde{\phi}-\tilde{\underline{\phi}}\right)} \, , \nonumber
\end{align}
\begin{align}
\mathcal{K}_{\leq L, \omega \approx \omega_+} \left(t-s,\theta, \underline{\theta} ,\tilde{\phi},\tilde{\underline{\phi}}\right) 
= \int_{-\infty}^\infty d\omega \sum_{m,\ell}  \ \zeta\left(\frac{a^2 \omega^2 + \lambda_{m \ell}\left(\omega\right)}{L}\right)  \cdot 
 \\  \tilde{\zeta} \left(\left(\omega - \omega_+\right) \frac{1}{\delta \sqrt{\mathcal{L}^\star}}\right) e^{-i \omega \left(t-s\right)}  \cdot S_{m\ell} \left(a\omega,\cos \underline{\theta}\right)S_{m\ell} \left(a\omega,\cos \theta\right) e^{im \left(\tilde{\phi}-\tilde{\underline{\phi}}\right)} \, , \nonumber
\end{align}
and the associated projection operators
\begin{align}
\mathcal{P}_{\leq L, \omega \not\approx \omega_+} \left(\varphi\right) \left(t,r,\theta,\tilde{\phi}\right) = \nonumber \\ 
\int_{-\infty}^\infty ds \int_{\mathbb{S}^2} \sin \theta d\underline{\theta} d\tilde{\underline{\phi}} \ \varphi \left(s,r,\underline{\theta},\underline{\tilde{\phi}}\right) \mathcal{K}_{\leq L, \omega \not\approx \omega_+} \left(t-s,\theta,\underline{\theta},\tilde{\phi},\underline{\tilde{\phi}}\right) \, , \nonumber
\end{align}
for $\varphi \left(s,r,\underline{\theta},\tilde{\underline{\phi}}\right)$ a smooth compactly supported (in $s$) function. Similarly for $\mathcal{P}_{\leq L, \omega \approx \omega_+} \left(\varphi\right) \left(t,r,\theta,\tilde{\phi}\right)$.  In applications, $\varphi$ will be the cut-off scalar field $\psi^\tau$ or derivatives thereof.  We finally define $\psi_{\flat}^{\tau}$, $\psi_{\sharp}^{\tau}$ as follows:
\begin{align}
\psi_{\flat}^{\tau} &= \psi_{\flat, \omega \not\approx \omega_+}^{\tau} +  \psi_{\flat, \omega \approx \omega_+}^{\tau} =  \mathcal{P}_{\leq L, \omega \not\approx \omega_+}(\psi^{\tau}) +  \mathcal{P}_{\leq L, \omega \approx \omega_+}(\psi^{\tau})  \, , \nonumber \\
\psi_{\sharp}^{\tau} &= \psi^{\tau}-\psi_{\flat}^{\tau} \nonumber \, .
\end{align}
Naturally, one has $\psi_{\flat}^{\tau}$ and $\psi_{\sharp}^{\tau}$ in $CH^k_{AdS}$ provided that $\psi$ is in $CH^k_{AdS}$, for all $k \ge 1$.

%
%
%
%
%
%
%
%
%
%
%
%
%
\section{Microlocal current templates} \label{se:mct}
In this section, we shall define currents at the level of the separated Fourier components $u^{(a\omega)}_{m\ell} \left(r\right)$ introduced in Section \ref{se:sepv}. These  currents are clearly well-defined for $\psi_{\flat}^{\tau}$ since in this case the $u^{(a\omega)}_{m\ell} \left(r\right)$ are all smooth. 

To avoid overloading the notation, we will employ in this section the short hand notation $u=u^{(a\omega)}_{m\ell} \left(r\right)$, $V=V^{(a\omega)}_{m\ell} \left(r\right)$ and $H=H^{(a\omega)}_{m\ell} \left(r\right)$.

\subsection{Microlocal energy and red-shift currents}
The microlocal current associated to the Killing fields $T=\partial_t$ and $K= \partial_t + \frac{a\Xi}{r_+^2 + a^2} \partial_\phi$ is given by
\begin{align} \label{Tcurr}
Q_T[u]=\omega Im \left( u' \bar{u} \right) \ \ \ \ \textrm{and}  \ \  \ \ Q_K \left[u\right] = \left(\omega - \omega_+\right) Im \left( u' \bar{u} \right) 
\end{align}
respectively. We have $Q_T'=\omega Im \left(H \overline{u}\right)$ and $Q_K^\prime= \left(\omega - \omega_+\right) Im \left( H \bar{u} \right)$. Hence for $H=0$,  the quantities $Q_T$ and $Q_K$ are constant in $r^\star$.

Recall the definitions of (\ref{vred}).
For any real function $z$, we also introduce the microlocal red-shift current:
\begin{align} \label{redshift}
Q_{red}^z = z\left[\left| u' +i\left(\omega - \omega_+\right)u \right|^2 - V_{red} \left|u \right|^2 \right] \, .
\end{align}
The current $Q_{red}^z$ satisfies
\begin{align} 
Q_{red}'=& z^\prime \left| u' +i\left(\omega - \frac{a \Xi m}{r_+^2+a^2}\right)u \right|^2 - \left(z V_{red} \right)^\prime|u|^2 \nonumber \\ &\hbox{}-z(|u|^2)'\left(V_{red}-\tilde{V}\right)+ 2z Re \left(\left(u^\prime+i\left(\omega - \frac{a \Xi m}{r_+^2+a^2}\right)u \right) \overline{H}\right) \, .
\end{align}
\begin{remark}
The microlocal redshift current will only be used in the proof of Theorem \ref{theorem2}. The proof of Theorem \ref{theorem1} will exploit the redshift directly in physical space in conjunction with the pigeonhole principle after some version of integrated local energy decay has been proven.
\end{remark}
\subsection{Microlocal Morawetz currents}
We finally turn to the Morawetz currents.
Let $f$ and $g$ be bounded functions of $r^\star$ only. Using the notation of \cite{DafRodsmalla, DafRodlargea}, we define 
\begin{align} \label{fcurrent}
Q^f_0 = f \left[ | u^\prime |^2 + \left(\omega^2 - V\right)|u|^2\right] + f^\prime Re \left(u^\prime \overline{u}\right) - \frac{1}{2} f^{\prime \prime} |u|^2 \, ,
\end{align}
\begin{align} \label{mlhc}
Q^h_1 = h Re \left(u^\prime \overline{u}\right) - \frac{1}{2} h^\prime |u|^2 \, ,
\end{align}
\begin{align}
Q^g_2 = g \left[ | u^\prime |^2 + \left(\omega^2 - V\right)|u|^2\right] \, ,
\end{align}
which satisfy
\begin{align} \label{fbulk}
\left(Q^f_0\right)^\prime = 2 f^\prime | u^\prime|^2 - \left( f V^\prime +  \frac{1}{2} f^{\prime \prime \prime} \right) |u|^2 + 2 f Re \left(u^\prime \overline{H}\right) + f^\prime Re \left(u \overline{H}\right) \, ,
\end{align}
\begin{align}
\left(Q^h_1\right)^\prime = h \left(|u^\prime|^2 + \left(V-\omega^2\right)|u|^2 \right) - \frac{1}{2}h^{\prime \prime} |u|^2 + h Re \left(u\overline{H}\right) \, ,
\end{align}
\begin{align}
\left(Q^g_2\right)^\prime =  g^\prime \left[ | u^\prime |^2 + \left(\omega^2 - V\right)|u|^2\right] - gV^\prime |u|^2 +   2g Re \left(u^\prime \overline{H}\right) \, .
\end{align} 
In the following, the combination
\begin{align}
Q^{g,h} = Q_2^g + Q_1^{h} 
\end{align}
 will be particularly useful. For this choice
 \begin{align} \label{mixedM}
\left( {Q}^{g,h} \right)^\prime = |u^\prime|^2 \left(g^\prime + h \right) + \omega^2 \left(g^\prime - h \right) |u|^2 \nonumber \\ + |u|^2 \left(-g^\prime V -g V^\prime +V h - \frac{1}{2} h^{\prime \prime} \right) 
+ 2g Re \left(u^\prime \overline{H}\right) + h Re \left(u\overline{H}\right) \, .
 \end{align}

\section{Microlocal estimates for $\psi_{\flat, \omega \approx \omega_+}^{\tau}$} \label{phiflatt2}

Recall the parameter $\delta$, which determines the relation between the size of $\omega-\omega_+$ and $\mathcal{L}$, cf.~Section \ref{sec:freqloc}. In this section 
we are going to determine a $\delta$ (small) depending only on the parameters, such that for frequencies satisfying $|\omega - \omega_+ | < \delta \sqrt{\mathcal{L}^\star}$ and $\mathcal{L}< 2L$, we have an integrated decay estimate. Once this constant $\delta$ has been determined it will be fixed for the rest of the paper.

\begin{remark}
Similar estimates to the one's presented below have appeared before in a different context in \cite{DafRodsmalla, DafRodlargea, Aretakis} for the ``almost stationary regime" ($|\omega - \omega_+| \leq \delta$). Here we adapt the construction of \cite{Aretakis} because we want to avoid using the redshift at this point (cf.~the remarks in Section \ref{sec:hozimprove}). While the multiplier constructed below will cover the larger ``angular dominated" range $|\omega - \omega_+ | < \delta \sqrt{\mathcal{L}^\star}$, the notation $\psi_{\flat, \omega \approx \omega_+}^{\tau}$ is still reminiscent of the original setting.
\end{remark}

From Lemma \ref{lem:Vtilde}, we first note that we can pick $\delta$ in $|\omega - \omega_+| \leq \delta \sqrt{\mathcal{L}^\star}$ sufficiently small such that we have the global estimate
\begin{align}
\tilde{V} + \frac{1}{4l^2} \frac{\Delta_-}{r^2+a^2}  \geq c_{M,l,a} \frac{\Delta_-}{\left(r^2+a^2\right)^2} \left(r^2 + m^2\right) \, ,
\end{align}
where the second term on the left hand side is not needed if $\alpha<1$. For the mass range $\alpha\geq1$ this additional term is gained from the derivative term via a Hardy inequality:
\begin{lemma} \label{lem:hardyabsorb}
For any $r_{cut}\geq r_+$, we have for a smooth function $u$ with $u r^{\frac{1}{2}} = o(1)$ at infinity 
\begin{align} \label{helpest}
\frac{1}{4 l^2} \int_{r=r_{cut}}^\infty dr^\star \frac{\Delta_-}{\left(r^2+a^2\right)} |u|^2 \leq  \int_{r=r_{cut}}^\infty dr^\star |u^\prime |^2 \, .
\end{align}
\end{lemma}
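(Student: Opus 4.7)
The plan is to establish the inequality by integration by parts with the test function $\phi(r) = -(r-r_+)$, which vanishes at the horizon and so tames the boundary contribution at $r^\star_{cut}$. Since $\phi' = d\phi/dr^\star = -\Delta_-/(r^2+a^2)$, integration by parts yields
\[
\int_{r^\star_{cut}}^{\pi/2} \frac{\Delta_-}{r^2+a^2}|u|^2 \, dr^\star = -(r_{cut}-r_+)|u(r_{cut})|^2 - 2\int_{r^\star_{cut}}^{\pi/2} (r-r_+)\,\mathrm{Re}(u'\bar{u})\,dr^\star,
\]
where the boundary term at $r^\star=\pi/2$ vanishes thanks to the decay $u r^{1/2} = o(1)$ (which gives $(r-r_+)|u|^2 = o(1)$ at infinity), and the remaining boundary contribution is non-positive and can be discarded. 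An application of Cauchy--Schwarz then produces
\[
\int \frac{\Delta_-}{r^2+a^2}|u|^2\,dr^\star \leq 2\left(\int |u'|^2\,dr^\star\right)^{1/2} \left(\int (r-r_+)^2|u|^2\,dr^\star\right)^{1/2}.
\]

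The key remaining ingredient will be the pointwise bound
\[
(r-r_+)^2 \leq l^2 \cdot \frac{\Delta_-}{r^2+a^2} \qquad \text{for all } r\geq r_+.
\]
I would prove this using the factorization (\ref{eq:hepo}), $l^2 \Delta_- = (r-r_+)\tilde{P}(r)$ with $\tilde{P}(r) = r^3 + r^2 r_+ + r(r_+^2 + a^2 + l^2) - a^2 l^2/r_+$, which reduces the claim to $(r-r_+)(r^2+a^2) \leq \tilde{P}(r)$. A direct expansion yields
\[
\tilde{P}(r) - (r-r_+)(r^2+a^2) = 2r^2 r_+ + r(r_+^2 + l^2) + \frac{a^2(r_+^2 - l^2)}{r_+},
\]
which is manifestly increasing in $r$, so it suffices to verify non-negativity at $r = r_+$; after multiplication by $r_+ > 0$ this becomes $3r_+^4 + r_+^2(a^2+l^2) \geq a^2 l^2$. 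This follows from the Hawking--Reall condition $r_+^2 > |a|l$, which implies $r_+^4 \geq a^2 l^2$ and hence $3r_+^4 - a^2 l^2 \geq 2 r_+^4 \geq 0$.

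Once the pointwise bound is in place, substituting it into the Cauchy--Schwarz estimate will give $A \leq 2l\sqrt{B}\sqrt{A}$, with $A := \int \frac{\Delta_-}{r^2+a^2}|u|^2\,dr^\star$ and $B := \int |u'|^2\,dr^\star$, which immediately yields $A \leq 4 l^2 B$, i.e.\ the claimed inequality. The main delicate step is the polynomial inequality at $r=r_+$: without the Hawking--Reall condition, the term $a^2(r_+^2-l^2)/r_+$ could be arbitrarily negative, so this condition is genuinely used; all remaining manipulations are routine.
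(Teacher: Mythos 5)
Your proof is correct and follows essentially the same route as the paper: an integration by parts (Hardy) argument, Cauchy--Schwarz, and the pointwise bound $(r-r_+)^2(r^2+a^2)\leq l^2\Delta_-$ derived from the factorization \eqref{eq:hepo} and the Hawking--Reall condition. The only cosmetic difference is your choice of test function $-(r-r_+)$, which produces a non-positive boundary term at $r_{cut}$ that you discard, whereas the paper integrates against $(r-r_{cut})$ so that boundary term vanishes identically and then estimates $(r-r_{cut})^2\leq(r-r_+)^2$.
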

\begin{proof}
Note that the weight on the left is the $r^\star$ derivative of $r$. Integrating by parts and Cauchy-Schwarz yields
\begin{align}
\int_{r=r_{cut}}^\infty \left(r\right)^\prime |u|^2  \ dr^\star \leq 4  \int_{r=r_{cut}}^\infty dr^\star \frac{r^2 +a^2} {\Delta_-} \left(r-r_{cut}\right)^2 |u^\prime |^2 \, .
\end{align}

The Lemma now follows by estimating $r_{cut} \leq r_+$ and using (\ref{eq:hepo}).
\end{proof}
Next recall the multiplier (\ref{mixedM}) written in the form
 \begin{align} \label{niceform}
\left( {Q}^{g,h} \right)^\prime = |u^\prime|^2 \left(g^\prime + h \right) + g^\prime \left(\omega-\omega_+\right)^2 |u|^2 \nonumber \\ + |u|^2 \left[\left(-g\tilde{V} \right)^\prime - \frac{1}{2} h^{\prime \prime} + h \left(V - \omega^2 \right) \right]
+ 2g Re \left(u^\prime \overline{H}\right) + h Re \left(u\overline{H}\right) \, .
 \end{align}
If $\omega=\omega_+$, we could simply choose $g=0$ and $h=1$ to obtain an integrated decay estimate in view of (\ref{vred}), (\ref{eq:global}) and (\ref{helpest}).\footnote{A variant of this argument will be used in Section \ref{noperiodic} to exclude time-periodic solutions.} However, since we only have $|\omega-\omega_+| \leq \delta \sqrt{\mathcal{L}^\star}$, we need to cut-off $h$ and control the errors using the function $g$.

Let $r_{+} < r_1 < r_2 < r_3$ (with corresponding $r^\star_{1,2,3}$) be $r$-values close to the horizon such that $r^\star_3 < 0$ and moreover (cf.~Lemma \ref{lem:Vtilde})
\begin{align} \label{propVr2r3}
\tilde{V} > 0 \, ,  \ \ \  \frac{\tilde{V}^\prime}{\tilde{V}} \geq - \frac{2}{r^\star_2}  \textrm{ \ \ and \ \  } \frac{\tilde{V}^\prime}{\tilde{V}} \geq \frac{1}{r} \frac{4 \Delta_-}{r^2+a^2} \ \  \textrm{hold in $r_+ < r\leq r_2$} \, .
\end{align}
From now on we regard $r_2$ and $r_3$ as being fixed, depending only on the above conditions and hence depending only on $M$, $l$ and $a$ and $\alpha$ in view of Lemma \ref{lem:Vtilde}. The value of $r_1$ will be chosen below.
Define $\underline{g}$ as follows:
\begin{large}
\begin{equation*} \underline{g} \left(r^\star\right) = \left\{
\begin{array}{rl} 
\frac{1}{r^\star_1 \tilde{V} \left(r=r_1\right)} & \text{if } r < r_1,\\ 
\frac{1}{r^\star \cdot \tilde{V}\left(r\right)}  & \text{if } r_1 \leq r < r_2,\\ 
\frac{r^\star - r^\star_3}{\left(r^\star_2 - r^\star_3\right) r^\star_2 \tilde{V} \left(r=r_2\right)} & \text{if } r_2 \leq r < r_3,\\ 
0 & \text{if } r \geq r_3
\end{array} \right. \end{equation*}
\end{large}
and hence
\begin{large}
\begin{equation*} \underline{g}^\prime \left(r^\star\right) = \left\{
\begin{array}{rl} 
0 & \text{if } r < r_1,\\ 
- \frac{1}{r^\star \cdot \tilde{V}\left(r\right)} \left[\frac{\tilde{V}^\prime}{\tilde{V}} + \frac{1}{r^\star}\right]  & \text{if } r_1 \leq r < r_2,\\ 
\frac{1}{\left(r^\star_2 - r^\star_3\right) r^\star_2\tilde{V} \left(r=r_2\right)} & \text{if } r_2 \leq r < r_3,\\ 
0 & \text{if } r \geq r_3.
\end{array} \right. \end{equation*}
\end{large}
Note that $\underline{g}$ is $C^0$ and piecewise $C^1$ (hence the regularity is sufficient for the multiplier identity (\ref{niceform}) to hold), everywhere non-positive and monotonically increasing. 
Define $\underline{h}$ as
\begin{large}
\begin{equation*} \underline{h} \left(r^\star\right) = \left\{
\begin{array}{rl} 
0 & \text{if } r < r_1, \\
\tilde{h}\left(r^\star\right) & \text{if }  r_1 \leq r < r_2, \\
\frac{4}{\left(r^\star_2 - r^\star_3\right) r^\star_2\tilde{V} \left(r=r_2\right)} & \text{if } r \geq r_2.
\end{array} \right. \end{equation*}
\end{large}
where $\tilde{h} \left(r^\star\right)$ is a smooth monotone non-negative function interpolating smoothly between the two constant values at $r=r_1$ and $r=r_2$. 

Using in particular that $\tilde{V}(r_2)$ can be bounded from below independently of $m$, we can choose $r_1$ sufficiently close to the horizon (depending only on $M$, $l$ and $a$, in particular not on $m$) so that there exists a $\tilde{h} \left(r^\star\right)$ satisfying $|\tilde{h}^{\prime \prime} \left(r^\star\right)| < \frac{1}{\left(r^\star\right)^2}$  in $r_1 \leq r < r_2$. This fixes our $\tilde{h}\left(r^\star\right)$ and our $r_1$. 

Finally, with Lemma \ref{lem:Vtilde} in mind, we choose $\delta$ sufficiently small (again depending only on $M$, $l$ and $a$) such that the following holds for frequencies 
 $|\omega - \omega_+| \leq \delta \sqrt{\mathcal{L}^\star}$:
\begin{align}
V - \omega^2 &= \tilde{V} - \left(\omega-\omega_+\right)^2 \geq \frac{\tilde{V}}{2}  \textrm{ \ \ \ in $r_2 \leq r$} \, , \nonumber \\
V - \omega^2 &= \tilde{V} - \left(\omega-\omega_+\right)^2 \geq 0  \textrm{ \ \ \ \ in $r_1 \leq r \leq r_2$} \, .\nonumber
\end{align}
Note that with this choice we have in $r_1 \leq r \leq r_2$ 
\begin{align}
\underline{h}\left(V-\omega^2\right) + \left(-\underline{g} \tilde{V}\right)^\prime - \frac{1}{2}\underline{h}^{\prime \prime} \geq \frac{1}{\left(r^\star\right)^2} - \frac{1}{2 \left(r^\star\right)^2} = \frac{1}{2 \left(r^\star\right)^2} \, ,
\end{align}
while in $r_2 \leq r \leq r_3$ we have
\begin{align}
\underline{h}\left(V-\omega^2\right) + \left(-\underline{g} \tilde{V}\right)^\prime - \frac{1}{2}\underline{h}^{\prime \prime} = \underline{h} \left(\tilde{V} - \left(\omega - \omega_+\right)^2 \right) - \underline{g} \tilde{V}^\prime - \underline{g}^\prime \tilde{V} + 0 \nonumber \\
\geq \left( \frac{\underline{h}}{2} - \underline{g}^\prime \right) \tilde{V} \geq \frac{1}{L} c_{M,l,a}  \frac{\Delta_-}{\left(r^2+a^2\right)^2} \left(m^2 + r^2\right) \, ,
\end{align}
with the $L^{-1}$ appearing from the size of $\underline{h}$.\footnote{Note that near the horizon one has the estimate $| \tilde{V}| \lesssim \Delta_- L$.}
Finally, for $r_3 \leq r < \infty$, $\underline{h}$ is constant. We can apply Lemma \ref{lem:hardyabsorb} with $r_{cut}=r_3$ to the derivative term $\underline{h}|u^\prime|^2$ to gain the zeroth order term which makes (\ref{eq:global}) available. With this, we have, for sufficiently small $\delta$,
\begin{align}
\int_{r=r_3}^{r=\infty} dr^\star \  \underline{h} \left[ \ |u^\prime|^2 + \left(V-\omega^2\right) |u|^2 \right]  \geq \frac{c_{M,l,a}}{L}  \int_{r_3}^\infty dr^\star \   |u|^2 \frac{\Delta_-}{\left(r^2+a^2\right)^2} \left(m^2 + r^2\right) .\nonumber
\end{align} 
Note that this borrowing via Lemma \ref{lem:hardyabsorb} is not necessary if $\alpha<1$.
We summarize the construction in the following
\begin{lemma} \label{prop:im1}
Under any of the conditions in Theorem \ref{theorem1}, there exists an $\delta$ depending only on $M$, $l$ and $a$ such that the following holds for frequencies $|\omega - \omega_+| \leq \delta \sqrt{\mathcal{L}^\star}$:
\begin{align} \label{im1est}
\frac{1}{L} \cdot c_{M,l,a} \Bigg[ \int_{r^\star_2}^\infty |u|^2  \frac{\Delta_-}{\left(r^2+a^2\right)^2} \left(m^2 + r^2\right) dr^\star + \int_{r^\star_1}^{r^\star_2} \frac{1}{\left(r^\star\right)^2} |u|^2 dr^\star \nonumber \\
+ \int_{R^\star_{-\infty}}^{r^\star_1} |u|^2 \frac{\Delta_-}{\left(r^2+a^2\right)^2} \left(m^2 + r^2\right) dr^\star + \boxed{\int_{r^\star_1}^{\infty} |u^\prime|^2 dr^\star} \  \Bigg] \leq -\mathcal{Q}^{\underline{g},\underline{h}} \left(R^\star_{-\infty}\right) \nonumber \\
- \int_{R^\star_{-\infty}}^\infty \Big[ 2\underline{g} Re \left(u^\prime \overline{H}\right) + \underline{h} Re \left(u\overline{H}\right) \Big] dr^\star
\end{align}
\end{lemma}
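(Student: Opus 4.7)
The plan is to apply the mixed Morawetz identity \eqref{niceform} with the specific choices $(g,h)=(\underline{g},\underline{h})$ constructed above, integrate in $r^\star$ from $R^\star_{-\infty}$ to $+\infty$, and then show that in each of the four regions $\{r<r_1\}$, $\{r_1\le r\le r_2\}$, $\{r_2\le r\le r_3\}$ and $\{r\ge r_3\}$ the bulk integrand is bounded below by the corresponding positive contribution appearing on the left-hand side of \eqref{im1est}. First I would note that since $\underline{g}$ vanishes for $r\ge r_3$ and $\underline{h}$ is constant there, one has $(Q^{\underline{g},\underline{h}})(\infty)= \underline{h}_\infty \, \mathrm{Re}(u'\bar u)(\infty)$. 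Using that $u=(r^2+a^2)^{1/2}\Psi$ and that solutions of the class $CH^2_{AdS}$ satisfy the Dirichlet-type boundary condition coming from the BF bound (with a rate depending on $\alpha<9/4$), this boundary term vanishes, so integration simply yields $-\mathcal{Q}^{\underline{g},\underline{h}}(R^\star_{-\infty})+\int_{R^\star_{-\infty}}^\infty (\mathcal{Q}^{\underline{g},\underline{h}})'$, matching the right-hand side of \eqref{im1est}.

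Next I would go region by region. In $\{r<r_1\}$ both $\underline{g}'$ and $\underline{h}$ vanish, so the bulk reduces to $-\underline{g}\,\tilde V'\,|u|^2$; using $|\underline{g}|\sim (|r_1^\star|\tilde V(r_1))^{-1}\gtrsim 1/(|r_1^\star|\,L\,\Delta_-(r_1))$ together with the near-horizon estimate \eqref{es:tvpp}, $\tilde V'\gtrsim \Delta_-(1+\mathcal{L})$, one obtains a bulk bound of the required form $L^{-1}\,\Delta_-/(r^2+a^2)^2\,(m^2+r^2)|u|^2$ in that strip. In the middle region $\{r_1\le r\le r_2\}$, the specific definition of $\underline{g}$ is designed so that $-(\underline{g}\tilde V)'=1/(r^\star)^2$; combined with the uniform bound $|\underline{h}''|<1/(r^\star)^2$ and the fact that $\underline{h}(V-\omega^2)\ge 0$ there (using $V-\omega^2=\tilde V-(\omega-\omega_+)^2\ge 0$ by Lemma \ref{lem:Vtilde} and small $\delta$), this produces the integrated term $\tfrac12(r^\star)^{-2}|u|^2$. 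The coefficient $\underline{g}'(\omega-\omega_+)^2|u|^2$ is the only one of uncertain sign here, but since $|\omega-\omega_+|^2\le\delta^2\mathcal{L}^\star$ and $|\underline{g}'|\lesssim (r^\star)^{-2}/\tilde V$ while $\tilde V\gtrsim \mathcal{L}^\star$, this contribution is absorbed by the good term upon choosing $\delta$ small. In $\{r_2\le r\le r_3\}$ we have $\underline{g}'>0$ and $\underline{h}\gtrsim L^{-1}$, so that $(-\underline{g}\tilde V)'\ge (\underline{h}/2-\underline{g}')\tilde V$ combined with $\tilde V\gtrsim \Delta_-(1+\mathcal{L})$ gives $L^{-1}\Delta_-/(r^2+a^2)^2\,(m^2+r^2)|u|^2$. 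Finally in $\{r\ge r_3\}$, $\underline{g}=0$ and $\underline{h}$ is constant, so the bulk equals $\underline{h}[|u'|^2+(V-\omega^2)|u|^2]$; in the regime $\alpha<1$, Lemma \ref{lem:Vplus} gives $V-\omega^2=\tilde V-(\omega-\omega_+)^2\ge c\,\Delta_-/(r^2+a^2)^2\,(m^2+r^2)$ directly.

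The step I expect to be the main obstacle is the far region analysis when $1\le\alpha<9/4$, because $V_\alpha$ then enters with the wrong sign and destroys pointwise positivity of $V-\omega^2$. Here the plan is to \emph{borrow} a small portion (say a quarter) of the derivative term $\underline{h}|u'|^2$ via the Hardy inequality of Lemma \ref{lem:hardyabsorb} applied with $r_{cut}=r_3$, which supplies precisely the extra $\tfrac{1}{4l^2}\Delta_-/(r^2+a^2)|u|^2$ needed to render the combination $\tilde V+\tfrac{1}{4l^2}\Delta_-/(r^2+a^2)$ non-negative globally (estimate \eqref{eq:global}). The remaining three quarters of $\underline{h}|u'|^2$ then survives as the boxed $|u'|^2$ contribution in \eqref{im1est}. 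Collecting all four regions, bringing the inhomogeneous contributions $2\underline{g}\,\mathrm{Re}(u'\bar H)+\underline{h}\,\mathrm{Re}(u\bar H)$ to the right-hand side, and noting that the smallness of $\delta$ used throughout depends only on $M$, $l$, $a$ and $\alpha$, yields \eqref{im1est}.
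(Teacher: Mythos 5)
Your proposal follows the same high-level route as the paper: apply the mixed Morawetz current $Q^{\underline{g},\underline{h}}$ with the constructed $(\underline{g},\underline{h})$, integrate over $r^\star$, and verify the positivity of the bulk region by region, invoking Lemma \ref{lem:hardyabsorb} to deal with the mass term for $\alpha\ge 1$. The regional bookkeeping in $\{r<r_1\}$, $\{r_1\le r\le r_2\}$ and $\{r_2\le r\le r_3\}$ matches the paper's up to minor imprecisions, and your treatment of the boundary term at $R^\star_\infty$ is fine.

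There are, however, two points to flag. The first is minor but worth correcting: in $\{r_1\le r\le r_2\}$ you identify $\underline{g}'(\omega-\omega_+)^2|u|^2$ as "the only one of uncertain sign" and propose absorbing it. In fact $\underline{g}$ is everywhere non-positive and monotonically increasing, so $\underline{g}'\geq 0$ and that term is a \emph{good} (nonnegative) term. The paper simply discards it; the term that really requires $\delta$ small in that region is $\underline{h}(V-\omega^2)=\underline{h}\left[\tilde V - (\omega-\omega_+)^2\right]$, whose non-negativity is exactly what the choice of $\delta$ guarantees. You do note this separately, so the argument still closes, but the stated reason for needing $\delta$ small is misattributed.

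The second point is a genuine gap, in the far region $\{r\ge r_3\}$ when $\alpha\ge 1$. You propose to "borrow a quarter" of $\underline{h}|u'|^2$ in Lemma \ref{lem:hardyabsorb} and retain three quarters as the boxed derivative term. But Lemma \ref{lem:hardyabsorb} gives $\tfrac{1}{4l^2}\int \frac{\Delta_-}{r^2+a^2}|u|^2 \le \int |u'|^2$ with the sharp constant $\tfrac{1}{4l^2}$, and estimate \eqref{eq:global} needs precisely this coefficient, not a fraction of it. Using only a quarter of $\underline{h}|u'|^2$ would yield the coefficient $\tfrac{1}{16 l^2}$, which is insufficient near $\alpha=9/4$: there $\tilde V \sim \tfrac{2-\alpha}{l^4}r^2$ near infinity with $2-\alpha$ approaching $-1/4$, so one must cancel a negative leading term of size $\sim r^2/(4l^4)$, and $\tfrac{3}{16l^2}\cdot\tfrac{\Delta_-}{r^2+a^2}\sim \tfrac{3r^2}{16 l^4}$ falls short. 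The paper acknowledges exactly this: for the larger mass range one first obtains \eqref{im1est} \emph{without} the boxed term (the whole $|u'|^2$ having been consumed by the Hardy inequality), and then \emph{repeats} the multiplier estimate once more to recover control of the derivative; the $|u|^2$ terms appearing in the second pass are already controlled from the first. Your fixed-fraction scheme does not cover the full $\alpha<9/4$ range admitted by Theorem \ref{theorem1}.
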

Here we have used that $\mathcal{Q}^{\underline{g},\underline{h}} \left(r^\star_{\infty}\right)=0$ in view of the decay of $u$ and $u^\prime$, cf.~(\ref{udecrates}). The factor of $L^{-1}$ appears since the functions $\underline{g}$ and $\underline{h}$ scale with $L^{-1}$. Note also that for $\alpha=2$ one first obtains this estimate without the boxed term, since it was used entirely when applying the Hardy inequality of Lemma \ref{lem:hardyabsorb}. However, one can simply repeat the estimate once more to regain control of the derivative term. 

The $\underline{h}$-error-term in the last line of (\ref{im1est}) is supported only for $r\geq r_1$ and easily absorbed on the left using Cauchy's inequality
\begin{align} \label{ch}
| \underline{h} Re \left(u\overline{H}\right) | \leq \frac{\epsilon}{L} \frac{\Delta_-}{\left(r^2+a^2\right)^2} |u|^2 + L \frac{|\underline{h}_{max}|^2}{\epsilon} \frac{\left(r^2+a^2\right)^2}{\Delta_-} |H|^2 \, .
\end{align}
As we will see later, the last term in (\ref{ch}) can be controlled from the boundedness statement.
The $\underline{g}$-error in (\ref{im1est}) is more problematic, since we do not have control over the derivative $u^\prime$ in the region $r \leq r_1$ on the left hand side. We need an auxiliary estimate. Choose $g= -r^{-2}$ and $h=0$ in (\ref{niceform}). Using again the properties of $\tilde{V}$ (the last of (\ref{propVr2r3})),  we obtain an estimate of the form
\begin{align}
c_{M,l,a} \int_{R^\star_{-\infty}}^{r^\star_2} \left( |u^\prime|^2 + \left(1+ \mathcal{L}\right) |u|^2 \right)  \frac{\Delta_-}{\left(r^2+a^2\right)^2} dr^\star \leq  -\mathcal{Q}^{-r^{-2},0} \left(R^\star_{-\infty}\right)   \nonumber \\
\leq C_{M,l,a} \int_{r^\star_{2}}^{\infty} |u|^2 \frac{\Delta_-^2}{\left(r^2+a^2\right)^4} \frac{1}{r} \left(L + r^2\right) dr^\star 
+ C_{M,l,a} \int_{R^\star_{-\infty}}^\infty  \frac{1}{r^2} | Re \left(u^\prime \overline{H}\right)| dr^\star \nonumber
\end{align}
Combining this estimate with the estimate of Proposition \ref{prop:im1}
 we finally find 

\begin{proposition}Under the condition of Lemma \ref{prop:im1}, we have
\begin{align} \label{mainnonstat}
 \Bigg[ \int_{R^\star_{-\infty}}^\infty  \frac{\Delta_-}{\left(r^2+a^2\right)^2} \left[  |u^\prime |^2 + |u|^2 \left(m^2 + r^2 + \lambda_{m\ell} + a^2 \omega^2 \right)   \right] dr^\star  \lesssim \\
-L^2 \cdot \mathcal{Q}^{\underline{g},\underline{h}} \left(R^\star_{-\infty}\right) - L^2 \cdot \mathcal{Q}^{-r^{-2},0} \left(R^\star_{-\infty}\right)
+ L^3  \int_{R^\star_{-\infty}}^\infty dr^\star \frac{\left(r^2+a^2\right)^2}{\Delta_-} |H|^2 \, , \nonumber
\end{align}
where $u=\sqrt{r^2+a^2} \psi^{(a\omega)}_{m \ell}$ denotes the Fourier-separated components of $\psi^\tau_{\flat, \omega \approx \omega_+}$.
\end{proposition}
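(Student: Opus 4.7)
The plan is to combine the estimate of Lemma \ref{prop:im1} with the auxiliary estimate sketched in the paragraph immediately above the proposition (obtained from the multiplier $Q^{g,h}$ with $g=-r^{-2}$, $h=0$), and then absorb all error terms. Lemma \ref{prop:im1} already controls $|u|^2$ weighted by $\frac{\Delta_-}{(r^2+a^2)^2}(m^2+r^2)$ on all of $[R^\star_{-\infty},\infty)$ and controls $|u'|^2$ on $[r_1^\star,\infty)$, but it lacks control of $|u'|^2$ and of the angular term $\lambda_{m\ell}+a^2\omega^2$ in the near-horizon region $r\in[r_+,r_1]$. Precisely those missing quantities are what the auxiliary multiplier produces.

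First I would multiply the Lemma \ref{prop:im1} estimate through by $L$ so that the good bulk terms appear with a constant depending only on $M,l,a,\alpha$, at the cost of producing $-L\cdot\mathcal{Q}^{\underline g,\underline h}(R^\star_{-\infty})$ on the right and error terms $L\int 2\underline g\,\mathrm{Re}(u'\bar H)$ and $L\int \underline h\,\mathrm{Re}(u\bar H)$. Since $\underline g$ vanishes on $\{r<r_1\}$, the $\underline g$-error is supported in $\{r\geq r_1\}$ where the boxed derivative term of Lemma \ref{prop:im1} provides $L^{-1}\int_{r_1^\star}^\infty|u'|^2\,dr^\star$; one half of this is kept on the left, while a Cauchy-Schwarz yields an $H$-term bounded by $L^{2}\|\underline g\|_\infty^2\int\frac{(r^2+a^2)^2}{\Delta_-}|H|^2\,dr^\star$, and similarly the $\underline h$-error is handled using (\ref{ch}), both contributions being absorbed up to a term of the form $L\int\frac{(r^2+a^2)^2}{\Delta_-}|H|^2\,dr^\star$ (since $\|\underline g\|_\infty,\|\underline h\|_\infty\lesssim L^{-1}$).

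Next I would use the auxiliary estimate
\begin{align*}
c_{M,l,a}\!\int_{R^\star_{-\infty}}^{r_2^\star}\!\!\left(|u'|^2+(1+\mathcal{L})|u|^2\right)\frac{\Delta_-}{(r^2+a^2)^2}\,dr^\star &\leq -\mathcal{Q}^{-r^{-2},0}(R^\star_{-\infty})\\
&\quad + C_{M,l,a}\!\!\int_{r_2^\star}^\infty\!\! |u|^2\frac{\Delta_-^2}{(r^2+a^2)^4}\frac{L+r^2}{r}\,dr^\star\\
&\quad +C_{M,l,a}\!\!\int_{R^\star_{-\infty}}^\infty\!\frac{1}{r^2}|\mathrm{Re}(u'\bar H)|\,dr^\star
\end{align*}
multiplied by $L^2$. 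The spacetime term on $[r_2^\star,\infty)$ is dominated (since $r\geq r_+$ and $\mathcal{L}\lesssim L$) by a constant multiple of $\int_{r_2^\star}^\infty|u|^2\frac{\Delta_-}{(r^2+a^2)^2}(m^2+r^2)\,dr^\star$, which has already been produced (with a good constant) from the rescaled Lemma \ref{prop:im1}; taking $L$ large enough, or simply noting the constant is absolute, this term is absorbed on the left. The remaining derivative-error is split by Cauchy-Schwarz into a bulk $|u'|^2\frac{\Delta_-}{(r^2+a^2)^2}$-contribution (absorbed into the new $|u'|^2$-term on the left) and a residual $H$-term bounded by a multiple of $\int\frac{(r^2+a^2)^2}{\Delta_-}|H|^2\,dr^\star$.

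The main technical obstacle, in my view, is the careful bookkeeping of the $L$-weights: Lemma \ref{prop:im1} is degenerate in the sense that it loses a factor of $L^{-1}$ on the left hand side (since $\underline g,\underline h$ themselves scale like $L^{-1}$), so to recover the unweighted bulk quantities appearing on the left side of \eqref{mainnonstat} one must pay a factor $L$, and a further factor is paid in the auxiliary multiplier step to absorb its error onto the left of the combined estimate; this is where the $L^{2}$ in front of the boundary terms and the $L^{3}$ in front of the $H$-integral on the right of \eqref{mainnonstat} originate. Once the weights are tracked and the absorption carried out consistently on $[R^\star_{-\infty},r_1^\star]$, $[r_1^\star,r_2^\star]$, and $[r_2^\star,\infty)$ separately, the stated inequality follows.
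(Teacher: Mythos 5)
Your overall strategy (multiply Lemma \ref{prop:im1} by a power of $L$, add a suitable multiple of the auxiliary $Q^{-r^{-2},0}$-estimate, absorb the $H$-errors by Cauchy--Schwarz) is the paper's strategy. But there is a concrete factual error in your Step~2 that creates a real gap.

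You assert ``Since $\underline g$ vanishes on $\{r<r_1\}$, the $\underline g$-error is supported in $\{r\geq r_1\}$\ldots''. That is false: by its definition, $\underline g(r^\star)=\frac{1}{r^\star_1 \tilde V(r=r_1)}\neq 0$ for $r<r_1$ (it is $\underline h$, not $\underline g$, that vanishes there). Consequently the $\underline g$-error $\int 2\underline g\,\mathrm{Re}(u'\bar H)\,dr^\star$ is genuinely supported on $[R^\star_{-\infty},r_3^\star]$, including the region $r<r_1$ where the left side of \eqref{im1est} provides no control of $|u'|^2$ (the boxed term only covers $[r_1^\star,\infty)$). So you cannot absorb the $\underline g$-error ``before'' invoking the auxiliary estimate, which is exactly what the paper flags: ``The $\underline g$-error in (\ref{im1est}) is more problematic, since we do not have control over the derivative $u'$ in the region $r\le r_1$ on the left hand side. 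We need an auxiliary estimate.'' The sole purpose of $Q^{-r^{-2},0}$ is to supply $\int_{R^\star_{-\infty}}^{r_2^\star}|u'|^2\frac{\Delta_-}{(r^2+a^2)^2}\,dr^\star$ so that the $\underline g$-error near the horizon can then be absorbed via Cauchy--Schwarz. Your write-up uses the auxiliary estimate only to ``top up'' the left-hand side, after you have already (incorrectly) declared the $\underline g$-error dealt with.

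The fix is simply to delay the Cauchy--Schwarz on the $\underline g$-error until after you have added $L^2$ times the auxiliary estimate: on $r<r_1$, bound $|L\underline g\,\mathrm{Re}(u'\bar H)|\le \epsilon\,\frac{\Delta_-}{(r^2+a^2)^2}|u'|^2 + \epsilon^{-1}L^2\underline g^2\frac{(r^2+a^2)^2}{\Delta_-}|H|^2$ and absorb the first piece into the newly-gained auxiliary bulk term. With this correction your $L$-bookkeeping (intermediate cost $L$, auxiliary cost $L^2$, both $\le L^3$) is fine and the argument closes as in the paper.
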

\begin{remark} The dependence on $L$ could possibly be removed using a more refined analysis. However, since we are going to lose exponentially in $L$ in the estimate for $\psi_{\flat, \omega \not\approx \omega_+}^{\tau}$, we content ourselves with this rather crude estimate.
\end{remark}
\section{Microlocal estimates for $\psi_{\flat, \omega \not\approx \omega_+}^{\tau}$} \label{phiflatt}
In this section, we shall prove an integrated decay estimate for $\psi_{\flat, \omega \not\approx \omega_+}^{\tau}$. Recall that this part of the solution satisfies in addition to $\mathcal{L}<2L$ the estimate
\[
|\omega-\omega_+| \geq \frac{\delta}{2} \sqrt{\mathcal{L}^\star} \, .
\] 
We will write
$\Psi$ for the separated Fourier-components of $\psi_{\flat, \omega \not\approx \omega_+}^{\tau}$ and as usual, $u = \sqrt{r^2+a^2} \Psi$.

To keep track of the various parameters involved in this section, we note already at this stage that we will, depending on the size of the mass $\alpha$, first choose numbers $n_2$ and $n_1$.  Finally, depending on $\delta$ (\emph{fixed} already in the previous section and depending only on the black hole parameters $M$, $l$, $a$), the constants $n_1, n_2$, the black hole parameters and the cut-off constant $L$, we will choose a large constant $k = C_{M,l,a,\alpha, n_1,n_2} \cdot \sqrt{L}$. 

\subsection{Preliminaries}
Key to our argument in this frequency range is the use of exponentially weighted multipliers, which we introduce now.

\subsubsection{Exponentially weighted multipliers}
Let $x=\arctan \left(\frac{r}{l}\right)$. Note $c_{M,l,a} l  \leq r^2  \cdot x_r = l \left[1- \frac{1}{1+\frac{r^2}{l^2}}\right] \leq l $ and hence that the expression $r^2 x_r$ is essentially like a constant bounded away from zero and infinity. We define 
\begin{align}
f_n = \frac{1}{r^n} \exp \left(-k\arctan \left(\frac{r}{l}\right)\right) = \frac{1}{r^n} e^{-kx} \, ,
\end{align}
\begin{align}
f_n^\prime = \left(\frac{-n}{r^{n+1}} - \frac{k \ x_r}{r^n}\right)\frac{\Delta_-}{r^2+a^2} e^{-kx} =  \left(\frac{-n}{r} - k \ x_r\right) \frac{\Delta_-}{r^2+a^2} f_n = - \sigma f_n \, , \nonumber
\end{align}
where
\begin{align}
\sigma = \left(\frac{n}{r} +  k \ \frac{(x_r r^2)}{r^2} \right) \frac{\Delta_-}{r^2+a^2} \, .
\end{align}
With this we compute
\begin{align}
 f_n^{\prime \prime} = \left(\sigma^2-\sigma^\prime \right) f_n  \ \ \ \ \ \textrm{and} \ \ \ \ f_n^{\prime \prime \prime} = -\left(\sigma^3 - 3 \sigma \sigma^\prime + \sigma^{\prime \prime}\right) f_n
\end{align}
Differentiating $\sigma$, we find 
\[
| \sigma^\prime| \leq \frac{\Delta_-}{r^2+a^2} \left[ \frac{n}{l^2} + C_{M,l,a}  \left(\frac{n}{r^2} +  k \ \frac{(x_r r^2)}{r} \right) \right] \,
\]
\[
|\sigma^{\prime \prime}| \leq \frac{2}{r} \left(\frac{r^2}{l^2} + \frac{M}{r} \right) | \sigma^\prime| + C_{M,l,a} \left(\frac{\Delta_-}{r^2+a^2}\right)^2  \left[ \frac{n}{r^3} +  k \ \frac{(x_r r^2)}{r^2} \right] \, .
\]
It follows that away from the horizon, in $r\geq 3M$, we have for sufficiently large $n$ and $k$ (depending only on $M, l, a$) the bound
\begin{align} \label{outside}
\sigma^3 - 3 \sigma \sigma^\prime + \sigma^{\prime \prime} \gtrsim r^3 \left[ n^3 + \frac{k n^2 (x_r r^2)}{r} + \frac{n k^2(x_r r^2)^2}{r^2} + \frac{k^3(x_r r^2)^3}{r^3} \right] .
\end{align}
To see this, note that this estimate holds for $\sigma^3$ and absorb the other terms using the estimates for $|\sigma^\prime|$ and $|\sigma^{\prime \prime}|$ above, recalling that $\frac{\Delta_-}{r^2+a^2} \sim r^2$ in this region.

For $r\leq 3M$ the expression on the left hand side of (\ref{outside}) is not positive everywhere. However, we can find a constant $C_{M,l,a}$ such that 
\begin{align} \label{inside}
\sigma^3 - 3 \sigma \sigma^\prime + \sigma^{\prime \prime} + C_{M,l,a} \cdot \mathcal{L}^\star \cdot \sigma \gtrsim \left( k^3\Delta_-^3 + k^2 \Delta_-^2 + k \cdot \mathcal{L}^\star \cdot \Delta_-\right) 
\end{align}
holds in $ r\leq 3M$, again provided $k$ is chosen sufficiently large.\footnote{The reason $\mathcal{L}^\star$ appears here will become clear in the following Lemma. The estimate also holds for $\mathcal{L}^\star=1$.} To see this, note that the estimate holds for $\sigma^3 + C_{M,l,a}  \mathcal{L}^\star \cdot \sigma$ and that we can make the terms $k^2 \Delta_-^2$ and $k \Delta_- \mathcal{L}^\star$ large by adding more of $C_{M,l,a} \cdot \mathcal{L}^\star \cdot \sigma$. Since $r$-weights are irrelevant in $r\leq 3M$, we have $|\sigma \sigma^\prime| + |\sigma^{\prime \prime}| \lesssim (k+n)^2 \Delta_-^2 + (k+n) \Delta_-$ and hence these terms can be absorbed by $\sigma^3$ and the $C_{M,l,a} \cdot \mathcal{L}^\star \cdot \sigma$-term. We apply the same reasoning in the following Lemma, which has obviously been tailored to be applicable to the derivative of the multiplier (\ref{fcurrent}) later.
\begin{lemma} \label{bulklemma}
If $f=f_n$ we have, for sufficiently large $n$ and $k$ (in particular $k \geq C_{M,l} \sqrt{L} \geq C_{M,l} \left(\lambda_{m \ell} + a^2\omega^2\right)$) and for all $r \geq r_+$ the estimate
\begin{align}
 - \left( f V^\prime +  \frac{1}{2} f^{\prime \prime \prime} \right)
 \geq - C_{M,l,a,\alpha} \cdot k \cdot \mathcal{L}^\star \cdot \left(\frac{\Delta_-}{r^2+a^2} \right) \frac{e^{-kx} }{r^n}\chi_{\{r \leq 3M\}}   \nonumber \\ +  c_{M,l,a, \alpha} \left[ k^2 \left(\frac{\Delta_-}{\left(r^2+a^2\right)^2} \right)^2 \frac{1}{r^{n-1}} + k^3 \left(\frac{\Delta_-}{\left(r^2+a^2\right)^2} \right)^3 \frac{1}{r^{n}} \right]  e^{-kx}  \nonumber \, .
\end{align}
\end{lemma}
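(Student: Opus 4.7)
The plan is to reduce the lemma to the two estimates (\ref{outside}) and (\ref{inside}) on $\sigma^3 - 3\sigma\sigma' + \sigma''$ derived immediately above its statement. The first step is to use the identity $-\tfrac{1}{2}f_n''' = \tfrac{1}{2}(\sigma^3 - 3\sigma\sigma' + \sigma'')f_n$ recorded in the preamble, so that the two positive $k^2$- and $k^3$-weighted terms on the right-hand side of the lemma will come entirely from $-\tfrac{1}{2}f_n'''$. Everything else amounts to (i) controlling $|fV'|$ and (ii) tracking which region contributes which piece.

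For $|fV'|$ I would split $V = V_+ + V_0 + V_\alpha$ and exploit the fact that the primed derivative is with respect to $r^*$, so every $V_j'$ carries an explicit prefactor $\Delta_-/(r^2+a^2)$ coming from $d/dr^* = (\Delta_-/(r^2+a^2))\,d/dr$. Reading off (\ref{eq:v+})--(\ref{eq:va}), one sees that $|\partial_r(V_+ + V_\alpha)|$ is bounded by a constant depending only on $M, l, a, \alpha$ (with favourable extra $r$-decay at infinity), while $|\partial_r V_0|$ is bounded by $C\,\mathcal{L}^\star$ — the $\mathcal{L}^\star$ factor from $V_0$ being the only dangerous contribution. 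Altogether this yields the pointwise bound
\[
|fV'| \;\leq\; C_{M,l,a,\alpha}\,(1+\mathcal{L}^\star)\,\frac{\Delta_-}{r^2+a^2}\, f_n .
\]

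In the outer region $r \geq 3M$, I would invoke (\ref{outside}) and translate its right-hand side into the weights of the lemma using the comparisons $x_r r^2 \sim_{M,l,a} 1$ and $\Delta_-/(r^2+a^2) \sim_{M,l,a} 1 + r^2/l^2$ valid on $\{r \geq 3M\}$, recovering the two positive $k^2$- and $k^3$-weighted terms claimed in the statement. The hypothesis $k \geq C_{M,l}\sqrt{L} \geq C_{M,l}\sqrt{\mathcal{L}^\star}$ ensures $k^2 \geq C \mathcal{L}^\star$, so that the $k^2$-weighted positive term absorbs $|fV'|$ (the $r$-weights work out in our favour since $r \geq 3M$). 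In the inner region $r_+ \leq r \leq 3M$, I would invoke (\ref{inside}), which rearranges to
\[
-\tfrac{1}{2}f_n''' \;\geq\; c\,(k^3 \Delta_-^3 + k^2 \Delta_-^2)\, f_n \;-\; C_{M,l,a}\,\mathcal{L}^\star \sigma\, f_n .
\]
On this compact range of $r$ one has $\sigma \leq C_{M,l,a}(k+n)\,\Delta_-/(r^2+a^2)$, so the error $-C\,\mathcal{L}^\star \sigma f_n$ is precisely of the form $-C\, k\, \mathcal{L}^\star\, \Delta_-/(r^2+a^2)\, f_n$ allowed in the statement. The residual $|fV'|$, bounded by the same expression \emph{without} the factor of $k$, is absorbed into this error for $k$ large.

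\textbf{Main obstacle.} There is no conceptual difficulty — the argument is a piece of bookkeeping. The most delicate point is verifying in the outer region that the $r$-weights extracted from (\ref{outside}) via $\sigma^j \sim (n/r + k x_r r^2/r^2)^j\, (\Delta_-/(r^2+a^2))^j$ genuinely match $(\Delta_-/(r^2+a^2)^2)^j / r^{n-j+1}$ uniformly across the transition at $r = 3M$ and out to infinity; this must be done carefully so as to avoid a spurious loss in powers of $r$, but it follows from routine manipulations using the crude comparisons noted above and the smoothing effect of the exponential weight $e^{-kx}$ once $k$ is large.
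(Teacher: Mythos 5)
There is a genuine gap in your outer-region argument, rooted in a mis-estimate of the potential. Reading off \eqref{Vplus} and \eqref{eq:va}, the sum $V_+ + V_\alpha$ behaves like $\tfrac{2-\alpha}{l^4}r^2$ as $r\to\infty$ (as the paper notes explicitly just after \eqref{eq:va}), so $|\partial_r(V_+ + V_\alpha)|$ grows \emph{linearly} in $r$ — it is not ``bounded by a constant with favourable extra $r$-decay.'' Correctly, one has $|f\,V'| \lesssim \tfrac{\Delta_-}{(r^2+a^2)^2}\big((|\alpha|+1)r^3 + r^{-1}\mathcal{L}^\star\big)\tfrac{e^{-kx}}{r^n}$, whose $\alpha$-piece carries a full $r^3$ weight. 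Your proposal is to absorb all of $|fV'|$ by the $k^2$-weighted positive term $k^2\bigl(\tfrac{\Delta_-}{(r^2+a^2)^2}\bigr)^2 r^{1-n}e^{-kx}$ via $k^2 \gtrsim \mathcal{L}^\star$. But that term behaves like $k^2 r^{1-n}e^{-kx}$ for large $r$, whereas the $\alpha$-piece of the error behaves like $r^{3-n}e^{-kx}$; the ratio grows like $r^2/k^2$, which is unbounded for $r\to\infty$ with $k$ fixed. This absorption therefore fails, even if one grants your (too optimistic) bound of $r^{2-n}e^{-kx}$ for the error.

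The missing ingredient is the $n^3 r^3$ term on the right-hand side of \eqref{outside}, which you drop. That term has precisely the same $r^3$ weight as the $(|\alpha|+1)$-part of $|fV'|$ and dominates it once $n$ is chosen sufficiently large \emph{depending on $\alpha$}; this is exactly why the lemma's hypothesis requires $n$ large and why the constants depend on $\alpha$. Your argument, by contrast, uses only largeness of $k$, which is the right mechanism for the $\mathcal{L}^\star$-error but not for the $\alpha$-error. Relatedly, your bound $|\partial_r V_0|\lesssim \mathcal{L}^\star$ is true but not sharp (the correct large-$r$ behaviour is $\sim \mathcal{L}^\star/r^3$, coming from the cancellation of the leading terms in $V_0 \sim \Delta_-\mathcal{L}/(r^2+a^2)^2$); with the non-sharp bound the $V_0$-error also acquires extra powers of $r$ that the $k^2$-term cannot handle. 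Your inner-region treatment — invoking \eqref{inside}, identifying $\mathcal{L}^\star\sigma\sim k\mathcal{L}^\star\Delta_-$ near the horizon, and absorbing the residual $|fV'|$ for $k$ large — is correct and matches the paper.
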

\begin{remark}
Note that the first term on the right hand side is localized. 
\end{remark}
\begin{proof}
For $-f^{\prime \prime \prime}$ the estimate holds from (\ref{inside}) and (\ref{outside}). For the potential we have
\begin{align}
|f V^\prime| \leq  \frac{\Delta_-}{\left(r^2+a^2\right)^2} C_{M,l,a} \left( \left( |\alpha| +  1\right) r^3 +  {r}^{-1} \mathcal{L}^\star \right) \ \frac{e^{-kx}}{r^n} \textrm{ \ \  \ for all $r \geq r_+$.} \nonumber
\end{align}
Comparing with (\ref{outside}) and (\ref{inside}), it follows that the $\mathcal{L}^\star$-term can be absorbed by the $k \cdot\mathcal{L}^\star$ term near the horizon and by the good $k^3 \Delta_-^3$-term away from it for sufficiently large $n$ and $k$. The $\left(|\alpha|+1\right)$-term, on the other hand, is absorbed by the $n^3$-term in (\ref{outside}) for sufficiently large $n$.
\end{proof}

\subsubsection{An exponentially weighted Hardy inequality}
A second key tool is the use of Hardy inequalities with exponential weights.
\begin{lemma} \label{lem:hardy1}
Let $\beta \in [0,1]$ and $u\left(r^\star\right)$ be a smooth, bounded, complex valued function. Then, for positive numbers $k$ and $n\geq 1$, we have the Hardy inequality
\begin{align}
\int_{r=5M}^{\infty} \frac{e^{-kx}}{r^{n-1}}  \left(n+\frac{k \left(x_r r^2\right)}{r}\right) |u|^2 dr^\star \nonumber \\ \leq 16 l^4 \int_{r=4M}^{\infty} \frac{e^{-kx} }{r^{n+1}} \left(\frac{1-\beta}{n} + \beta \frac{r}{k \left(x_r r^2\right)} \right) |u^\prime|^2 dr^\star \nonumber \\ + C_{M,l,a} \int_{r=4M}^{r=5M} \frac{e^{-kx}}{r^{n-1}} \left(n+\frac{k \left(x_r r^2\right)}{r}\right) |u|^2  dr^\star \nonumber \, .
\end{align}
\end{lemma}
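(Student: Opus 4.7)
The inequality is a standard weighted Hardy estimate, proven by integration by parts against the primitive of a well-chosen weight, followed by a Cauchy--Schwarz absorption and an averaging trick that converts the resulting pointwise boundary term into an integral over $[4M,5M]$. The $\beta$-dependent coefficient on the right-hand side comes from the elementary interpolation
\[
\frac{1}{a+b}\;\le\;\frac{1-\beta}{a}+\frac{\beta}{b},\qquad a,b>0,\ \beta\in[0,1],
\]
applied with $a=n$ and $b=k\,x_r r$.

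Set $B(r):=l^{2}\,r^{-n}\,e^{-kx}$ and denote the left-hand weight by $W(r):=e^{-kx}r^{-(n-1)}(n+k x_r r)$. Using $dr/dr^{\star}=\Delta_-/(r^{2}+a^{2})$,
\[
-B'(r^{\star})=\frac{l^{2}\,e^{-kx}}{r^{n+1}}(n+k x_r r)\,\frac{\Delta_-}{r^{2}+a^{2}},
\qquad
\frac{-B'(r^{\star})}{W(r)}=\frac{l^{2}}{r^{2}}\cdot\frac{\Delta_-}{r^{2}+a^{2}}.
\]
Since $2M/r\le 1/2$ for $r\ge 4M$, one has $\Delta_-/(r^{2}+a^{2})\ge\tfrac12+r^{2}/l^{2}$, whence $-B'\ge W\ge 0$ on $[4M,\infty)$.

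Fix $r_{0}^{\star}\in[r^{\star}(4M),r^{\star}(5M)]$. The decay $B(r)|u(r)|^{2}\to 0$ as $r\to\infty$ (from the $r^{-n}$ factor in $B$ and the boundedness of $u$), together with $\tfrac{d}{dr^{\star}}(B|u|^{2})=B'(r^{\star})|u|^{2}+2B\,\mathrm{Re}(u'\bar u)$, gives
\[
\int_{r_{0}^{\star}}^{\infty}(-B')|u|^{2}\,dr^{\star}=B(r_{0})|u(r_{0})|^{2}+2\int_{r_{0}^{\star}}^{\infty}B\,\mathrm{Re}(u'\bar u)\,dr^{\star}.
\]
Applying $2ab\le\tfrac12 a^{2}+2b^{2}$ with $a=\sqrt{-B'}\,|u|$ and $b=B|u'|/\sqrt{-B'}$ absorbs half of the left-hand side; using $-B'\ge W$ and $B^{2}/(-B')\le B^{2}/W$, together with the elementary interpolation applied to $B^{2}/W=l^{4}e^{-kx}/(r^{n+1}(n+kx_r r))$, yields
\[
\tfrac{1}{2}\!\int_{r_{0}^{\star}}^{\infty}\!W|u|^{2}\,dr^{\star}\le B(r_{0})|u(r_{0})|^{2}+2l^{4}\!\int_{r_{0}^{\star}}^{\infty}\!\frac{e^{-kx}}{r^{n+1}}\Bigl(\tfrac{1-\beta}{n}+\tfrac{\beta\, r}{k(x_r r^{2})}\Bigr)|u'|^{2}\,dr^{\star}.
\]

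It remains to trade $B(r_{0})|u(r_{0})|^{2}$ for the stated boundary integral. Average the last display over $r_{0}^{\star}\in[r^{\star}(4M),r^{\star}(5M)]$, a fixed interval of length $L_{0}=L_{0}(M,l,a)>0$: the integrated left-hand side bounds $\tfrac{L_{0}}{2}\int_{r\ge 5M}W|u|^{2}\,dr^{\star}$ from below; the ratio $2B(r_{0})/(L_{0}W(r_{0}))=2l^{2}/(L_{0}\,r_{0}(n+kx_r r_{0}))$ is uniformly bounded by some $C_{M,l,a}$ for $r_{0}\in[4M,5M]$ and $n\ge 1$; and the $|u'|^{2}$ integral acquires an extra factor $\le L_{0}$. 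Dividing by $L_{0}/2$ produces the inequality with constants $16l^{4}$ and $C_{M,l,a}$. No step is delicate; the only point worth verifying is the lower bound $-B'\ge W$ down to $r=4M$, which is exactly why the auxiliary interval in the hypothesis starts there rather than closer to the horizon.
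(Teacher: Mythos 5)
Your proof is correct, and it reaches the same inequality via the same core mechanism as the paper: integration by parts against (essentially) the primitive $-e^{-kx}/r^n$, a Cauchy--Schwarz absorption of half the left-hand side, the elementary interpolation $(a+b)^{-1}\le (1-\beta)/a+\beta/b$ to produce the $\beta$-dependent coefficient, and the bound $(r^2+a^2)/\Delta_-\le 2l^2/r^2$ valid for $r\ge 4M$. The only genuine difference is in how the boundary contribution is converted into an integral over $[4M,5M]$: the paper inserts a smooth cut-off $\chi$ supported in $[4M,\infty)$ and equal to $1$ for $r\ge 5M$, so the error appears through $\chi'$; you instead integrate from a variable lower endpoint $r_0^\star$ and then average $r_0^\star$ over $[r^\star(4M),r^\star(5M)]$, using that $B(r_0)/W(r_0)$ is uniformly bounded on that interval. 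Both devices are standard and yield the same result with constants of the same quality (your argument in fact gives $4l^4$, better than the stated $16l^4$, as does the paper's). The averaging version has the minor advantage of avoiding the choice of an explicit cut-off function; the cut-off version is slightly more compact to write. Either is fine.
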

\begin{proof}
Let $\chi \left(r\right)$ be a cut-off function which is equal to $1$ for $r \geq 5M$ and zero for $r \leq 4M$. Let $4M^\star$ denote the $r^\star$-value at $r=4M$. We integrate by parts,
\begin{align}
\int_{4M^\star}^{R^\star_\infty} \left(-\frac{e^{-kx}}{r^n} \right)^\prime \chi |u|^2 dr^\star = \int_{4M^\star}^{5M^\star} \frac{e^{-kx}}{r^n} \chi^\prime |u|^2 dr^\star + 2 \int_{4M^\star}^{R^\star_\infty} \frac{e^{-kx}}{r^n} \chi \ Re \left(\bar{u} u^\prime\right) \nonumber \\
\leq \frac{1}{2} \int_{4M^\star}^{R^\star_\infty} \left(-\frac{e^{-kx}}{r^n} \right)^\prime \chi \ |u|^2 dr^\star + 
2 \int_{4M^\star}^{R^\star_\infty} \frac{e^{-kx}e^{-kx}}{r^{2n}  \left(-\frac{e^{-kx}}{r^n} \right)^\prime } \ \chi \ |u^\prime|^2 dr^\star
\nonumber \\
+\int_{4M^\star}^{5M^\star} \frac{e^{-kx}}{r^n} \chi^\prime |u|^2 dr^\star  \nonumber
\end{align}
observing that the boundary terms vanish in view of $\chi$ vanishing at $4M$ and $u$ being bounded near infinity. This estimate leads to
\begin{align}
\int_{4M^\star}^{R^\star_\infty} \left(-\frac{e^{-kx}}{r^n} \right)^\prime \chi \ |u|^2 dr^\star \nonumber \\\leq 4  \int_{4M^\star}^{R^\star_\infty} \frac{e^{-kx}e^{-kx}}{r^{2n}  \left(-\frac{e^{-kx}}{r^n} \right)^\prime } \ \chi \ |u^\prime|^2 dr^\star + 2 \int_{4M^\star}^{5M^\star} \frac{e^{-kx}}{r^n} \chi^\prime |u|^2 dr^\star \, .
\end{align}
Observe now that 
\begin{align}
\left(-\frac{e^{-kx}}{r^n} \right)^\prime = e^{-kx} \left( \frac{n}{r^{n+1}} + \frac{k \ r^2 \cdot x_r }{r^{n+2}} \right) \frac{\Delta_-}{r^2+a^2} \, .
\end{align}
Taking into account
\begin{align}
\frac{1}{n+ \frac{k \left(r^2 x_r\right)}{r}} \leq \frac{1-\beta}{n} + \beta \frac{r}{k \left(r^2 x_r\right)}  \ \ \ \textrm{and} \ \ \ \frac{r^2+a^2}{\Delta} \leq \frac{2 l^2}{r^2}  \textrm{  \ \ \ for $r\geq 4M$} \, ,
\end{align}
the inequality of the Lemma follows.
\end{proof}
\subsection{Estimating $|u^\prime|^2$ from $|u|^2$}
We turn to the estimates for the solution.
We first apply the current ${Q}^{g}_2$ with $g = -f_{n_2}$ (hence $g^\prime = \sigma f_{n_2}>0$). As in (\ref{niceform}), we see that the spacetime term can be written
\begin{align}
\left({Q}^{g}_2\right)^\prime = |u^\prime|^2 \ g^\prime + |u|^2 \left[g^\prime \left(\omega-\omega_+\right)^2 + \left(-g\tilde{V} \right)^\prime \right]
+ 2g Re \left(u^\prime \overline{H}\right) \, .
\end{align}
We next study the behavior of the zeroth order term
\begin{align}
\mathcal{V} =  \frac{1}{2} g^\prime \left(\omega-\omega_+\right)^2 + \left(-g\tilde{V} \right)^\prime\, .
\end{align}
Note that we are keeping the globally good (=positive) term $ \frac{1}{2} g^\prime \left(\omega-\omega_+\right)^2$ on the left.
\begin{lemma}

For frequencies $|\omega-\omega_+| \geq \frac{\delta}{2} \sqrt{\mathcal{L}^\star}$ we can find a $\tilde{\rho}_+$ such that
\[
 \frac{1}{2} g^\prime \left(\omega-\omega_+\right)^2 + \left(-g\tilde{V} \right)^\prime \geq 0
\]
 holds for all $r_+ \leq r \leq \tilde{\rho}_+$.
\end{lemma}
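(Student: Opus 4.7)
The plan is to reduce the claimed inequality to a pointwise statement on $[r_+,\tilde\rho_+]$ and verify it by Taylor-expanding near the horizon, crucially using the frequency condition $|\omega-\omega_+|\ge\tfrac{\delta}{2}\sqrt{\mathcal L^\star}$. With $g=-f_{n_2}$ and $g'=\sigma f_{n_2}$, I first rewrite
\[
\tfrac12 g'(\omega-\omega_+)^2+(-g\tilde V)'=f_{n_2}\bigl[\tfrac12\sigma(\omega-\omega_+)^2+\tilde V'-\sigma\tilde V\bigr].
\]
Since $f_{n_2}>0$ and $\tilde V'=V'$ (as $\tilde V-V$ is $r$-independent), the task reduces to showing that the bracket is non-negative. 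All three quantities $\tilde V$, $V'$ and $\sigma$ vanish linearly in $\Delta_-$ at $r=r_+$ (for $\sigma$ this is manifest; for $\tilde V$ use the identity \eqref{eq:tvmvr} together with the fact that $V_+$, $V_\alpha$, and $\omega_r-\omega_+$ each carry a factor of $\Delta_-$ at the horizon). Hence the bracket vanishes at $r=r_+$ and the sign question is entirely about the first-order coefficient in $\Delta_-$.

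Using $r-r_+=\Delta_-/\kappa+O(\Delta_-^2)$ with $\kappa=\Delta_-'(r_+)>0$, and the identity \eqref{eq:tvmvr} together with the definitions \eqref{def:orop}, a direct expansion produces
\[
\tilde V=A\,\Delta_-+O(\Delta_-^2),\qquad \tilde V'=\tfrac{A\kappa}{r_+^2+a^2}\,\Delta_-+O(\Delta_-^2),\qquad \sigma=\sigma_0\,\Delta_-+O(\Delta_-^2),
\]
with $\sigma_0=(n_2/r_++k\,x_r(r_+))/(r_+^2+a^2)$ and
\[
A=\left.\frac{V_++V_\alpha}{\Delta_-}\right|_{r_+}+\frac{\mathcal L-2ma\Xi\omega}{(r_+^2+a^2)^2}+2d(\omega-\omega_+),\qquad d=-\frac{2ma\Xi r_+}{(r_+^2+a^2)^2\kappa}.
\]
Substituting and collecting yields
\[
\tfrac12\sigma(\omega-\omega_+)^2+\tilde V'-\sigma\tilde V=\Delta_-\!\left[\tfrac{A\kappa}{r_+^2+a^2}+\tfrac{\sigma_0}{2}(\omega-\omega_+)^2\right]+O(\Delta_-^2).
\]

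The main obstacle is showing that the bracket $[\,\cdots]$ is bounded below by a positive constant \emph{uniformly} in the admissible frequency range, since $\tilde\rho_+$ must be independent of $(\omega,m,\ell)$. The first two terms of $A$ are non-negative: the first by Lemma \ref{lem:Vplus} (giving a strictly positive $a_0+b_0$), the second by the positivity of $P+\xi^2-2m\xi\Xi$ exploited in Lemma \ref{lem:V0}. Only the cross-term $2d(\omega-\omega_+)$ can be negative. Because $|d|\le C_{M,l,a}|m|$ and Lemma \ref{spheroidal} gives $|m|\le C_{M,l,a}\sqrt{\mathcal L^\star}$, the frequency condition forces $|m|\le (2C_{M,l,a}/\delta)\,|\omega-\omega_+|$, whence $|2d(\omega-\omega_+)|\le C_{M,l,a,\delta}(\omega-\omega_+)^2$. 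Since $\sigma_0\ge c_{M,l,a}\,k$ and $k=C\sqrt{L}$ is chosen large in Section \ref{phiflatt}, the term $\tfrac{\sigma_0}{2}(\omega-\omega_+)^2$ absorbs $\tfrac{|A|\kappa}{r_+^2+a^2}$ and leaves the bracket $\gtrsim (\omega-\omega_+)^2\ge \delta^2\Xi^2/4>0$ uniformly in the admissible range (the case $m=0$ is trivial since then $d=0$). Finally, the constants in the $O(\Delta_-^2)$ remainder are uniformly bounded by $C_{M,l,a,\alpha,L,n_2,k}$ on the range $a^2\omega^2+\mathcal L\le 2L$, so $\tilde\rho_+$ can be taken close enough to $r_+$ — independently of $(\omega,m,\ell)$ — for the leading term to dominate throughout $[r_+,\tilde\rho_+]$, completing the argument.
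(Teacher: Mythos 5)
Your proof is essentially correct and takes a genuinely different route from the paper. You factor out $f_{n_2}$ and carry out a careful first-order Taylor expansion in $\Delta_-$, showing that the coefficient of $\Delta_-$ in the bracket is $\tfrac{A\kappa}{r_+^2+a^2}+\tfrac{\sigma_0}{2}(\omega-\omega_+)^2$ and then absorbing the only negative piece of $A$ (the cross-term $2d(\omega-\omega_+)$) using the frequency condition and $\sigma_0\sim k$. The paper's proof is shorter and purely qualitative: it notes $\tfrac12 g'(\omega-\omega_+)^2\gtrsim k\Delta_-\delta^2\mathcal L^\star$, that $-g'\tilde V$ carries an extra factor of $\Delta_-$, and asserts that $-g\tilde V'$ has a positive sign near the horizon. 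An attractive feature of your computation is that it does \emph{not} rely on $\tilde V'>0$ near the horizon; that sign statement comes from Lemma \ref{lem:Vtilde}, which is proved only in the complementary regime $|\omega-\omega_+|\le\delta\sqrt{\mathcal L^\star}$, and in the present regime the term $2(\omega-\omega_r)\omega_r'$ can indeed make $\tilde V'$ negative when $m(\omega-\omega_+)$ is large compared to $m^2$. Your Taylor argument sidesteps this entirely.

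That said, there is one point you need to tighten, and it is not merely cosmetic. You state that the constants in the $O(\Delta_-^2)$ remainder are bounded by $C_{M,l,a,\alpha,L,n_2,k}$, and conclude that $\tilde\rho_+$ can be chosen ``independently of $(\omega,m,\ell)$''. But the $\tilde\rho_+$ produced this way would in general shrink as $L\to\infty$, and that is not good enough: the subsequent estimate \eqref{goaly} has $\Delta_-^3$ on the right and $g'(\omega-\omega_+)^2+|g'\tilde V|+|g\tilde V'|$ on the left, and its implicit constant contains a factor $\Delta_-(\tilde\rho_+)^{-3}$. If $\tilde\rho_+$ depended on $L$, that constant would blow up and destroy the absorption step after \eqref{estimate2}. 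The fix is straightforward with what you already have: since the frequency condition gives $\mathcal L+1\lesssim\mathcal L^\star\lesssim\delta^{-2}(\omega-\omega_+)^2$ and $|m|\,|\omega-\omega_+|\lesssim\delta^{-1}(\omega-\omega_+)^2$, one checks that every $O(\Delta_-^2)$ coefficient is $\lesssim k\,\delta^{-2}(\omega-\omega_+)^2$ while the leading coefficient is $\gtrsim k(\omega-\omega_+)^2$, so the ratio is $\lesssim\delta^{-2}$ with no residual $L$- or $k$-dependence, and $\tilde\rho_+$ can be fixed by $\Delta_-(\tilde\rho_+)\lesssim\delta^2$, depending only on $M,l,a$. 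Once this observation is added, your argument is complete and in fact somewhat more robust than the paper's.
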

\begin{proof}
The first term is positive and at least of size $\sim k \Delta_- \cdot \delta^2 \cdot \mathcal{L}^\star$ near the horizon. For the second term we observe (recalling (\ref{one1}), for instance) that $-g^\prime \tilde{V} \sim k \Delta_-^2 \left( \mathcal{L} + 1\right)$ while $-g \tilde{V}^\prime$ has a positive sign near the horizon. We conclude that choosing $\tilde{\rho}_+$ such that $\Delta_- \ll \delta$ holds in $r_+\leq r\leq \tilde{\rho}_+$ (with $\ll$ depending only on $M$,$l$ and $a$) we can achieve positivity.
\end{proof}

Fixing now the $\tilde{\rho}_+$ in the previous Lemma, we only need to estimate $\mathcal{V}$ away for the horizon. This means that for $r \geq \tilde{\rho}_+$ any factors of $\Delta_-$ can now be estimated by large constants $C_{M,l,a}$. In particular, we immediately get

\begin{corollary}
In $r \leq 3M$, we have  (with $g=-f_{n_2}$, $n_2 \geq 2$)
\begin{align} \label{goaly}
- \frac{1}{2}g^\prime \left(\omega-\omega_+\right)^2 + \left(g\tilde{V} \right)^\prime \lesssim  \Delta_-^3 \ k \ \left(L+1\right) e^{-kx} \, .
\end{align}
\end{corollary}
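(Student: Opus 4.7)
The plan is to split the range $r_+\le r\le 3M$ into the subregion $r_+\le r\le \tilde{\rho}_+$, where the previous Lemma already supplies the required sign, and the subregion $\tilde{\rho}_+\le r\le 3M$, where $\Delta_-$ is bounded below by a positive constant depending only on $M,l,a$. On the inner subregion, the Lemma says $\frac{1}{2}g'(\omega-\omega_+)^2+(-g\tilde V)'\ge 0$, i.e.\ the quantity on the left of \eqref{goaly} is $\le 0$, which is automatically $\lesssim \Delta_-^3 k(L+1)e^{-kx}$ since the right-hand side is non-negative. So the real work is on the outer subregion.

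On $\tilde{\rho}_+\le r\le 3M$, I would simply plug in the formulas and use that every factor is bounded. Recall $g=-f_{n_2}=-r^{-n_2}e^{-kx}$, so $g'=\sigma f_{n_2}$ with $\sigma=\bigl(\tfrac{n_2}{r}+k\,x_r\bigr)\tfrac{\Delta_-}{r^2+a^2}$; hence, in the bounded range $\tilde{\rho}_+\le r\le 3M$,
\[
|g|\lesssim e^{-kx},\qquad |g'|\lesssim k\,e^{-kx},
\]
with constants depending only on $M,l,a$ and $n_2$. Similarly, the low-frequency cut-off $\zeta\!\left(\tfrac{a^2\omega^2+\lambda_{m\ell}}{L}\right)$ and $|\omega-\omega_+|^2\le C_{M,l,a}\,\mathcal{L}^\star\le CL$ give
\[
|\tilde V|+|\tilde V'|\lesssim L+1,\qquad (\omega-\omega_+)^2\lesssim L.
\]
Plugging these into $-\tfrac{1}{2}g'(\omega-\omega_+)^2+(g\tilde V)'=-\tfrac{1}{2}g'(\omega-\omega_+)^2+g'\tilde V+g\tilde V'$ yields the pointwise estimate
\[
-\tfrac{1}{2}g'(\omega-\omega_+)^2+(g\tilde V)'\lesssim k(L+1)e^{-kx}.
\]
Since $\Delta_-\ge\Delta_-(\tilde\rho_+)=:c_{M,l,a}>0$ throughout this subregion, we may insert $\Delta_-^3/c^3\ge 1$ on the right-hand side, which produces exactly the claimed bound $\lesssim\Delta_-^3\,k(L+1)e^{-kx}$.

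The only mildly delicate point is justifying $|\omega-\omega_+|^2\lesssim L$ and $|\tilde V|\lesssim L$ uniformly for the frequencies under consideration. For the first, $|\omega|$ is bounded by $\sqrt{2L}/|a|$ from the $\omega$-cut-off in \eqref{kernel} (for $a\neq 0$; if $a=0$ the analogous cut-off in $\omega$ was inserted separately, cf.~the remark in Section~\ref{sec:freqloc}), and $\omega_+$ is bounded by $|m|\lesssim \sqrt{L}$; for the second, using $\tilde V=V+|\omega-\omega_+|^2-\omega^2$ and the explicit formulas of Section~\ref{se:sepv}, every term is controlled by $\mathcal{L}+1\lesssim L+1$ on the bounded range $r\le 3M$. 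There is no subtle cancellation to track — the corollary is meant as a crude pointwise estimate on the localised region $r\le 3M$ that will be absorbed later by the globally positive bulk terms produced by Lemma~\ref{bulklemma}, which themselves carry the large factors $k^2$, $k^3$.
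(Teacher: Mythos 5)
Your proposal is correct and follows the same two--part strategy as the paper: on $r_+\le r\le\tilde\rho_+$ the preceding Lemma gives that the left-hand side is nonpositive, and on $\tilde\rho_+\le r\le 3M$ one uses $|\tilde V|+|\tilde V'|\lesssim \mathcal L+1\lesssim L+1$, $|g|\lesssim e^{-kx}$, $|g'|\lesssim k\,e^{-kx}$, and $\Delta_-\gtrsim 1$ to insert the $\Delta_-^3$ factor for free. You supply more detail than the paper (in particular spelling out $(\omega-\omega_+)^2\lesssim L$ from the $\zeta$-cutoff), which is a welcome clarification rather than a deviation; the only small blemish is the appeal to a remark on the $a=0$ cutoff in Section~\ref{sec:freqloc}, which in the published text is not actually present (the paper handles $a=0$ separately in Section~\ref{se:srsc}), but this does not affect the correctness of the argument.
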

\begin{remark}
The \emph{cubic} degeneration on the right hand side is sufficient for our purposes. The estimate holds for any power of $\Delta_-$. 
\end{remark}
\begin{proof}
Note that $|\tilde{V}| + |\tilde{V}^\prime| \lesssim \mathcal{L} + 1$ in $\tilde{\rho}_+ \leq r \leq 3M$. For $r\leq \tilde{\rho}_+$ the left hand side is negative by the above Lemma.
\end{proof}

Near infinity we have to be careful with the weights in $r$. Nevertheless we have
\begin{lemma}
In $r \geq 3M$, we have (with $g=-f_{n_2}$, $n_2 \geq 2$)
\begin{align} \label{hiu}
- \frac{1}{2}g^\prime \left(\omega-\omega_+\right)^2 + \left(g\tilde{V} \right)^\prime \leq \frac{|2-\alpha|}{l^2} \left[ \frac{n_2-2}{r^{n_2-3}} \frac{1}{l^2}+ \frac{1}{l^2}\frac{k (x_r r^2)}{r^{n_2-2}} \right] e^{-kx}  \nonumber \\ + C_{M,l,a} \frac{k  (x_r r^2) \left(L+1\right)}{r^{n_2-1}} e^{-kx} \nonumber \, \, .
\end{align}
\end{lemma}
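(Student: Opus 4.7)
The first step is to observe that since $g^\prime = \sigma f_{n_2}$ with $\sigma > 0$ for all $r > r_+$, the quantity $-\tfrac{1}{2} g^\prime(\omega-\omega_+)^2$ is manifestly non-positive and may be discarded when seeking an upper bound. It therefore suffices to bound $(g\tilde V)^\prime = \sigma f_{n_2}\tilde V - f_{n_2}\tilde V^\prime$ from above, using $g = -f_{n_2}$ and the explicit formula for $\sigma$.

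The key structural input is the cancellation in the leading large-$r$ behavior of $V_+$ and $V_\alpha$. From (\ref{Vplus}) and (\ref{eq:va}), the dominant contribution to $V_+$ at infinity is $2r^2/l^4$ while that of $V_\alpha$ is $-\alpha r^2/l^4$, so that $V_+ + V_\alpha$ splits naturally as a leading piece proportional to $\frac{2-\alpha}{l^2}\cdot\frac{\Delta_-}{(r^2+a^2)^2}\cdot r^2$ plus a remainder uniformly bounded for $r \ge 3M$. I would therefore decompose
\[
\tilde V \;=\; \frac{2-\alpha}{l^2}\cdot\frac{\Delta_-}{(r^2+a^2)^2}\cdot r^2 \;+\; \mathcal{R}(r),
\]
where $\mathcal R$ collects the subleading pieces of $V_+$ and $V_\alpha$ together with $V_0 - \omega^2 + |\omega-\omega_+|^2$. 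In view of Lemma \ref{lem:V0} and the frequency constraints $\mathcal L \le 2L$ and $|\omega-\omega_+|^2 \le 4L$, one has $|\mathcal R(r)| \le C_{M,l,a}(L+1)$ uniformly in $r \ge 3M$, with an analogous bound (carrying an extra factor of $\Delta_-/(r^2+a^2)$ from $\partial_{r^\star}$) for $\mathcal R^\prime$.

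Substituting this decomposition into $\sigma f_{n_2}\tilde V - f_{n_2}\tilde V^\prime$ and using the expression $\sigma = \bigl(n_2/r + k(x_r r^2)/r^2\bigr)\cdot\Delta_-/(r^2+a^2)$, the contribution from the leading piece of $\tilde V$ produces exactly the first term on the right-hand side of the claim: the coefficient $n_2$ emerges from $\sigma$ acting on $\tilde V_{\mathrm{lead}}$, while $\partial_{r^\star}(r^2) = 2r\cdot\Delta_-/(r^2+a^2)$ supplies the $-2$ inside $-f_{n_2}\tilde V_{\mathrm{lead}}^\prime$, yielding the net factor $(n_2-2)$; the $k(x_r r^2)$-term comes directly from the exponentially-weighted piece of $\sigma$, and taking absolute values accounts for the sign of $2-\alpha$. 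The contribution from the remainder $\mathcal R$, plugged into both $\sigma f_{n_2}$ (which carries the weight $k(x_r r^2)/r^{n_2-1}$ at large $r$ through the $\Delta_-/(r^2+a^2)$ factor) and into $-f_{n_2}\partial_{r^\star}\mathcal R$, produces the $C_{M,l,a}(L+1)\cdot k(x_r r^2)/r^{n_2-1}\cdot e^{-kx}$ bound.

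The main obstacle will be careful bookkeeping: I must verify that the subleading corrections to $V_+ + V_\alpha$ (including the $O(1)$ pieces arising from multiplying the $O(1/r^2)$-correction in the expansion of $\Delta_-/(r^2+a^2)^2$ by $r^2$) genuinely fit within the $C_{M,l,a}(L+1)$ error budget and do not secretly spawn additional $|2-\alpha|$-type contributions. The restriction $r \ge 3M$ is essential here, since it allows us to treat $\Delta_-$ and $(r^2+a^2)^{-1}$ with uniform constants and to rule out any contribution from the near-horizon region (which is where $\tilde V$ is dominated by entirely different structures, as analyzed in Lemma \ref{lem:Vtilde}).
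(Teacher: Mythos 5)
Your decomposition of $\tilde V$ into a leading piece proportional to $\frac{2-\alpha}{l^2}\cdot\frac{r^2}{l^2}$ plus a remainder $\mathcal R$ is indeed the mechanism the paper uses. But there are two points where your argument departs from what is needed.

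First, a remark on dropping $-\tfrac{1}{2}g'(\omega-\omega_+)^2$. It is true this term is non-positive so discarding it from the left-hand side is logically sound, but the paper keeps it deliberately: the cross-terms coming from $V_0-\omega^2+(\omega-\omega_+)^2$ (via Lemma \ref{lem:V0}, $-(\omega-\omega_r)^2+(\omega-\omega_+)^2$ and the $-2ma\Xi(\omega-\omega_+)$ piece) produce contributions of the form $g'\cdot|m||\omega-\omega_+|$ which are absorbed by Cauchy--Schwarz against $\tfrac{1}{2}g'(\omega-\omega_+)^2$, leaving only $g'm^2\lesssim g'L$. Your alternative — bounding $(\omega-\omega_+)^2\lesssim L$ directly from the cutoff — only works because $a^2\omega^2\le\mathcal L<2L$ forces $|\omega|\lesssim\sqrt L/|a|$; this is fine since $C_{M,l,a}$ may depend on $a$, but it is worth naming the $a$-dependence rather than asserting $|\omega-\omega_+|^2\le 4L$ with an absolute constant.

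Second, and this is the genuine gap: your assertion that $\sigma f_{n_2}$ ``carries the weight $k(x_r r^2)/r^{n_2-1}$ at large $r$ through the $\Delta_-/(r^2+a^2)$ factor'' is incorrect. Since $\Delta_-/(r^2+a^2)\sim r^2/l^2$ for $r\geq 3M$ while $x_r r^2\to l$ is bounded, one has $\sigma\sim (n_2 r + k(x_r r^2))/l^2$, and the $n_2 r$-piece \emph{dominates} the $k$-piece once $r$ is large. Hence $\sigma f_{n_2}\mathcal R$ genuinely contains a contribution of size $n_2(L+1)e^{-kx}/r^{n_2-1}$ carrying \emph{no} factor of $k(x_r r^2)/r$; the same is true of the $-g\mathcal R'$-contribution, since $|\mathcal R'|\lesssim r + L/r$ gives $f_{n_2}\cdot r = e^{-kx}/r^{n_2-1}$. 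The paper's own computation produces exactly this extra piece — the final displayed inequality in its proof and the subsequent \eqref{estimate2} carry the factor $\bigl(n_2 + \tfrac{k(x_r r^2)}{r}\bigr)\tfrac{L+1}{r^{n_2-1}}e^{-kx}$, and the $n_2$-piece is later absorbed separately via the Hardy inequality of Lemma \ref{lem:hardy1} with $\beta=0$ rather than $\beta=1$. As written, your error budget silently discards this $n_2$-weighted term (the lemma's displayed right-hand side also appears to have dropped it, but one cannot establish the bound without it appearing somewhere), so the bookkeeping step you flag as ``the main obstacle'' is in fact where the argument breaks down.
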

\begin{proof}
Near infinity/ in the interior we have (recall $\partial_{r^\star} \sim \frac{r^2}{l^2} \partial_r$ near infinity)
\begin{align}
|\tilde{V} - \frac{2- \alpha}{l^2} \frac{r^2}{l^2}| \lesssim  1+L + |m| |\omega-\omega_+| 
\end{align}
\begin{align}
|\tilde{V}^\prime - \frac{2\left(2-\alpha\right)}{l^2} \frac{r^3}{l^4}| \lesssim r + \frac{1}{r} L + \frac{1}{r} |m| |\omega-\omega_+| 
\end{align}
and hence 
\begin{align}
g \tilde{V}^\prime + g^\prime \tilde{V} \leq -f_{n_2} \left(\frac{2\left(2-\alpha\right)}{l^2} \frac{r^3}{l^4} - \frac{\Delta_-}{r^2+a^2}\left(\frac{n_2}{r} + \frac{k (x_r r^2)}{r^2}\right) \frac{2- \alpha}{l^2} \frac{r^2}{l^2}\right) \nonumber \\
+ C_{M,l,a,\alpha} \left[ g^\prime  \left(1+L+|m||\omega-\omega_+|\right) - g \left(r + \frac{1}{r} L + \frac{1}{r} |m| |\omega-\omega_+| \right) \right] \nonumber \\
 \le \frac{1}{2} g^\prime \left(\omega-\omega_+\right)^2+ \frac{|2-\alpha|}{l^6} \left[ \frac{n_2-2}{r^{n_2-3}} + \frac{k (x_r r^2)}{r^{n_2-2}} \right] e^{-kx}  \nonumber \\
 + C_{M,l,a,\alpha} \left(n_2 + \frac{k  (x_r r^2)}{r} \right)\frac{\left(L+1\right)}{r^{n_2-1}} e^{-kx}    \nonumber
\end{align}
since $m^2 \Xi^2 < L$.
\end{proof}
Combining the previous Lemmata, the estimate arising from the current $Q^{g}_2$ with $g = -f_{n_2}$ finally reads
\begin{align} \label{estimate2}
\int_{R^\star_{-\infty}}^{R^\star_\infty}  dr^\star \Bigg[ \frac{\Delta_-}{\left(r^2 + a^2\right)^2} \left[\left( \frac{n_2}{r^{n_2-1}} + \frac{k \ (r^2 x_r)}{r^{n_2}} \right)  \right] e^{-kx}  \left( |u^\prime|^2 \right) \nonumber \\
+ c_{M,l,a} \cdot k \left(\omega -\omega_+\right)^2 \frac{\Delta_-}{\left(r^2+a^2\right)^2} e^{-kx}  \cdot \frac{1}{r^{n_2}}  |u|^2 \Bigg] \nonumber \\
\leq \int_{R^\star_{-\infty}}^{R^\star_\infty} dr^\star  \Bigg (\frac{|2-\alpha|}{l^6} \left[ \frac{n_2-2}{r^{n_2-3}}  + \frac{k (x_r r^2)}{r^{n_2-2}} \right] e^{-kx} \nonumber \\
+ C_{M,l,a, \alpha}  \left(n_2 + \frac{k  (x_r r^2)}{r} \right)\frac{\left(L+1\right)}{r^{n_2-1}} e^{-kx} \Bigg) \frac{\Delta_-^3}{\left(r^2+a^2\right)^6}  |u|^2 
\nonumber \\ 
+ \left[Q^{g}_2 \left(R^\star_\infty\right) - Q^{g}_2 \left(R^\star_{-\infty}\right)\right] + \Big | \int_{R^\star_{-\infty}}^{R^\star_\infty} dr^\star  \left[ -2f_{n_2} Re \left(u^\prime \overline{H}\right)  \right] \Big| \, .
\end{align}
We now claim that we can absorb (by the left hand side) the $|u|^2$-term on the right hand side in the region $r\geq 5M$ using  Lemma \ref{lem:hardy1}. Indeed, for the first term on the right hand side of (\ref{estimate2}) we apply Lemma \ref{lem:hardy1} with $n=n_2-2$ and $\beta=0$: Choosing $n_2$ sufficiently large (depending only on the size of $|2-\alpha|$) then shows that this term is absorbed using a small part of the good $|u^\prime|^2 \frac{n_2}{r^{n_2-1}}$ term on the left hand side. This also fixes our $n_2$. For the term in the second line of (\ref{estimate2}) (the one which has the constant $C_{M,l,a,\alpha}$), we apply Lemma \ref{lem:hardy1} again, this time with the choice $n=n_2$ and $\beta=1$. The Lemma reveals that for $k$ sufficiently large, the term under consideration can be absorbed by the  $|u^\prime|^2 \frac{k (r^2 x_r)}{r^{n_2}}$ term on the left hand side. We summarize the new estimate as
\begin{align} \label{estimate3}
\int_{R^\star_{-\infty}}^{R^\star_\infty}  dr^\star \Bigg[ \frac{\Delta_-}{\left(r^2 + a^2\right)^2} \left[\left( \frac{n_2}{r^{n_2-1}} + \frac{k \ (r^2 x_r)}{r^{n_2}} \right)  \right] e^{-kx}  \left( |u^\prime|^2 \right) \nonumber \\
+ c_{M,l,a} \cdot k \left( \left(\omega -\omega_+\right)^2 + \cancel{\delta^2} \cdot \mathcal{L}^\star \right)  \frac{\Delta_-}{\left(r^2+a^2\right)^2} e^{-kx}  \cdot \frac{1}{r^{n_2}}  |u|^2 \Bigg] \nonumber \\
\leq C_{M,l,a, \alpha}  \int_{r \leq 5M} dr^\star \ \frac{k  (x_r r^2) \left(L+1\right)}{r^{n_2-1}} e^{-kx}  \frac{\Delta_-^3}{\left(r^2+a^2\right)^6}  |u|^2 
\nonumber \\ 
+ \left[Q^{g}_2 \left(R^\star_\infty\right) - Q^{g}_2 \left(R^\star_{-\infty}\right)\right] + \Big | \int_{R^\star_{-\infty}}^{R^\star_\infty} dr^\star  \left[ -2f_{n_2} Re \left(u^\prime \overline{H}\right)  \right] \Big| \, ,
\end{align}
where we have also used that $\left(\omega -\omega_+\right)^2 \geq 2^{-1} \delta^2 \cdot \mathcal{L}^\star$. (Since $\delta$ has already been fixed in the previous section we can absorb it into $c_{M,l,a}$, which is the reason why it appears cancelled above.)
In the following section, we prove a second multiplier estimate which will allow us to absorb also the part which is in $r\leq 5M$ by the left hand side.

\subsection{Estimating $|u|^2$ from $|u^\prime|^2$}
We apply the current $Q^f_0$ with $f = f_{n_1}$ and $n_1 \geq n_2+1$ sufficiently large so that Lemma \ref{bulklemma} leads to the estimate
\begin{align} \label{fine1}
\int_{R^\star_{-\infty}}^{R^\star_\infty} dr^\star  \left[ \left(\frac{\Delta_-}{(r^2+a^2)^2} \right)^3 \frac{1}{r^{n_1}} \right]  e^{-kx}  |u|^2 \lesssim  \frac{\mathcal{L}^\star}{k^2} \int_{r\leq 3M} dr^\star  \left(\frac{\Delta_-}{r^2+a^2} \right) \frac{e^{-kx}}{r^{n_1-2}}  |u|^2 \nonumber \\
+\frac{1}{k^2} \int_{-R^\star_{-\infty}}^{R^\star_\infty} dr^\star  \left[ \frac{\Delta_-}{\left(r^2+a^2\right)^2}  \frac{1}{r^{n_1-1}} \right] e^{-kx} |u^\prime|^2 
+\frac{1}{k^3} \left[ {Q}^{f_{n_1}}_0 \left(R^\star_\infty\right) - {Q}^{f_{n_1}}_0 \left(R^\star_{-\infty}\right)\right] 
\nonumber \\ 
+ \frac{1}{k^3} \Big | \int_{R^\star_{-\infty}}^{R^\star_\infty} dr^\star  \left[ 2 f_{n_1} Re \left(u^\prime \overline{H}\right) + f_{n_1}^\prime Re \left(u \overline{H}\right) \right] \Big| \, .
\end{align}

\subsection{Putting it together: The basic estimate}

Estimating the integral in $r\leq 5M$ on the right hand side of (\ref{estimate3}) by the estimate (\ref{fine1}) and imposing that $L \ll k^2$ (again with $\ll$ depending only on $M$,$l$ and $a$), we arrive at 
\begin{align} 
\int_{R^\star_{-\infty}}^{R^\star_\infty}  dr^\star  \frac{\Delta_-}{\left(r^2+a^2\right)^2} \frac{e^{-kx}}{r^{n_2}}  \left( |u^\prime|^2 + \left(\omega - \omega_+ \right)^2 |u|^2 + \frac{1}{r} \frac{\Delta_-}{\left(r^2+a^2\right)^2} |u|^2 \right) \lesssim \nonumber \\
 \frac{1}{k^2} \left[ Q^{f_{n_1}}_0 \left(R^\star_\infty\right) - {Q}^{f_{n_1}}_0 \left(R^\star_{-\infty}\right)\right] + \frac{1}{k^2}  \Big | \int_{R^\star_{-\infty}}^{R^\star_\infty} dr^\star  \left[ 2 f_{n_1} Re \left(u^\prime \overline{H}\right) + f_{n_1}^\prime Re \left(u \overline{H}\right) \right] \Big| \nonumber \\ + \frac{1}{k} \left[ {Q}^{-f_{n_2}}_2 \left(R^\star_\infty\right) -{Q}^{-f_{n_2}}_2 \left(R^\star_{-\infty}\right)\right] + \frac{1}{k} \Big | \int_{R^\star_{-\infty}}^{R^\star_\infty} dr^\star  \left[ -2f_{n_2} Re \left(u^\prime \overline{H}\right) \right] \Big| \nonumber
\end{align}
Multiplying by $e^{k\frac{\pi}{2}}$ and using elementary Hardy inequalities near the horizon and near infinity one can immediately improve the weights of the zeroth order term to
\begin{align} 
\int_{R^\star_{-\infty}}^{R^\star_\infty}  dr^\star  \frac{\Delta_-}{\left(r^2+a^2\right)^2} \frac{1}{r^{n_2}}  \left( |u^\prime|^2 + \left(\omega - \omega_+ \right)^2 |u|^2 + r^2|u|^2 \right) 
\lesssim \nonumber \\ \frac{e^{k\frac{\pi}{2}}}{k^2} \left[ Q^{f_{n_1}}_0 \left(R^\star_\infty\right) - {Q}^{f_{n_1}}_0 \left(R^\star_{-\infty}\right)\right]  + \frac{e^{k\frac{\pi}{2}}}{k} \left[ {Q}^g_2 \left(R^\star_\infty\right) - {Q}^{g}_2 \left(R^\star_{-\infty}\right)\right]  \nonumber \\ 
+ \frac{e^{k\frac{\pi}{2}}}{k^2} \Big | \int_{R^\star_{-\infty}}^{R^\star_\infty} dr^\star  \left[ 2 f_{n_1} Re \left(u^\prime \overline{H}\right) + f_{n_1}^\prime Re \left(u \overline{H}\right) \right] \Big| \nonumber \\
 + \frac{e^{k\frac{\pi}{2}}}{k} \Big | \int_{R^\star_{-\infty}}^{R^\star_\infty} dr^\star  \left[ -2f_{n_2} Re \left(u^\prime \overline{H}\right) \right] \Big| \nonumber \, .
\end{align}
Finally, we can add an ``angular" term, $k^{-2} \left(\lambda_{m\ell} + a^2 \omega^2\right) |\Psi|^2$ to the left hand side to obtain our basic integrated decay estimate.
\begin{proposition} 
With $u=\sqrt{r^2+a^2} \psi^{(a\omega)}_{m \ell}$ denoting the Fourier-separated components of $\psi^\tau_{\flat, \omega \not\approx \omega_+}$, we have
\begin{align}  \label{mainintdec}
\int_{R^\star_{-\infty}}^{R^\star_\infty}  dr^\star  \frac{\Delta_- r^{-n_2}}{\left(r^2+a^2\right)^2}   \left[  |u^\prime|^2 +  \left(\left(\omega - \omega_+ \right)^2 + r^2 + r^2 \left[\lambda_{m\ell} + a^2 \omega^2 \right] \right) |u|^2 \right] \nonumber \\ 
\lesssim e^{k\frac{\pi}{2}} \left[ Q^{f_{n_1}}_0 \left(R^\star_\infty\right) - {Q}^{f_{n_1}}_0 \left(R^\star_{-\infty}\right)\right]  + k e^{k\frac{\pi}{2}} \left[ {Q}^g_2 \left(R^\star_\infty\right) - {Q}^{g}_2 \left(R^\star_{-\infty}\right)\right]  \nonumber \\ 
+  e^{k\frac{\pi}{2}} \Big | \int_{R^\star_{-\infty}}^{R^\star_\infty} dr^\star  \left[ 2 f_{n_1} Re \left(u^\prime \overline{H}\right) + f_{n_1}^\prime Re \left(u \overline{H}\right) \right] \Big| \nonumber \\
 + k e^{k\frac{\pi}{2}} \Big | \int_{R^\star_{-\infty}}^{R^\star_\infty} dr^\star  \left[ -2f_{n_2} Re \left(u^\prime \overline{H}\right) \right] \Big|.
 \end{align}
 \end{proposition}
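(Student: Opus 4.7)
My plan is to combine the two microlocal estimates already established in the preceding subsections. Estimate (\ref{estimate3}), obtained from the current $Q^{g}_2$ with $g=-f_{n_2}$, controls $|u'|^2$ and $(\omega-\omega_+)^2|u|^2$ (plus a small fraction of $\mathcal{L}^\star |u|^2$) modulo a localized bulk error in $\{r\le 5M\}$ carrying a cubic $\Delta_-$-degeneration and a coefficient proportional to $k(L+1)$, together with boundary and $H$-source terms. Estimate (\ref{fine1}), obtained from $Q^{f}_0$ with $f=f_{n_1}$ and $n_1\ge n_2+1$ via Lemma \ref{bulklemma}, controls a cubically-weighted $|u|^2$ integral modulo a $k^{-2}$-fraction of the derivative norm, an $\mathcal{L}^\star k^{-2}$-fraction of a localized $|u|^2$-term in $\{r\le 3M\}$, and boundary and source terms weighted by $k^{-3}$.

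The first main step is to substitute (\ref{fine1}) into the right hand side of (\ref{estimate3}) so as to absorb the $\{r\le 5M\}$ bulk error. Three matchings make this work: the cubic $\Delta_-$-weight produced by Lemma \ref{bulklemma} on the LHS of (\ref{fine1}) matches the cubic degeneration of the bulk error in (\ref{estimate3}); the derivative-error term of (\ref{fine1}) is absorbed by the $|u'|^2$-term on the LHS of (\ref{estimate3}) because it carries the small factor $k^{-2}$; the $\mathcal{L}^\star k^{-2}$-term of (\ref{fine1}) is absorbed by the $k\,\mathcal{L}^\star|u|^2$-term on the LHS of (\ref{estimate3}). All three absorptions close simultaneously under the hypothesis $L\ll k^2$, equivalently $k\ge C_{M,l,a,\alpha}\sqrt L$. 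The result is a clean estimate for $|u'|^2+(\omega-\omega_+)^2|u|^2+r^{-1}(\Delta_-/(r^2+a^2)^2)|u|^2$ in terms of boundary terms at $R^\star_{\pm\infty}$ of $Q^{f_{n_1}}_0$ and $Q^{-f_{n_2}}_2$ and the $H$-inhomogeneity.

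The second step is to clean up the $r$-weights. Multiplying through by $e^{k\pi/2}$ trades the factor $e^{-kx}$ (with $x=\arctan(r/l)\in[0,\pi/2]$) for a harmless constant, at the price of this frequency-independent exponential prefactor on the right hand side. A horizon Hardy inequality (exploiting the $\Delta_-$-vanishing) and an asymptotic Hardy inequality near $r=R^\star_\infty$ then upgrade the weak $r^{-1}\Delta_-/(r^2+a^2)^2$ weight of the $|u|^2$-integral to the $r^2$-weight required by (\ref{mainintdec}). Finally, the extra angular term $r^2(\lambda_{m\ell}+a^2\omega^2)|u|^2$ comes for free: the cutoff enforces $\lambda_{m\ell}+a^2\omega^2\le 2L$, so this term is bounded by $2L$ times the $r^2|u|^2$-integral just controlled, and the resulting factor of $L$ is absorbed into the $e^{k\pi/2}$ prefactor using $L\le k^2$.

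The main obstacle is the simultaneous closure of the three absorption inequalities in the first step. Their interplay is precisely what forces both the choice of hierarchy $n_1\ge n_2+1$ between the two exponential multipliers and the quantitative threshold $k\ge C_{M,l,a,\alpha}\sqrt L$; looser $r$-decay of $f_{n_1}$ relative to $f_{n_2}$ would make the cubic matching fail, and a smaller $k$ would spoil the two $k^{-2}$-absorptions of the derivative and $\mathcal{L}^\star$-errors.
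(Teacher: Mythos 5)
Your proposal reproduces the paper's argument almost step for step: absorb the localized cubic-$\Delta_-$ bulk error of (\ref{estimate3}) using (\ref{fine1}) under the hypothesis $L\ll k^2$ (with the three absorptions you list closing exactly as you describe), then multiply by $e^{k\pi/2}$ to remove the exponential weight, upgrade the $r$-weights of the $|u|^2$-term via elementary Hardy inequalities near the horizon and near infinity, and finally append the angular term at the cost of a polynomial factor of $L\le k^2$ which the exponential prefactor absorbs. This is the same route as the paper's Section 7.4, and the reasoning is correct.
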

\section{Spacetime estimate for $\psi^\tau_\flat =  \psi_{\flat, \omega \approx \omega_+}^{\tau} + \psi_{\flat, \omega \not\approx \omega_+}^{\tau}$} \label{se:sephif}
We succeeded in proving the microlocal estimates (\ref{mainnonstat}) and (\ref{mainintdec}). We now want to turn these estimates into \emph{spacetime estimates} by integration. For convenience, we will state the estimates in terms of $\psi = u \left(r^2+a^2\right)^{-1/2}$.

\begin{proposition} \label{prop:bases}
The low frequency part of $\psi^\tau$, $\psi^\tau_{\flat}$, satisfies the integrated decay estimate
\begin{align} \label{ibasic}
 \int_{\mathcal{D}} \Bigg[ \left(\partial_t \psi^\tau_{\flat} \right)^2 +  \left( \partial_{r^\star} \psi^\tau_{\flat} \right)^2 +  r^2 |  {}^\gamma \nabla\psi^\tau_{\flat}|_\gamma^2 
   + r^2  \psi^\tau_{\flat} \Bigg]  \frac{1}{r^{n_2+2}} dt^\star dr d\sigma_{\mathbb{S}^2} \nonumber \\
\leq C_{M,l,a} \cdot e^{C_{M,l,a,\alpha} \sqrt{L}} \int_{t^\star=0} J^N_\mu \left[\psi\right] n^\mu
\end{align}
where the integration is over the entire domain of outer communications, $\mathcal{D}$.
\end{proposition}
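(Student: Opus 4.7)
The plan is to convert the two microlocal estimates of Sections \ref{phiflatt2} and \ref{phiflatt} into a physical-space statement via Plancherel, and then to bound the resulting boundary/error contributions by the initial energy.

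First, I would integrate (\ref{mainnonstat}) in $\omega$ and sum over $(m,\ell)$ (with the cut-offs from Section \ref{sec:freqloc} enforcing $\mathcal{L}<2L$ and $|\omega-\omega_+|\le\delta\sqrt{\mathcal{L}^\star}$), and do the same for (\ref{mainintdec}) in the complementary frequency range. By orthonormality of $\{S_{m\ell}(a\omega,\cos\theta)e^{im\tilde\phi}\}$ in $L^2(\mathbb{S}^2)$ for each fixed $\omega$, and by Plancherel in $\omega$, the expressions $\sum_{m\ell}\int d\omega$ acting on $|u|^2$, $|u'|^2$, $\mathcal{L}|u|^2$ and $(\omega-\omega_+)^2|u|^2$ translate, after undoing the $u=\sqrt{r^2+a^2}\,\Psi$ rescaling, into the spacetime $L^2$ norms of $\psi^\tau_\flat$, $\partial_{r^\star}\psi^\tau_\flat$, $\slashed{\nabla}\psi^\tau_\flat$ and $K(\psi^\tau_\flat)$ respectively. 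Converting $dr^\star$ to $dr$ via $dr^\star=\frac{r^2+a^2}{\Delta_-}dr$, the combined weight $\frac{\Delta_- r^{-n_2}}{(r^2+a^2)^2}$ becomes, up to discarding a few positive powers of $r$ near infinity, exactly the target weight $r^{-(n_2+2)}\,dt^\star\,dr\,d\sigma_{\mathbb{S}^2}$ of (\ref{ibasic}).

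Next I would handle the three classes of terms on the right-hand side of the microlocal identities. (i) The boundary contributions at $r^\star=R^\star_\infty=\pi/2$ vanish in view of the decay $u\,r^{1/2}=o(1)$ at null infinity already used for Lemma \ref{lem:hardyabsorb}, which follows from the Breitenlohner--Freedman boundary condition encoded in the functional space $CH^2_{AdS}$. (ii) The boundary contributions at $r^\star=R^\star_{-\infty}$ are linear combinations of $|u|^2$, $|u'|^2$ and $(\omega-\omega_+)\mathrm{Im}(u'\bar u)$ on a constant-$r^\star$ slice approaching the horizon; after $\omega$-summation they yield fluxes through the surface $\{r=\mathrm{const}\}$ of $(\psi^\tau_\flat)^2$, $(\partial_{r^\star}\psi^\tau_\flat)^2$ and $K(\psi^\tau_\flat)\partial_{r^\star}\psi^\tau_\flat$. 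Since $K$ is tangent to $\mathcal{H}^+$, and $\Delta_- Z$ from (\ref{Zdef}) extends regularly to the horizon, taking the limit $R^\star_{-\infty}\to-\infty$ these fluxes are controlled by the horizon flux $\int_{\mathcal{H}(\tau,0)}[K(\psi^\tau_\flat)]^2$ bounded in (\ref{hozcon}) together with the uniform $N$-energy estimate (\ref{unibs}) applied to $\psi^\tau_\flat$ and $\psi^\tau$, the whole being bounded by $\int_{t^\star=0}J^N_\mu[\psi]n^\mu$. (iii) The inhomogeneous terms involving $H^{(a\omega)}_{m\ell}$ are absorbed by Cauchy--Schwarz, a small fraction of the positive left-hand side swallowing the $u',u$-factors; the leftover $\sum_{m\ell}\int|H|^2\frac{(r^2+a^2)^2}{\Delta_-}$ converts by Plancherel into a spacetime integral of $r^2|F|^2$, which is controlled by (\ref{errorloc}) and hence by the initial energy.

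Combining the two contributions, the estimate for $\psi^\tau_{\flat,\omega\not\approx\omega_+}$ carries the prefactor $e^{k\pi/2}$ inherited from (\ref{mainintdec}) with $k=C_{M,l,a,\alpha,n_1,n_2}\sqrt{L}$, while that for $\psi^\tau_{\flat,\omega\approx\omega_+}$ carries only the polynomial factor $L^3$ from (\ref{mainnonstat}); both are dominated by $e^{C_{M,l,a,\alpha}\sqrt{L}}$, yielding (\ref{ibasic}). The main obstacle I anticipate is the careful treatment of the horizon boundary term: one must identify the microlocal combination $(\omega-\omega_+)\,\mathrm{Im}(u'\bar u)$ with the physical-space $K$-flux on $\mathcal{H}^+$ (so that the contribution is non-negative and bounded by (\ref{hozcon})), exploiting that $K$ becomes null precisely at $\omega=\omega_+$, and to show that the remaining, genuinely $r^\star$-derivative terms appearing in the boundary of $Q_2^{-f_{n_2}}$ and $Q_0^{f_{n_1}}$ are regularised by the explicit vanishing of $\Delta_- Z$ on the horizon; the AdS boundary term requires the corresponding verification that no boundary term survives under the $CH^2_{AdS}$ decay.
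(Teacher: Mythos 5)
Your proposal follows the same route as the paper: integrate (\ref{mainnonstat}) and (\ref{mainintdec}) over $\omega$, sum over $(m,\ell)$, apply Plancherel and orthonormality of the spheroidal harmonics to convert the microlocal quantities into spacetime ones, observe that the boundary term at $R^\star_\infty$ vanishes by the decay of $u$, identify (after using the regularity of $Z$ on $\mathcal{H}^+$) the $R^\star_{-\infty}$ boundary term with the $K$-flux through the horizon controlled by (\ref{hozcon}), and absorb the $H$-errors via Cauchy--Schwarz and (\ref{errorloc}), with the $e^{C\sqrt{L}}$ prefactor coming from (\ref{mainintdec}). This is precisely the argument carried out in Sections \ref{sec:mainterm}--\ref{sec:bndterm} of the paper.
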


We will prove this estimate independently for $\psi_{\flat, \omega \approx \omega_+}^{\tau}$ (from (\ref{mainnonstat}))  and $\psi^\tau_{\flat, \omega \neq \omega_+}$ (from (\ref{mainintdec})), hence for the sum. Later, we will improve further both the weights near infinity (using an $r^p$-weighted estimate) and near the horizon (using the redshift) below. 

\begin{proof}
Integrate (\ref{mainnonstat}) and (\ref{mainintdec}) in $\omega$ and sum over $l$ and $m$ to convert the two estimates into spacetime integrated decay estimates for $\psi^\tau_{\flat, \omega \approx \omega_+}$ and $\psi^\tau_{\flat, \omega \not\approx \omega_+}$ respectively. The precise estimation of the terms which arise is contained in Sections \ref{sec:mainterm}-\ref{sec:bndterm}.
\end{proof}
\subsection{The main terms} \label{sec:mainterm}
Replacing $u$ by $\sqrt{r^2+a^2} \psi$, the term on the left hand side of (\ref{mainintdec}) can be estimated
\begin{align}
\lim_{R^\star_{-\infty} \rightarrow -\infty} \int_{-\infty}^\infty d\omega \int_{R^\star_{-\infty}}^{R^\star_\infty}  dr^\star  \sum_{m \ell} \Bigg\{   \frac{\Delta_-}{\left(r^2+a^2\right)^2} \frac{1}{r^{n_2-2}} \nonumber \\
   \Bigg( \Big|\left(\Psi^{(a\omega)}_{m \ell} \right)^\prime\Big|^2 +   \omega^2 |\Psi^{(a\omega)}_{m \ell}|^2 + r^2 \left(\Psi^{(a\omega)}_{m \ell}\right)^2 +  r^2 \left[\lambda_{m\ell} + a^2 \omega^2\right] \left(\Psi^{(a\omega)}_{m \ell}\right)^2 \Bigg)  \Bigg\}\nonumber \\
\gtrsim  \ \int_{\mathcal{D}} \Bigg[  \left(\partial_t \psi_{\flat, \omega \not\approx \omega_+}^\tau \right)^2 +  \left( \partial_{r^\star}\psi_{\flat, \omega \not\approx \omega_+}^\tau  \right)^2 + r^2 |  {}^\gamma \nabla \psi_{\flat, \omega \not\approx \omega_+}^\tau |_\gamma^2 \nonumber \\ + r^2 \left(\psi_{\flat, \omega \not\approx \omega_+}^\tau\right)^2\Bigg]  \frac{\Delta_-}{\left(r^2+a^2\right)^2} \frac{1}{r^{n_2-2}} \sin \theta dt dr^\star d\theta d\tilde{\phi} \, . \nonumber
\end{align}
Here we have used the estimate (a consequence of Lemma \ref{spheroidal})
\begin{align} \label{uhelp}
\int_{-\infty}^\infty \sum_{m\ell} \left(\lambda_{m\ell} \left(a\omega\right) +a^2\omega^2 \right)| \Psi^{(a\omega)}_{m\ell}|^2 d\omega \nonumber \\
\geq  c_{M,l,a} \int_{-\infty}^\infty dt \int_{\mathbb{S}^2} \sin \theta d\theta d\tilde{\phi} |  {}^\gamma \nabla \psi_{\flat}^\tau|_\gamma^2 \, ,
\end{align}
where $\gamma$ denotes the standard unit metric on the $\left(t,r\right)$-spheres. The same estimate holds for the left hand side of (\ref{mainnonstat}) with $\psi_{\flat, \omega \approx \omega_+}^\tau$ replacing $\psi_{\flat, \omega \not\approx \omega_+}^\tau$ everywhere.
\subsection{The $H$-error-terms} \label{sec:Herror}
For the $H$-error-terms in (\ref{mainintdec}) we estimate for a weight function $w\left(r\right)$
\begin{align}
 \lim_{R^\star_{-\infty} \rightarrow -\infty} \sum_{m, \ell} \int_{-\infty}^\infty d\omega \int_{R^\star_{-\infty}}^{R^\star_\infty} dr^\star \ w\left(r\right) Re \left(u^\prime \overline{H} \right) = \nonumber \\
 \lim_{R^\star_{-\infty} \rightarrow -\infty} \sum_{m, \ell} \int_{-\infty}^\infty d\omega \int_{R^\star_{-\infty}}^{R^\star_\infty} dr^\star \ w\left(r\right) Re \left(\Psi^\prime \sqrt{r^2+a^2} \overline{H} + \Psi \frac{r \Delta_-}{\left({r^2+a^2}
 \right)^\frac{3}{2}} \overline{H} \right)  \nonumber \\
\leq  \epsilon \cdot \lim_{R^\star_{-\infty} \rightarrow -\infty} \sum_{m, \ell} \int_{-\infty}^\infty d\omega \int_{R^\star_{-\infty}}^{R^\star_\infty} dr^\star \ \left( r^2 |\Psi|^2 + |\Psi^\prime|^2 \right) \frac{\Delta_-}{r^4} r^{-2p} \nonumber \\
 + \frac{C_{M,l}}{\epsilon} \cdot \lim_{R^\star_{-\infty} \rightarrow -\infty} \sum_{m, \ell} \int_{-\infty}^\infty d\omega \int_{R^\star_{-\infty}}^{R^\star_\infty} dr^\star \  |H|^2 \left[r^{1+p} \cdot w\left(r\right)\right]^2 \frac{r^4}{\Delta_-} \nonumber
 \end{align}
for any $\epsilon>0$ and $p\geq 0$. For the last term, we further estimate
\begin{align}
 \lim_{R^\star_{-\infty} \rightarrow -\infty} \sum_{m, \ell} \int_{-\infty}^\infty d\omega \int_{R^\star_{-\infty}}^{R^\star_\infty} dr^\star \  |H|^2 \left[r^{1+p} \cdot w\left(r\right)\right]^2 \frac{r^4}{\Delta_-} \nonumber \\
 \leq \lim_{R^\star_{-\infty} \rightarrow -\infty} \sum_{m, \ell} \int_{-\infty}^\infty d\omega \int_{R^\star_{-\infty}}^{R^\star_\infty} dr^\star \  |F^{(a\omega)}_{m \ell} |^2 \left[r^{1+p} \cdot w\left(r\right)\right]^2 \Delta_- \cdot  {r^2} \nonumber \\
  \leq \lim_{R^\star_{-\infty} \rightarrow -\infty}\int_{-\infty}^\infty dt \int_{R^\star_{-\infty}}^{R^\star_\infty} dr^\star \int_{\mathbb{S}^2} r^2\sqrt{\gamma} d\theta d\tilde{\phi} \  |F_\flat |^2 \left[r^{1+p} \cdot w\left(r\right)\right]^2 \Delta_- \nonumber \\
  \leq \lim_{R^\star_{-\infty} \rightarrow -\infty} \int_{-\infty}^\infty dt \int_{R^\star_{-\infty}}^{R^\star_\infty} dr^\star \int_{\mathbb{S}^2} r^2\sqrt{\gamma} d\theta d\tilde{\phi} \  r^2 |F |^2 \left[r^{1+p} \cdot w\left(r\right)\right]^2 \frac{\Delta_-}{r^2} \nonumber \\ 
  \lesssim \|\psi\|^2_{H^1_{AdS}\left(\Sigma_0\right)}
\end{align}
provided $|r^{1+p} \cdot w\left(r\right)| \leq C$ is uniformly bounded. Here the last step follows from (\ref{errorloc}). All error-terms appearing in (\ref{mainintdec}), namely
\begin{align}
Re\left(u \overline{H}\right) f_{n_1}^\prime \ \ , \ \ Re \left(u^\prime \overline{H}\right) f_{n_1} \ \ , \ \ -2f_{n_2} Re \left(u^\prime \overline{H}\right) 
\end{align}
can be treated in this way as is readily checked by inspecting the decay of the weights. Clearly, the $H$-error in (\ref{mainnonstat}) is also controlled by $ \|\psi\|^2_{H^1_{AdS}\left(\Sigma_0\right)}$ in view of the previous computation.
\subsection{The boundary-terms} \label{sec:bndterm}
First a rough outline of the argument: The boundary terms at infinity will be easily seen to vanish. The boundary term on $R^\star_{-\infty}$ becomes -- after integration over $\omega, m, \ell$ -- a boundary term on a timelike hypersurface of constant $R^\star_{-\infty}$ on the black hole exterior. In the limit $R^\star_{-\infty} \rightarrow -\infty$ this term becomes a boundary integral on the horizon, which is in turn controlled by the energy of the Hawking-Reall Killing vectorfield.

We now turn to the precise structure of the terms. Note that all boundary terms at $R^\star_{-\infty}$ in both (\ref{mainintdec}) and (\ref{mainnonstat}) are of the form
\begin{align}
{Q}^{...} \left(R^\star_{-\infty} \right) = y \left[ |u^\prime|^2 + \left(\omega - \frac{ma\Xi}{r^2+a^2} \right)^2 |u|^2 \right] + \textrm{terms with $\Delta_-$-factor}
\end{align}
for a bounded function $y=y\left(r^\star\right)$.
Recall that the vectorfield $Z$ defined in \eqref{Zdef} is regular on the future event horizon. In particular, the quantities $Z\left(\psi^\tau\right)$, $Z\left(\psi^\tau_\flat\right)$ are bounded pointwise. In Fourier-space, this condition translates to the condition that 
\begin{align}
\lim_{r^* \rightarrow -\infty} \left[ u^\prime + i\left(\omega - \frac{ma\Xi}{r^2+a^2} \right) u \right] = 0 
\end{align}
holds. Recalling the flux of the Hawking Reall vectorfield through the horizon
\begin{align}
 \int_{\mathcal{H}^+} \left(\partial_{t} \psi + \frac{ma\Xi}{r_+^2+a^2} \partial_{{\phi}} \psi \right)^2 \leq \int_{\mathcal{H}^+} J^K_\mu \left[\psi \right] n^\mu_{\mathcal{H}^+} \,,
\end{align}
one obtains the estimate (``$+0$" indicating terms vanishing in the limit)
\begin{align}
\lim_{R^\star_{-\infty} \rightarrow -\infty} \int_{R^\star_{-\infty}}^{\infty} d\omega \sum_{m \ell} \Bigg\{  - \tilde{Q}^{...} \left(R^\star_{-\infty}\right) \Bigg\}  \nonumber \\
\lesssim \lim_{R^\star_{-\infty} \rightarrow -\infty} \int_{-\infty}^{\infty} dt \int_{\mathbb{S}^2} \sin \theta d\theta d\tilde{\phi} \  |K \psi^\tau_\flat  \left(t,r\left(R^\star_{-\infty}\right),\theta,\tilde{\phi}\right)|^2 + 0
\nonumber \\
\leq \lim_{R^\star_{-\infty} \rightarrow -\infty} \int_{-\infty}^{\infty} dt \int_{\mathbb{S}^2}  \sin \theta d\theta d\tilde{\phi} \  |K \psi^\tau  \left(t,r\left(R^\star_{-\infty}\right),\theta,\tilde{\phi}\right)|^2 + 0
\nonumber \\
= \int_{\mathcal{H}^+\left(0,\tau\right)} | K\left(\psi^\tau\right)|^2  \leq \| \psi \|^2_{H^1_{AdS}\left(\Sigma_0\right)} \, ,
\end{align}
with the last step following from (\ref{hozcon}).
\subsection{Improving the weights near null-infinity}
We can improve the weights near infinity in Proposition \ref{prop:bases}, so that we actually control the full energy integrated in time (of course, it is still a degenerate energy near the horizon). This feature is characteristic of the asymptotically AdS end and was first observed in \cite{gs:stab}.

The following Hardy inequality will be useful, which can be proven using a simple integration by parts and Cauchy-Schwarz (or deduced from Lemma \ref{lem:hardyabsorb}).
\begin{lemma} \label{inftyhardy}
Given $\epsilon>0$, there exists an $\tilde{R}^\star$ depending only on the parameters $a, M, l$ such that
\begin{align}
\int_{\tilde{R}^\star}^{R^\star_\infty} dr^\star r^2 |u|^2 \leq \left[4 - \epsilon\right] l^4 \int_{\tilde{R}^\star}^{R^\star_\infty} dr^\star |u^\prime|^2 \, .
\end{align}
\end{lemma}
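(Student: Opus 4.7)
The plan is to reduce the inequality to the classical sharp one-sided Hardy inequality via the change of variable $y = R^\star_\infty - r^\star$, under which the conformal boundary at $r = \infty$ is mapped to $y = 0$ and the weight $r^2$ blows up like $l^4/y^2$.

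I would first analyse the asymptotic behaviour of $r(r^\star)$ near $r^\star = R^\star_\infty$. From the defining ODE $dr^\star/dr = (r^2+a^2)/\Delta_-$ and the expansion $\Delta_- = r^4/l^2 + O(r^2)$, one obtains $(r^2+a^2)/\Delta_- = l^2/r^2 + O(r^{-4})$; integrating from $r$ to infinity gives $R^\star_\infty - r^\star(r) = l^2/r + O(r^{-3})$. Inverting yields $r \cdot (R^\star_\infty - r^\star) \to l^2$, and hence $r^2 (R^\star_\infty - r^\star)^2 \to l^4$, as $r^\star \to R^\star_\infty$. In particular, for any $\delta > 0$ there is an $\tilde{R}^\star$, depending only on $a$, $M$, $l$, such that $r^2 \leq (1+\delta)\, l^4 / (R^\star_\infty - r^\star)^2$ uniformly on $[\tilde{R}^\star, R^\star_\infty]$.

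Next, the hypothesis $u\, r^{1/2} = o(1)$ at infinity (the same decay used in Lemma \ref{lem:hardyabsorb}) translates, via $r \sim l^2/y$, into $u = o(y^{1/2})$ as $y \to 0$, which is the one-sided Dirichlet decay required for the boundary term at the conformal infinity to vanish. Since $(R^\star_\infty - r^\star)^{-2}$ is the $r^\star$-derivative of $(R^\star_\infty - r^\star)^{-1}$, integration by parts yields
\begin{equation*}
\int_{\tilde{R}^\star}^{R^\star_\infty}  \frac{|u|^2}{(R^\star_\infty - r^\star)^2} \, dr^\star = -\frac{|u(\tilde{R}^\star)|^2}{R^\star_\infty - \tilde{R}^\star} - 2\int_{\tilde{R}^\star}^{R^\star_\infty} \frac{\operatorname{Re}(\bar u \, u^\prime)}{R^\star_\infty - r^\star} \, dr^\star.
\end{equation*}
The boundary term at $\tilde{R}^\star$ is non-positive and is simply discarded; applying Cauchy--Schwarz to the remaining integral then gives the sharp one-sided Hardy bound $\int (R^\star_\infty - r^\star)^{-2} |u|^2 \, dr^\star \leq 4 \int |u^\prime|^2 \, dr^\star$.

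Combining the two steps gives $\int r^2 |u|^2 \, dr^\star \leq 4(1+\delta) l^4 \int |u^\prime|^2 \, dr^\star$, and the $\delta$ can be taken as small as desired by shrinking the interval. The only genuinely non-routine ingredient is the asymptotic expansion $r^2 (R^\star_\infty - r^\star)^2 = l^4 + O\bigl((R^\star_\infty - r^\star)^2\bigr)$; the remaining steps are classical. I note that the sharp Hardy constant in this one-sided setting is exactly $4$ (approached by test functions $u \sim y^{1/2+\eta}$ with $\eta \to 0$), so the approach naturally produces a constant of the form $4+\epsilon$; achieving a constant strictly below $4$ as the statement reads would require an additional improvement (for instance a two-sided Dirichlet Poincar\'e-type gain exploiting the boundedness of $R^\star_\infty - \tilde{R}^\star$, for which the best Hardy constant improves to $4/(1 + 4\pi^2/\log^2(R^\star_\infty - \tilde{R}^\star))$).
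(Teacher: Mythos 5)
Your approach — asymptotics of $r(r^\star)$ near $r^\star = R^\star_\infty$, then the one-sided sharp Hardy inequality on a finite interval — is exactly what the paper intends (the paper offers no detailed proof, noting only that the lemma ``can be proven using a simple integration by parts and Cauchy--Schwarz, or deduced from Lemma \ref{lem:hardyabsorb}''). Your expansion $R^\star_\infty - r^\star = l^2/r + O(r^{-3})$ is correct, and in fact a cleaner observation is available: from the identity
\begin{equation*}
\frac{\rho^2+a^2}{\Delta_-(\rho)} - \frac{l^2}{\rho^2} = \frac{-l^2\bigl(\rho^2 - 2M\rho + a^2\bigr)}{\rho^2\,\Delta_-(\rho)} < 0 \quad\text{for $\rho$ large},
\end{equation*}
one sees that $R^\star_\infty - r^\star \le l^2/r$ and hence $r^2 \le l^4/(R^\star_\infty - r^\star)^2$ outright on $[\tilde{R}^\star, R^\star_\infty]$, so your $(1+\delta)$ factor is unnecessary and the integration by parts gives the constant exactly $4l^4$.

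Your concern about the constant is well-founded. As you observe, the sharp one-sided Hardy constant on a bounded interval with the Dirichlet condition at the singular endpoint (here $u\, r^{1/2} = o(1)$ at $r^\star = R^\star_\infty$, i.e.\ $u = o(y^{1/2})$ at $y = 0$) and no condition at the other endpoint is exactly $4$: it is approached by $u \sim y^{1/2+\eta}$ as $\eta \to 0^+$, and these test functions concentrate near $y = 0$ where the ratio $r^2\,y^2 / l^4 \to 1$, so the strict inequality $r^2 < l^4/y^2$ yields no asymptotic gain. Consequently a constant of the form $(4-\epsilon)l^4$ with $\epsilon > 0$ independent of $u$ \emph{cannot} hold uniformly, and the ``$4-\epsilon$'' in the lemma as printed should read ``$4+\epsilon$'' (or simply ``$4$''). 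Your suggested repair via a Poincar\'e-type improvement on bounded intervals does not apply here, since that requires Dirichlet data at both endpoints and $u(\tilde{R}^\star)$ is free. Importantly, the sign of $\epsilon$ is immaterial for the lemma's sole application in the proof of Proposition \ref{opti}: there one needs $\gamma\, C < 2$ with $\gamma = \alpha - 7/4 < 1/2$ (from $\alpha < 9/4$) and $C$ the Hardy constant, which already holds with $C = 4$. Your proof therefore establishes the version of the lemma that is actually needed, and correctly flags what appears to be a sign typo in the stated constant.
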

\begin{proposition} \label{opti}
We have the estimate

\begin{align} \label{nolk}
 \int_{\mathcal{D}} \Bigg[ \left(\partial_t \psi_{\flat}^\tau \right)^2 +  r^2 \left( \partial_{r^\star} \psi_{\flat}^{\tau} \right)^2 +  r^2 |  {}^\gamma \nabla \psi_{\flat}^{\tau}|_\gamma^2 
   + r^4  \psi_{\flat}^\tau  \Bigg]  \frac{1}{r^{2}} dt^\star dr d\sigma_{\mathbb{S}^2} 
   \nonumber \\
\leq C_{M,l,a} \cdot e^{C_{M,l,a} \sqrt{L}} \int_{t^\star=0} J^N_\mu \left[\psi\right] n^\mu \, .
\end{align}
\end{proposition}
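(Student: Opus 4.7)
The goal is to upgrade the $r$-weights in the integrated decay estimate (\ref{ibasic}) from Proposition \ref{prop:bases} to those in (\ref{nolk}). This improvement is possible in the AdS setting because of the reflecting boundary conditions at null-infinity: in contrast to the asymptotically-flat case, energy does not escape through $\mathcal{I}$, so weighted multiplier currents generate boundary terms at $\mathcal{I}$ that vanish (or can be absorbed) by virtue of the $CH^k_{AdS}$ decay of $\psi$ and its derivatives. The approach follows the $r^p$-weighted method developed by the authors in \cite{gs:stab}.

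Concretely, I would apply the microlocal current identity (\ref{fbulk}) at the level of the Fourier modes $u = \sqrt{r^2+a^2}\,\Psi^{(a\omega)}_{m\ell}$ of $\psi_\flat^\tau$ with a multiplier of the form $f = \chi(r^*)\,\widetilde{f}(r^*)$, where $\chi$ is a smooth cut-off supported in $\{r \geq R_0\}$ (with $R_0$ large but fixed in terms of $M, l, a$) and $\widetilde{f}$ is a growing weight chosen so that the good bulk term $2f'|u'|^2$ produces, near infinity, precisely the $r$-weight on $|u'|^2$ corresponding to the $r^2(\partial_{r^*}\psi)^2$ entry on the left-hand side of (\ref{nolk}) after the substitution $u = \sqrt{r^2+a^2}\,\psi$. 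The zeroth order contribution $-(fV' + \tfrac{1}{2}f''')|u|^2$ is of order $\sim r^{\ast}|u|^2$ with the potentially bad sign in the BF range, and is absorbed into a fraction of the good derivative term by invoking the Hardy inequality of Lemma \ref{inftyhardy}, whose critical constant $4l^4$ is precisely matched to the BF bound $\alpha < 9/4$. When $\alpha \geq 1$, one first borrows a small multiple of $\tfrac{1}{4l^2}\tfrac{\Delta_-}{r^2+a^2}|u|^2$ from the derivative via Lemma \ref{lem:hardyabsorb}, exactly as in the proof of Lemma \ref{lem:Vtilde}. The cut-off error terms from $\chi'$ are supported in the bounded region $\{R_0/2 \leq r \leq R_0\}$ and are absorbed using Proposition \ref{prop:bases}; this is what introduces the exponential factor $e^{C_{M,l,a}\sqrt{L}}$ on the right-hand side of (\ref{nolk}).

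The remaining terms are routine. The source-term errors $\mathrm{Re}(u'\overline H)$ and $\mathrm{Re}(u\overline H)$ are estimated exactly as in Section \ref{sec:Herror}, using that the growing weight $\widetilde f$ still satisfies $r^{1+p}|\widetilde f| \leq C$ for an appropriate $p \geq 0$, so these errors remain controlled by $\|\psi\|^2_{H^1_{AdS}(\Sigma_0)}$ via (\ref{errorloc}). The boundary term at $r^* = R^*_\infty$ vanishes by the decay rates of $u$ and $u'$ dictated by the $CH^k_{AdS}$ regularity and the AdS boundary conditions, while the horizon boundary vanishes as $R^*_{-\infty} \to -\infty$ since $f = 0$ there. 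Summing in $\omega, m, \ell$ exactly as in Sections \ref{sec:mainterm}--\ref{sec:bndterm}, and converting from $u$- to $\psi$-weights via $\psi = u/\sqrt{r^2+a^2}$ with an additional application of Hardy to absorb the cross term from differentiating $(r^2+a^2)^{-1/2}$, then yields (\ref{nolk}).

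The main obstacle is the criticality of the Hardy absorption as $\alpha$ approaches $9/4$: the constant in Lemma \ref{inftyhardy} becomes sharp there, so the weight $\widetilde f$ and the parameter $\epsilon$ in the Hardy inequality must be tuned carefully so that the bad zeroth-order term can be uniformly absorbed throughout the BF range. This reflects the fundamental role of $\alpha < 9/4$ as the sharp threshold permitting the integrated control of the asymptotically-AdS behavior.
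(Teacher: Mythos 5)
Your overall strategy — apply the microlocal current $Q_0^f$ with a multiplier that grows toward the boundary, use the Hardy inequality of Lemma \ref{inftyhardy} to absorb the zeroth-order contributions (with the criticality at $\alpha = 9/4$), treat the $H$-errors and boundary terms as in Section \ref{sec:bndterm}, and pay the exponential $e^{C\sqrt{L}}$ price by invoking Proposition \ref{prop:bases} in an interior region — is the right skeleton, and in fact the paper's proof uses exactly the simple choice $f=-1/r$ for $Q_0^f$. But there is a genuine gap in the way you localize.

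You take the cut-off threshold $R_0$ to be \emph{fixed, independent of $L$}. This does not suffice. The potential $V$ contains the piece $V_0 \sim \frac{\Delta_-}{(r^2+a^2)^2}\bigl(\lambda_{m\ell}+a^2\omega^2\bigr)$, and after differentiating in $r^\star$ and multiplying by $f$, this produces a bulk contribution of order $C_{M,l,a}\,\bigl(1+\frac{L}{r}\bigr)|u|^2$ with the wrong sign. The good part of the bulk, even after the Hardy absorption, is only of size $c(\tfrac{9}{4}-\alpha)\,\frac{r^2}{l^6}|u|^2$. The bad term dominates the good one throughout the range $R_0 \le r \lesssim L^{1/3}$ (and the paper conservatively takes the transition point $r_L$ to scale linearly in $L$). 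With your fixed $R_0$ there is an $L$-dependent intermediate zone where the bulk is negative and cannot be absorbed by Hardy, and it is also not covered by your $\{R_0/2\le r\le R_0\}$ cut-off error that you feed into Proposition \ref{prop:bases}. The paper avoids this by letting the threshold $r_L$ depend on $L$: the bulk is genuinely non-negative for $r\ge r_L$ after Hardy, and in $r\le r_L$ the entire (bounded) bulk is estimated against the basic low-frequency estimate at the cost of a factor $L^{n_2}$, which is harmless because the exponential loss in $L$ has already occurred. You should replace your fixed $R_0$ by this $L$-dependent $r_L$ and observe explicitly that the polynomial-in-$L$ loss is subsumed by the exponential factor.

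A smaller, related inconsistency: you describe $\widetilde f$ as a "growing weight" and simultaneously impose $r^{1+p}|\widetilde f|\le C$ for some $p\ge 0$. These are incompatible if "growing" means growing in $r$; the paper's $f=-1/r$ is decaying in $r$ (monotone increasing in $r^\star$ toward zero). This is worth clarifying, since the absorption of the $H$-error in Section \ref{sec:Herror} really does require $r^{1+p}|f|$ to be uniformly bounded.

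Finally, a cosmetic difference: the paper does not cut $f$ off near the horizon — instead the horizon boundary term of $Q_0^{-1/r}$ is handled exactly as in Section \ref{sec:bndterm} using the $K$-flux from the boundedness statement. Your choice of a compactly supported-away-from-the-horizon multiplier is also legitimate and makes the horizon boundary term trivially vanish, at the (minor) cost of an extra cut-off error; either route works once the $L$-dependence of the transition region is corrected.
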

Remark: Inspecting the proof of Proposition \ref{prop:bases}, we see that we already proved this for $\psi^\tau_{\flat,\omega \approx \omega_+}$.  
\begin{proof}
Choose $f = -\frac{1}{r}$ for the microlocal current ${Q}_0^f$ in (\ref{fcurrent}). Repeating the conversion to physical space of the previous section, one sees that the boundary terms near infinity and the horizon can be dealt with as before using only the boundedness statement.  What does the resulting spacetime term control? Inspecting (\ref{fbulk}), we see that
\begin{align}
\left(Q^f_0\right)^\prime \geq \frac{2\Delta_-}{r^2 \left(r^2+a^2\right)} |u^\prime|^2 - C_{M,l,a} \left(1+\frac{L}{r}\right)  |u|^2 + 2\left(2- \alpha\right) \frac{r^2}{l^6} |u|^2  - \textrm{H-Err} \nonumber \, .
\end{align}
Let us choose an $r_L>r_{+}$ so large that both
\begin{align}
 \frac{\Delta_-}{r^2 \left(r^2+a^2\right)} \geq \frac{1}{l^2} + \frac{1}{2r^2}
\end{align}
and also
\begin{align}
\frac{1}{2} \left[\frac{1}{2} + 2\left(2-\alpha\right) \right] \frac{r^2}{l^6} \geq C_{M,l,a} \left(1+\frac{L}{r}\right) 
\end{align}
hold for $r \geq r_L$. Note that (besides the dependence of $M$, $l$ and $a$) $r_L$ will depend polynomially (in fact, linearly) on $L$, and like $\left(9/4-\alpha\right)^{-1}$ on the mass. 
We now apply Lemma \ref{inftyhardy} (possibly making $r_L$ a bit larger), which guarantees
\begin{align}
\int^{R^\star_\infty}_{r_L^\star} \left({Q}^f_0\right)^\prime dr^\star \geq  c_{M,l,a,\alpha} \int^{R^\star_\infty}_{r_L^\star} 2 \frac{\Delta_-}{r^2} \left( |\Psi^\prime|^2 + r^2 \Psi^2\right) dr^\star   - \textrm{H-Err} \, . \nonumber
\end{align}
In the region $r\leq r_L$ we can estimate the $r$-weights by $L$-weights to find
\begin{align}
\Big| \int_{R^\star_{-\infty}}^{r_L^\star} \left({Q}^f_0\right)^\prime dr^\star \Big| \lesssim  L^{n_2}  \int_{R^\star_{-\infty}}^{r_L^\star}  \frac{2\Delta_-}{\left(r^2+a^2\right)^2}\frac{1}{r^{n_2-2}} \left( |\Psi^\prime|^2 + r^2 \Psi^2\right) dr^\star   - \textrm{H-Err} \nonumber  ,
\end{align}
with the first term being already controlled from our basic estimate. The fact that we lose polynomially in $L$ here is irrelevant in view of the exponential loss which has already occurred.

Finally, the $H$-error terms can be treated as previously, namely applying the estimates of Section \ref{sec:Herror} with $w\left(r\right) = r^{-1}$ and $p=0$). Hence the estimate of the Proposition follows.
\end{proof}

\subsection{Improving the weights near the horizon} \label{sec:hozimprove}
We would like to improve the weights near the horizon in Proposition \ref{opti} so that the left hand side is actually the non-degenerate energy integrated in time. However, we cannot construct
a microlocal redshift estimate applying the techniques by which we derived the integrated decay estimate of Proposition \ref{opti}. The reason is that the resulting boundary term will not have a good sign near the horizon (since it will be supported on both the future and the past horizon). One could of course estimate this boundary term by the (non-degenerate) $N$-energy of $\psi^\tau$ on the future event horizon. However, the latter actually diverges in the limit as $\tau \rightarrow \infty$, since the full solution $\psi$ only decays logarithmically. This is in sharp contrast with the situation, when we derived the integrated decay estimate of Proposition \ref{opti}: Here the boundary term could be localized on the future event horizon and controlled by the (degenerate!) $K$-energy of $\psi^\tau$, which \emph{is} controlled from the boundedness statement.\footnote{This difficulty is absent in the asymptotically flat case, where one proves a local integrated decay estimate 
 for the full solution. The problem here arises precisely from the fact that we cannot prove an integrated decay estimate for the high-modes because of the ``stable trapping"!}

The resolution is to apply the $N$-estimate in physical space locally between two spacelike $\Sigma_{t^\star}$-slices. The only problem, then, is to get a good bound for the $N$-energy of $\psi^\tau_\flat$ (from the total energy) on some initial slice. This will be achieved later using the pigeonhole principle. We start with the local $N$-estimate for $\psi^\tau_\flat$:

\begin{proposition} \label{localN}

\begin{align} \label{dir}
  \int_{\Sigma_{\tau_2}} J^N_{\mu}\left[\psi_\flat^\tau\right] n^\mu + \int_{\mathcal{R}\left(\tau_1,\tau_2\right)} J^N_{\mu}\left[\psi_\flat^\tau\right] n^\mu \lesssim   \int_{\Sigma_{\tau_1}} J^N_{\mu}\left[\psi_\flat^\tau\right] n^\mu \nonumber \\
  + C_{M,l,a} \cdot e^{C_{M,l,a} \sqrt{L} \frac{\pi}{2}} \int_{t^\star=0} J^N_\mu \left[\psi \right] n^\mu
\end{align}
holds for any $\tau_2 \geq \tau_1 \geq 0$.
\end{proposition}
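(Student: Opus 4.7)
The plan is to apply the redshift estimate (\ref{reold}) directly to the low-frequency piece $\psi_\flat^\tau$ rather than to $\psi$. Since $\psi_\flat^\tau = \mathcal{P}_\flat \psi^\tau$ and the frequency projection commutes with $\Box_g + \frac{\alpha}{l^2}$ (in view of its $t$-independent, axisymmetric and $(\omega,m,\ell)$-separable structure), $\psi_\flat^\tau$ satisfies $\Box_g \psi_\flat^\tau + \frac{\alpha}{l^2}\psi_\flat^\tau = F_\flat := \mathcal{P}_\flat F$, with $F$ as in (\ref{bpf}). The vectorfield identities underlying Theorem \ref{theo:bndness} therefore transfer to $\psi_\flat^\tau$ up to the single additional source term $\int_{\mathcal{R}(\tau_1,\tau_2)} N(\psi_\flat^\tau) F_\flat$ arising from the divergence of $J^N[\psi_\flat^\tau]$.

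Applying this to the redshift multiplier yields
\begin{equation*}
\int_{\Sigma_{\tau_2}} J^N_\mu[\psi_\flat^\tau] n^\mu + \int_{\mathcal{R}(\tau_1,\tau_2)\cap\{r\leq r_0\}} J^N_\mu[\psi_\flat^\tau] n^\mu \lesssim \int_{\Sigma_{\tau_1}} J^N_\mu[\psi_\flat^\tau] n^\mu + \mathcal{E}_{\textrm{loc}} + \mathcal{E}_{F},
\end{equation*}
where $\mathcal{E}_{\textrm{loc}}$ collects the transition bulk $\int_{\{r_0\leq r\leq r_1\}} J^N_\mu[\psi_\flat^\tau] n^\mu$ together with the zeroth-order term $C\int_{\{r\leq r_1\}} |\psi_\flat^\tau|^2$ coming from the mass, and $\mathcal{E}_F := \bigl|\int N(\psi_\flat^\tau) F_\flat\bigr|$. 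All integrands constituting $\mathcal{E}_{\textrm{loc}}$ are supported on a bounded $r$-range and are therefore controlled by $e^{C_{M,l,a}\sqrt{L}\,\pi/2} \int_{t^\star=0} J^N_\mu[\psi] n^\mu$ via (\ref{nolk}) of Proposition \ref{opti}. The remaining piece $\int_{\mathcal{R}(\tau_1,\tau_2)\cap\{r\geq r_0\}} J^N_\mu[\psi_\flat^\tau] n^\mu$, needed to complete the full spacetime bulk on the left of the proposition, is absorbed in the same fashion: away from the horizon $N$ coincides (up to a compactly-supported modification) with the Killing field $K$, so its energy density is equivalent to the integrand of (\ref{nolk}) on any compact $r$-range, while being dominated by the improved $r$-weights of (\ref{nolk}) near infinity.

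The main obstacle is controlling the source error $\mathcal{E}_F$. Cauchy--Schwarz with a small parameter $\epsilon$ gives $\mathcal{E}_F \leq \epsilon \int |N(\psi_\flat^\tau)|^2 + \epsilon^{-1} \int |F_\flat|^2$; the first integral is absorbed partly into the newly gained redshift bulk in $\{r\leq r_0\}$ and partly into (\ref{nolk}), while for the second one uses that the frequency cut-offs $\zeta, \tilde\zeta$ and the spherical-mode projection defining $\mathcal{P}_\flat$ are $L^2$-bounded operators on $L^2(\mathbb{R}_t\times\mathcal{R}/\mathbb{R}_t)$, so that Parseval yields $\|F_\flat\|_{L^2} \leq \|F\|_{L^2} \lesssim \|\psi\|_{H^1_{AdS}(\Sigma_0)}$ by (\ref{errorloc}). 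The subtle point is that, unlike $F$, the projected source $F_\flat$ is no longer compactly supported in $t^\star$ — the Fourier projection smears it over all of spacetime — so no time-localized bound is available; Parseval nonetheless controls it by the initial energy, which is exactly the quantity appearing on the right-hand side of the proposition.
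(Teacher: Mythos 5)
Your proposal is correct and follows essentially the same route as the paper: apply the $N$-multiplier estimate in physical space to $\psi_\flat^\tau$ viewed as a solution of the inhomogeneous equation $\Box_g \psi_\flat^\tau + \frac{\alpha}{l^2}\psi_\flat^\tau = F_\flat^\tau$, combine the redshift estimate (\ref{reold}) with Proposition \ref{opti} to control the spacetime bulk, and handle the source error $\int F_\flat^\tau N(\psi_\flat^\tau)$ by Cauchy--Schwarz with a small parameter plus the Parseval bound $\int |F_\flat^\tau|^2 \lesssim \int |F^\tau|^2 \lesssim \|\psi\|^2_{H^1_{AdS}(\Sigma_0)}$. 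Your remark about $F_\flat^\tau$ not being compactly supported in $t^\star$ is accurate and is precisely why the paper controls this term over the full domain via Parseval rather than by a time-localized argument.
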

\begin{proof}
Apply the $N$-estimate for $\Box_g \psi^\tau_\flat + \frac{\alpha}{l^2} \psi^\tau_\flat = F^\tau_\flat$ in physical space from $\tau_1$ to $\tau_2$. The spacetime term $K^N\left[\psi_\flat^\tau\right]$ has a good sign near the event horizon (except for the zeroth-order mass term) while away from it it is estimated from Proposition \ref{opti} (which also takes care of the zeroth-order mass-term). Add the estimate (\ref{reold}) and  the estimate of Proposition \ref{opti} to the resulting estimate. This yields (\ref{dir}) except for the error term
\begin{align}
\int_{\mathcal{R}\left(\tau_1,\tau_2\right)} F_\flat^\tau N\left(\psi_\flat^\tau\right)
\end{align}
on the right hand side. The latter is controlled by borrowing an $\epsilon$ from the spacetime term on the left and controlling $\int_{\mathcal{R}\left(\tau_1,\tau_2\right)} |F_\flat^\tau|^2$ from the boundedness statement as in Section \ref{sec:Herror}.
\end{proof}

\begin{remark}
Using elementary spectral analysis (involving Weyl's law and compactness arguments) one can show that the first term on the right hand side of (\ref{dir}) is always finite (with a constant that may blow up as $L \rightarrow \infty$). This is not needed in the following, as we are going to apply the above Proposition only for special $\Sigma_{\tau_1}$, for which a better, completely explicit, bound for the right hand side is available.
\end{remark}

\subsection{Higher derivatives} \label{higherder}
The estimates of Propositions \ref{opti} and \ref{localN} can be obtained for all higher derivatives in by commuting globally with $T$ and locally near the horizon with the redshift vectorfield (as in \cite{Mihalisnotes, HolzegelAdS}) and locally near infinity with angular momentum operators. Since these estimates have become standard after \cite{Mihalisnotes} and have appeared in detail previously in the literature, they are omitted.
\section{Estimates for $\psi^\tau_{\sharp}$} \label{se:esph}
Recall the densities (\ref{endensdef}) and the energies defined at the end of Section \ref{sec:norms}. 
\begin{proposition} \label{propsharp}
We have the estimate
\begin{align}
\int_{\mathcal{D}} e_1\left[\psi_\sharp^\tau\right] \leq \frac{1}{L} E_2\left[\psi\right] \cdot \tau
\end{align}
\end{proposition}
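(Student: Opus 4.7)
The plan is to exploit Plancherel (in $t$) and Parseval (in the spheroidal decomposition on $\mathbb{S}^2$) together with the very definition of $\psi_\sharp^\tau$: its Fourier--spheroidal components are supported on $\{\mathcal{L}\ge L\}$, where $\mathcal{L}=a^2\omega^2+\lambda_{m\ell}(a\omega)$. A trivial insertion of the bound $1\le\mathcal{L}/L$ will trade a factor of $L^{-1}$ for one order of $(t,\mathbb{S}^2)$-frequency, i.e.\ one order of physical-space derivative on $\psi^\tau$. The resulting higher-order norm of $\psi^\tau$ will be controlled on each slice by the $T$- and $\Omega_i$-commuted boundedness statement (\ref{bndadded}); the linear factor of $\tau$ will come from integrating this uniformly bounded energy in $t^\star\in[0,\tau]$, outside of which $\psi^\tau$ vanishes.

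\textbf{Key steps.} First, at fixed $r$, the orthonormality of $\{S_{m\ell}(a\omega,\cos\theta)e^{im\tilde\phi}\}$ on $\mathbb{S}^2$ at each $\omega$ (Appendix \ref{ap:suf}) combined with Plancherel in $t$ yields the Parseval-type identity
\[
\int_{\mathbb{R}}dt\int_{\mathbb{S}^2}|\psi_\sharp^\tau|^2\sin\theta\,d\theta\,d\tilde\phi=\int d\omega\sum_{m,\ell}\bigl|1-\zeta(\mathcal{L}/L)\bigr|^2\bigl|(\widehat{\psi^\tau})^{(a\omega)}_{m\ell}\bigr|^2.
\]
Since $1-\zeta(\mathcal{L}/L)$ vanishes on $\{\mathcal{L}<L\}$, the right-hand side is bounded by $L^{-1}\int d\omega\sum\mathcal{L}\,|(\widehat{\psi^\tau})^{(a\omega)}_{m\ell}|^2$. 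We then translate the weight $\mathcal{L}$ back to physical space: Plancherel converts $a^2\omega^2$ into $a^2(\partial_{t^\star}\psi^\tau)^2$, while $\lambda_{m\ell}=\langle P_\alpha S_{m\ell},S_{m\ell}\rangle$ becomes $\langle P_\alpha\widehat{\psi^\tau},\widehat{\psi^\tau}\rangle$, which by (\ref{def:pop})--(\ref{def:palpha}) is bounded on $\mathbb{S}^2$ by $\int_{\mathbb{S}^2}(\Delta_\theta|\partial_\theta\widehat{\psi^\tau}|^2+\Xi^2\sin^{-2}\!\theta\,|\partial_{\tilde\phi}\widehat{\psi^\tau}|^2)\sin\theta\,d\theta\,d\tilde\phi$ plus lower-order terms; the imaginary cross term $-2i\xi(\Xi a^2/l^2)\cos^2\!\theta\,\Delta_\theta^{-1}\partial_{\tilde\phi}$ is absorbed by Cauchy--Schwarz against the angular gradient and a fraction of the $a^2\omega^2$-contribution already present in $\mathcal{L}$.

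The derivative terms of $e_1[\psi_\sharp^\tau]$ are handled identically: since $\partial_{t^\star}=\partial_t$ is a pure Fourier multiplier and $\partial_r$ acts only on the radial components (the spheroidal basis being $r$-independent), both commute with the Fourier-in-$t$/spheroidal projection, so the same argument applies with $\psi^\tau$ replaced by $\partial_{t^\star}\psi^\tau$ or $\partial_r\psi^\tau$; the term $|\slashed{\nabla}\psi_\sharp^\tau|^2$ acquires the extra angular-frequency weight directly from Parseval. In every case the right-hand side is pointwise-in-$r$ dominated by $e_2[\psi^\tau]$. Integrating against the spatial volume element and then in $t^\star$, and invoking (\ref{bndadded}) applied to $\psi^\tau$ (whose higher-order energy is in turn controlled by $E_2[\psi]$ up to harmless cutoff errors supported in strips of unit length), we obtain
\[
\int_{\mathcal{D}} e_1[\psi_\sharp^\tau]\;\lesssim\;\frac{1}{L}\int_0^\tau dt^\star\int_{\Sigma_{t^\star}} e_2[\psi^\tau]\,r^2\sin\theta\,dr\,d\theta\,d\phi\;\lesssim\;\frac{\tau}{L}\,E_2[\psi].
\]

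\textbf{Main obstacle.} The principal technical subtlety is the $\omega$-dependence of the spheroidal harmonics $S_{m\ell}(a\omega,\cdot)$, which prevents any pointwise spectral identity in the full $(t,\mathbb{S}^2)$ variables and forces the orthogonality to be used at each $\omega$ separately before integrating in $\omega$. One must therefore verify that $\langle P_\alpha\cdot,\cdot\rangle$ is dominated by a fixed combination of angular derivatives uniformly in $\omega$ and in the full mass range $\alpha<9/4$; this is where the explicit form of $P(\xi)$ in (\ref{def:pop}), Lemma \ref{spheroidal}, the Cauchy--Schwarz bound on the imaginary cross term, and the fact that $a^2\omega^2$ is part of $\mathcal{L}$ all come into play. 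A minor point is that $\Omega_i$ is defined in the $(\theta,\phi)$-coordinates of $\mathcal{R}$ while the spheroidal decomposition uses $(\theta,\tilde\phi)$, but these coordinates differ only by an $r$-dependent shift in the azimuth, so the angular gradients at fixed $r$ coincide and no commutator issues arise.
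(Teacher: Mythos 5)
Your proposal is correct and is essentially the argument the paper gives: both exploit the support of $1-\zeta$ on $\{\mathcal{L}\ge L\}$ together with Plancherel/Parseval in $(t,\mathbb{S}^2)$ to insert a factor $\mathcal{L}/L$, then convert the extra $\mathcal{L}$-weight back into physical-space derivatives and close with the commuted boundedness estimate \eqref{bndadded}, supplying the factor $\tau$ from the cut-off support. The paper packages this as $\int e_1[\psi_\sharp^\tau]\le L^{-1}\int e_2[\psi_\sharp^\tau]$ (the estimate \eqref{uhelp2}, written for each of $\partial_{\tilde\phi},\partial_r,\partial_\theta,\partial_t,Z$ with the appropriate $r$-weight) followed by the Appendix estimate \eqref{e2est} giving $\int e_2[\psi_\sharp^\tau]\le E_2[\psi]\,\tau$, whereas you go directly from $e_1[\psi_\sharp^\tau]$ to $e_2[\psi^\tau]$; these two routes are equivalent since the projection commutes with the derivatives and $|1-\zeta|\le 1$.
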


\begin{proof}
From (\ref{e2est}) we already have
\begin{align} 
\int_{\mathcal{D}}  \sqrt{g} dt^\star dr d\theta d\phi \  e_2\left[\psi^\tau_\sharp\right] 
\leq E_2\left[\psi\right] \cdot \tau
\end{align}
Now we can use that $L \leq \lambda_{m\ell} \left(a\omega\right) +a^2\omega^2$ holds for $\psi^\tau_{\sharp}$. For instance, for $\partial_{\tilde{\phi}}$,
\begin{align} \label{uhelp2}
L \int_{-\infty}^\infty dt \int d\sigma_{\mathbb{S}^2} |  \partial_{\tilde{\phi}} \psi_{\sharp}^\tau|_\gamma^2 = L \int_{-\infty}^\infty \sum_{m\ell} | im \widehat{\psi_\sharp^\tau}^{(a\omega)}_{m\ell}|^2 d\omega \leq \nonumber \\
\int_{-\infty}^\infty \sum_{m\ell} \left(\lambda_{m\ell} \left(a\omega\right) +a^2\omega^2 \right)| \widehat{\left(\partial_{\tilde{\phi}} \psi_\sharp^\tau\right)}^{(a\omega)}_{m\ell}|^2 d\omega \nonumber \\
\leq  C_{M,l,a} \int_{-\infty}^\infty dt \int d\sigma_{\mathbb{S}^2} \left[ |  {}^\gamma \nabla \partial_{\tilde{\phi}} \psi_{\sharp}^\tau|_\gamma^2 + | \partial_t \partial_{\tilde{\phi}} \psi_{\sharp}^\tau|^2 \right] \, ,
\end{align}
and similarly with the derivatives $\partial_r$, $\partial_\theta$ and $\partial_t$ (and linear combinations like the regular $Z$-derivative  defined in (\ref{Zdef})).
Multiplying (\ref{uhelp2}) (and its analogues for the other derivatives) with their appropriate $r$-weight and integrating over the domain of outer communications, we obtain
\begin{align} \label{eq:1go}
\int_{\mathcal{D}} e_1\left[\psi_\sharp^\tau\right] \leq \frac{1}{L}\  \int_{\mathcal{D}}  \sqrt{g} dt^\star dr d\theta d\phi \  e_2\left[\psi^\tau_\sharp\right] 
\end{align}
from which the claim follows.
\end{proof}

\section{Proof of Theorem \ref{theorem1}: Logarithmic decay} \label{se:pth1}
Starting from the estimate
\begin{align}
\int_{\mathcal{R}\left(0,\tau/2\right)} r^2 \sin \theta \ dt^\star dr d\theta d\phi \ e_1 \left[\psi_\flat^\tau\right] \leq  \int_{\mathcal{D}} e_1\left[\psi_\flat^\tau \right] \leq  \int_{\mathcal{D}} e_1\left[\psi^\tau\right] \leq E_1\left[\psi\right] \cdot \tau \, \nonumber ,
\end{align}
we find a slice $\Sigma_{\tau^\prime}$ with $0< \tau^\prime< \frac{\tau}{2}$, such that
\begin{align}
\int_{\Sigma_{\tau^\prime}} r^2 \sin \theta \ dr d\theta d\phi \  e_1 \left[\psi_\flat^\tau\right]  \lesssim E_1\left[\psi\right] \, .
\end{align}
We apply Proposition \ref{localN} from that slice $\Sigma_{\tau^\prime}$ to any slice to the future of it to obtain in particular
\begin{align} 
\int_{\mathcal{R}\left(\tau^\prime,\tau-1\right)} r^2 \sin \theta \ dt^\star dr d\theta d\phi \ e_1\left[\psi_\flat^\tau \right] \lesssim  
  C_{M,l,a} \cdot e^{C_{M,l,a} \sqrt{L} \frac{\pi}{2}} \int_{t^\star=0} J^N_\mu \left[\psi \right] n^\mu \nonumber \, .
\end{align}
Adding the estimate of Proposition \ref{propsharp}, we find
\begin{align}
\int_{\mathcal{R}\left(\tau/2,\tau-1\right)} e_1\left[\psi^\tau\right] \leq \int_{\mathcal{R}\left(\tau/2,\tau-1\right)} \left(e_1\left[\psi^\tau_\flat \right] + e_1\left[\psi^\tau_\sharp\right] \right)  \lesssim   \left( e^{C_{M,l,a}  \sqrt{L}}  + \frac{\tau}{L} \right) E_2\left[\psi\right] \nonumber \, .
\end{align}
But the strip we are integrating over has length $\sim \frac{\tau}{2}$ and we can therefore extract a good slice between $\frac{\tau}{2} \leq \tau_g \leq \tau-1$ where
\begin{align} \label{es:interpolate}
\int_{\Sigma_{\tau_g}} r^2 \sin \theta \ dr d\theta d\phi  \  e_1 \left[\psi^\tau\right]  \lesssim \left( \frac{e^{C_{M,l,a} \sqrt{L}}}{\tau}  + \frac{1}{L} \right) E_2\left[\psi\right] \, .
\end{align}

This holds for any $L$, in particular for $\sqrt{L} = \frac{1}{2 C_{M,l,a}} \left( \log \tau \right)$, which implies that there is a slice $\Sigma_\tau$ between $\frac{\tau}{2}$ and $\tau-1$ such that the energy decays like $\left(\log \tau\right)^{-2}$. Hence we have obtained a dyadic sequence of good slices through which the energy decays like $\left(\log \tau\right)^{-2}$.

\begin{proposition}
The non-degenerate energy
\begin{align}
f\left(\tau\right) = \int_{\Sigma_{\tau}} e_1 \left[\psi\right] r^2 \sin \theta dr d\theta d\phi \sim  \int_{\Sigma_{\tau}} J^N_\mu \left[\psi\right] n^\mu  
\end{align}
satisfies 
\begin{align}
f\left(\tau\right) \lesssim \frac{E_2\left[\psi\right]}{\left(\log \tau\right)^2} 
\end{align}
for all $\tau \geq 2$.
\end{proposition}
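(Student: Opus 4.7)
The plan is to combine the ``good slice'' estimate (\ref{es:interpolate}) with the uniform boundedness statement of Theorem \ref{theo:bndness} via a dyadic pigeonhole argument. All of the real work has been done by the time we arrive at (\ref{es:interpolate}); what remains is to convert the existence of a good slice in each long dyadic strip into a pointwise decay rate valid for every $\tau \geq 2$.

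\textbf{Step 1: produce good slices on a dyadic sequence.} For each integer $n \geq n_0$ (with $n_0$ chosen so that $L_n \geq 1$), apply the construction leading to (\ref{es:interpolate}) with $\tau = 2^n$ and the frequency cut-off $L_n$ determined by $\sqrt{L_n} = \tfrac{1}{2C_{M,l,a}}\log 2^n$. This choice is tuned precisely so that the two competing terms balance:
\[
\frac{e^{C_{M,l,a}\sqrt{L_n}}}{2^n} + \frac{1}{L_n} \;=\; \frac{1}{2^{n/2}} + \frac{4 C_{M,l,a}^{\,2}}{n^2 (\log 2)^2} \;\lesssim\; \frac{1}{n^2}.
\]
Consequently (\ref{es:interpolate}) produces a good slice $\Sigma_{\tau_g^n}$ with $2^{n-1} \le \tau_g^n \le 2^n - 1$ and
\[
f(\tau_g^n) \;\lesssim\; \frac{E_2[\psi]}{n^2}.
\]

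\textbf{Step 2: propagate by uniform boundedness.} By the non-degenerate boundedness statement (\ref{unibs}) of Theorem \ref{theo:bndness}, applied with initial slice $\Sigma_{\tau_g^n}$, we have $f(\tau) \lesssim f(\tau_g^n)$ for every $\tau \ge \tau_g^n$. In particular, any $\tau \in [\tau_g^n, \tau_g^{n+1}]$ satisfies $f(\tau) \lesssim E_2[\psi]/n^2$. Since $\tau \leq \tau_g^{n+1} \leq 2^{n+1}$ gives $n \geq \log_2\tau - 1$, we conclude $n^2 \gtrsim (\log \tau)^2$ and therefore $f(\tau) \lesssim E_2[\psi]/(\log \tau)^2$. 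The consecutive intervals $[\tau_g^n, \tau_g^{n+1}]$ for $n \geq n_0$ cover $[\tau_g^{n_0}, \infty)$, so the bound holds there. For the bounded remainder $2 \leq \tau \leq \tau_g^{n_0}$, the desired inequality follows trivially from (\ref{unibs}) after enlarging the implicit constant, since $(\log \tau)^{-2}$ is bounded below on this interval.

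\textbf{Main obstacle.} There is essentially no obstacle at this final stage: the difficult microlocal and exponentially-weighted multiplier analysis has already produced (\ref{es:interpolate}), and boundedness is already in hand. The one point that deserves care is that the propagation from the good slice $\Sigma_{\tau_g^n}$ to the entire strip $[\tau_g^n, \tau_g^{n+1}]$ genuinely requires the \emph{non-degenerate} boundedness statement (\ref{unibs}); this is why in Section \ref{sec:hozimprove} we went through the trouble of upgrading the integrated decay for $\psi_\flat^\tau$ to a non-degenerate $N$-estimate rather than being content with a $K$-based estimate. With this in place the dyadic interpolation closes and yields the stated $(\log \tau)^{-2}$ decay.
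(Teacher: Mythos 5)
Your proof is correct, but it follows a genuinely different (and somewhat more economical) route than the paper's. You extract a dyadic sequence of good slices from (\ref{es:interpolate}) with $f(\tau_g^n)\lesssim E_2[\psi]/n^2$, and then propagate forward directly using the non-degenerate uniform boundedness statement (\ref{unibs}) of Theorem \ref{theo:bndness}, whose constant is independent of the starting slice by stationarity of the Kerr-AdS metric; a single application per dyadic strip (so no accumulation of constants) plus trivial handling of $2\le\tau\le\tau_g^{n_0}$ yields the claim. The paper, by contrast, does not simply invoke (\ref{unibs}) from a shifted initial slice. It first observes that the \emph{degenerate} $K$-energy is honestly non-increasing by the conservation law (\ref{conslaw}), so $(\log\tau)^{-2}$-decay exports dyadically to the $K$-energy for free; it then revisits the redshift estimate (\ref{reold}), controlling the interior error by the now-decaying $K$-energy, to obtain the quantitative local estimate (\ref{uyt}), and finally runs a bootstrap argument to convert (\ref{uyt}) into the stated decay of the $N$-energy. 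The paper's route is longer but simultaneously produces the integrated-in-time decay $\int_{\tau_1}^{\tau_2}f\,d\tau$ on the left of (\ref{uyt}), and it makes explicit how the honest monotonicity of the $K$-energy feeds into the non-degenerate estimate; your route is shorter and relies on the already-established boundedness theorem as a black box. Both are valid.
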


\begin{proof}
Note that the statement is immediate for the degenerate Hawking-Reall energy $\int_{\Sigma_{\tau}} J^K_{\mu} n^\mu_{\Sigma}$ in view of (\ref{conslaw}): The $\left(\log \tau\right)^{-2}$-decay is exported dyadically (via energy conservation from the dyadic sequence of good slices already obtained) implying that the \emph{degenerate} energy decays like $\left(\log \tau\right)^{-2}$ through any slice.

This allows us to revisit the estimate for the vectorfield $N$, cf.~(\ref{reold}). Namely, we estimate the error-term in the interior (away from the horizon) using the $K$-energy which has already been shown to decay logarithmically to obtain the local estimate ($\tau_2 \geq \tau_1 \geq 2$):
\begin{align} \label{uyt}
f\left(\tau_2\right) + \int_{\tau_1}^{\tau_2} d\tau f\left(\tau\right) \leq C \cdot f\left(\tau_1\right) + C  \max \left(\left(\tau_2-\tau_1\right),1\right) \frac{E_2\left[\psi\right]}{\left( \log \tau_1\right)^2} \, 
\end{align}
for a $C>1$ depending only on the black hole parameters and $\alpha$. We claim that (\ref{uyt}) 
implies that 
\begin{align}
f\left(\tau\right) \leq \tilde{C} \cdot E_2\left[\psi\right] \left(\log \tau\right)^{-2}
\end{align}
for $\tau \geq 2$ using a bootstrap argument. Here is a sketch: Define $\mathfrak{l}= 8C^2$. Consider the region where $f\left(\tau\right) \leq A\cdot E_2\left[\psi\right]\left(\log \tau\right)^{-2}$ for $A=32 C^3$. Decompose this region into strips of length $\mathfrak{l}$ (the last one may be shorter). In each strip, i.e.~between $\tau_n$ and $\tau_{n+1}$, we find a good slice for which
\begin{align}
f\left(\tau_{good}\right) \leq \frac{1}{\mathfrak{l}} C \frac{A\cdot E_2\left[\psi\right]}{\left( \log \tau_n\right)^2} + C \frac{E_2\left[\psi\right]}{\left(\log \tau_n\right)^2} = \left(\frac{CA}{\mathfrak{l}} + C\right) \frac{E_2\left[\psi\right]}{\left(\log \tau_n\right)^2} \, .
\end{align}
Applying the estimate from the good slice up to any slice to the past of the next good slice we find
\begin{align}
f\left(\tau\right) \leq \left(A \frac{C^2}{\mathfrak{l}} + C^2 + C \ 2\mathfrak{l} \right) \frac{E_2\left[\psi\right]}{\left(\log \tau_n\right)^2} < \frac{3A\cdot E_2\left[\psi\right]}{4\left(\log \tau_n\right)^2} < \frac{A\cdot E_2\left[\psi\right]}{\left(\log \tau\right)^2} \, ,
\end{align}
where the last step uses that we can assume that $\tau_n \gg \mathfrak{l}$ and hence $\frac{1}{\tau_n} \sim \frac{1}{\tau_{n+2}}$ with a constant very close to $1$. This improves the bootstrap assumption.
\end{proof}

\begin{remark}
By repeated commutation one can further improve (\ref{eq:1go}) to hold with $L^{-n}$ replacing $L^{-1}$ and $e_{n+1}\left[\psi^\tau_\sharp\right]$ replacing $e_{2}\left[\psi^\tau_\sharp\right]$ for $n>1$, with the appropriate definition for $e_{n+1}\left[\psi^\tau_\sharp\right]$. (For instance, it suffices to define recursively $e_{n+1}\left[\psi\right]=e_n\left[\partial_t \psi\right] + \sum_i e_n\left[\Omega_i \psi\right]+e_n\left[\psi\right]$ for $n>1$, and to combine this with elliptic estimates from the energies produces by these densities.) The interpolation in (\ref{es:interpolate}) then yields $\left(\log \tau\right)^{-2n}$-decay. We have hence the following corollary:
\end{remark}

\begin{corollary}
In addition to (\ref{es:maines}), the more general estimate
\begin{align} \label{es:maines2}
\| \psi \|_{H^1_{AdS}\left(\Sigma_{t^\star}\right)} \lesssim C_n \left(\log \ t^\star \right)^{-n} \left( \int_{\Sigma_0} e_{n+1}\left[\psi\right] r^2 \sin \theta dr d\theta d\phi \right)^\frac{1}{2}
\end{align}
also holds for $n\geq 1$.
\end{corollary}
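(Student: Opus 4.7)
The plan is to iterate the argument of Theorem \ref{theorem1}, as sketched in the remark immediately preceding the corollary. For $n\geq 1$ define the higher-order densities recursively by $e_{n+1}[\psi] = e_n[\partial_{t^\star}\psi] + \sum_i e_n[\Omega_i\psi] + e_n[\psi]$, with $E_{n+1}[\psi]$ the corresponding initial energy obtained by integrating these densities on $\Sigma_0$. The first step is to extend the boundedness estimate (\ref{bndadded}) to all orders, that is, to establish
\begin{align*}
\int_{\Sigma_{t^\star}} e_{n+1}[\psi]\, r^2\sin\theta\, dr\, d\theta\, d\phi \;\lesssim\; E_{n+1}[\psi].
\end{align*}
This is obtained by the standard scheme of commuting (\ref{mwe}) with the Killing fields $T=\partial_{t^\star}$ and $\Phi$, with the angular momentum operators $\Omega_i$ away from the horizon, and with the redshift vectorfield near the horizon (cf.~\cite{Mihalisnotes, HolzegelAdS}), combined with the elliptic estimates and Hardy-type absorption already used in Section \ref{se:wpbs}.

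The second and essential step is to sharpen Proposition \ref{propsharp}. The high-frequency part $\psi_\sharp^\tau$ is supported on frequencies with $L\leq \lambda_{m\ell}(a\omega) + a^2\omega^2$, so iterating the Plancherel-type inequality (\ref{uhelp2}) a total of $n$ times yields
\begin{align*}
\int_{\mathcal{D}} e_1[\psi_\sharp^\tau] \;\leq\; \frac{1}{L^{n}}\int_{\mathcal{D}}\sqrt{g}\, dt^\star dr\, d\theta\, d\phi\; e_{n+1}[\psi^\tau] \;\leq\; \frac{1}{L^n}\, E_{n+1}[\psi]\cdot \tau,
\end{align*}
the last inequality being the higher-order boundedness statement of step one applied to $\psi^\tau$ (which is controlled by the data of $\psi$ via (\ref{errorloc})). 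The low-frequency estimates of Proposition \ref{opti} and Proposition \ref{localN} are not affected by the commutation, as the exponential prefactor $e^{C_{M,l,a}\sqrt L}$ depends on the frequency cut-off only, and not on the derivative order.

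The third step is the interpolation and bootstrap argument of Section \ref{se:pth1}, which now goes through verbatim. The pigeonhole construction produces a good slice $\Sigma_{\tau_g}$, $\tau/2 \leq \tau_g\leq \tau-1$, satisfying
\begin{align*}
\int_{\Sigma_{\tau_g}} r^2\sin\theta\, dr\, d\theta\, d\phi\; e_1[\psi] \;\lesssim\; \left(\frac{e^{C_{M,l,a}\sqrt{L}}}{\tau} + \frac{1}{L^n}\right) E_{n+1}[\psi],
\end{align*}
and the choice $\sqrt{L}=\frac{1}{2C_{M,l,a}}\log\tau$ balances the two terms, giving energy decay at the rate $(\log\tau)^{-2n}$ along a dyadic sequence. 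The conservation law (\ref{conslaw}) together with the redshift estimate (\ref{reold}) then exports this rate to every slice $\Sigma_\tau$ with $\tau\geq 2$, exactly as in the proof of Theorem \ref{theorem1}. Taking square roots produces (\ref{es:maines2}).

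The main obstacle is the first step, and specifically the treatment of angular commutations near the event horizon: the operators $\Omega_i$ are not Killing, and commutation generates error terms that are not a priori controlled by the commuted energy. This is precisely where the redshift vectorfield must be used as a replacement for the $\Omega_i$'s in a neighbourhood of $\mathcal{H}^+$, following the now-standard Dafermos--Rodnianski commutation scheme implemented in \cite{HolzegelAdS} in the Kerr-AdS setting. No genuinely new ideas are required beyond those already appearing in the proof of Theorem \ref{theorem1}.
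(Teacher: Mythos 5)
Your argument follows the paper's own proof essentially verbatim: the recursive definition of $e_{n+1}$, the $n$-fold iteration of the Plancherel inequality (\ref{uhelp2}) to gain $L^{-n}$ on $\psi_\sharp^\tau$, the observation that the low-frequency estimates (Propositions \ref{opti} and \ref{localN}) are insensitive to the commutation order, and the final interpolation with $\sqrt{L}\sim\log\tau$ followed by the bootstrap, all match what the paper sketches in the remark preceding the corollary and in Section \ref{higherder}. The one issue you flag -- controlling commutation by the non-Killing $\Omega_i$ near $\mathcal{H}^+$ -- is indeed the main technical point, and your proposed treatment (replace $\Omega_i$ by the redshift commutator near the horizon, keep $\Omega_i$ away from it, supplement with elliptic estimates) is precisely the scheme the paper cites.
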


%
%
%
\section{Some remarks on the Schwarzschild-AdS case} \label{se:srsc}
While we are not able to improve the logarithmic decay rate in the case of Schwarzschild-AdS (in accord with the heuristics given in the introduction), one can nevertheless extract some more information about the solution in this special case. The underlying reason is that for $a=0$ the background is spherically symmetric and one can expand the solution into the familiar spherical harmonics, $Y_{lm}$, which have eigenvalues $\ell\left(\ell+1\right)$ (with a $2\ell+1$ degeneration for fixed $\ell$) \emph{before} cutting off in time. One separates the solution into a part for which $\ell \left(\ell+1\right) \leq L$ and a part for which $\ell \left(\ell+1\right) \geq L$:
\begin{align}
\psi_\flat \left(t,r,\theta,\tilde{\phi}\right) &= \sum_{m \ell} \int_{\mathbb{S}^2} \sin \underline{\theta} d\underline{\theta} d\underline{\tilde{\phi}} \ \psi \left(t,r,\underline{\theta}, \underline{\tilde{\phi}} \right)  Y_{\ell m} \left(\underline{\theta}, \underline{\tilde{\phi}}\right) Y_{\ell m} \left(\theta, \tilde{\phi}\right) \chi \left(\frac{\ell \left(\ell+1\right)}{L}\right) \nonumber \\
\psi_\sharp  \left(t,r,\theta,\tilde{\phi}\right)  &= \psi  \left(t,r,\theta,\tilde{\phi}\right)  - \psi_{\flat} \left(t,r,\theta,\tilde{\phi}\right)  \nonumber
\end{align}
where $\chi$ is the familiar cut-off functions defined in Section \ref{timecutoff}. Because there is no cut-off in time, both $\psi_{\flat}$ and $\psi_{\sharp}$ satisfy the \emph{homogeneous} wave equation. 
Moreover, since the non-degenerate energies of $\psi_\flat$ are easily seen to be controlled (uniformly in $L$) by the full $\psi$-energies, one obtains separate uniform boundedness statements for $\psi_\flat$ and $\psi_\sharp$ for free.\footnote{Note that this was not possible in the Kerr-AdS-case where the coupling of the $S_{m\ell}$ with $\omega$ prevented one from estimating the $\psi_\flat$-energy on a spacelike slice from the $\psi$-energy. It is an interesting question whether a more refined spectral analysis of the $S_{m\ell}$ can provide such an estimate.} 

With this in mind, starting from $\Box_g \psi_{\flat} + \frac{\alpha}{l^2} \psi_{\flat} = 0$, one cuts $\psi_{\flat}$ off in time and repeats the previous analysis for $\psi_{\flat}$. Since the time-cut-off commutes with the cut-off in angular momentum in the spherically-symmetric case, we now obtain the following version of Proposition \ref{localN}:
\begin{align} 
  \int_{\Sigma_{\tau_2}} J^N_{\mu}\left[\psi_\flat\right] n^\mu + \int_{\mathcal{R}\left(\tau_1,\tau_2\right)} J^N_{\mu}\left[\psi_\flat\right] n^\mu \lesssim   \int_{\Sigma_{\tau_1}} J^N_{\mu}\left[\psi_\flat\right] n^\mu \nonumber \\
  + C_{M,l,a} \cdot e^{C_{M,l,a} \sqrt{L} \frac{\pi}{2}} \int_{\Sigma_{\tau_1}} J^N_\mu \left[\psi_\flat \right] n^\mu \, .
\end{align}
Note that the last term on the right now involves only $\psi_\flat$ and integration over $\Sigma_{\tau_1}$. The above estimate implies that $\psi_\flat$ decays exponentially:
\begin{align}
 \int_{\Sigma_{t^\star}} J^N_{\mu}\left[\psi_\flat\right] n^\mu \lesssim \exp \left(- e^{-C_{M,l,a} \sqrt{L}} \cdot t^\star \right) \, .
\end{align}
Despite this strong decay, interpolating with $\psi_\sharp$ (for which only the boundedness statement is available) one cannot improve the logarithmic decay rate for the full solution. The question whether the low modes in the Kerr-AdS case also decay exponentially remains open.


%

\section{Proof of Theorem \ref{theorem2}: Absence of periodic solutions} \label{noperiodic}

We give an elementary proof that there exist no time-periodic solutions for the massive wave equation on Kerr-AdS spacetimes in the parameter range $|a|l < r_+^2$. This result is an immediate 
consequence of the $\left(\log t^\star\right)^{-1}$-decay established in the main theorem. However, whereas the main theorem required the construction of elaborate multipliers, the absence of periodic solutions can be obtained from elementary ODE arguments and the redshift effect near the horizon only. It is for this reason together with the fact that the latter result holds for a wider range of parameters that we include its proof here.
\begin{proof}[Proof of Theorem \ref{theorem2}.]
Step 1: We first prove that, as in the Schwarzschild case (cf.~the Appendix of \cite{gs:stab}), there exists no \emph{stationary} (with respect to the globally timelike Hawking Reall vectorfield $K$) solution. This is accomplished by the usual ``Bekenstein"-argument \cite{Bekenstein, Heusler} combined with a Hardy inequality for the mass term. Assuming that $\psi$ is a solution of $\Box \psi + \frac{\alpha}{l^2}\psi=0$ which satisfies $K\left(\psi\right)=T\left(\psi\right) + \lambda \Phi \left(\psi\right) =0$ for $\lambda = \frac{a\Xi}{r_+^2 + a^2}$, the wave equation reduces in Boyer-Lindquist coordinates to:
\begin{eqnarray} \label{redux}
\frac{\Xi}{\Sigma |\sin \theta |}\Big[ \partial_r \left( \sqrt{-g} g^{rr} \partial_r \psi \right)+\partial_\theta \left( \sqrt{-g}g^{\theta \theta} \partial_\theta \psi \right)\Big]
\nonumber \\
+\left(g^{\tilde{\phi} \tilde{\phi}} - 2\lambda g^{t \tilde{\phi}} + \lambda^2 g^{tt} \right) \partial^2_{\tilde{\phi}} \psi  +\frac{\alpha}{l^2}\psi=0 \, .
\end{eqnarray}
The expression in the round bracket of the second line is 
\begin{align}
P = g^{\tilde{\phi} \tilde{\phi}} - 2\lambda g^{t \tilde{\phi}} + \lambda^2 g^{tt} &= \frac{1}{det A} \left(g_{tt} + 2 \lambda g_{t \tilde{\phi}} + \lambda^2 g_{\tilde{\phi} \tilde{\phi}} \right) \nonumber 
= \frac{g\left(K,K\right)}{det A}  \ge 0 \, , \nonumber \\
\textrm{where \ \ } det A &= g_{tt} g_{\tilde{\phi} \tilde{\phi}} - \left(g_{t \tilde{\phi}}\right)^2=-\frac{\Delta_- \Delta_{\theta}}{\Xi^2}\leq 0 \, , \nonumber
\end{align}
since the Hawking-Reall vectorfield is timelike in the black hole exterior.
Note that the vector field $\partial_r$ (associated to Boyer-Lindquist coordinates) is given, in terms of $(\partial_{t^\star},\partial_\phi, R)$, where $R(f)=\frac{\partial f}{\partial r}|_{(t^*,\theta,\phi)=const}$ is the radial vector field of the $(t^\star,r,\theta,\phi)$ coordinates, by:
\begin{align}
\partial_r &= \frac{1}{\Delta_{-}}\left( \frac{2Mr}{1+\frac{r^2}{l^2}}\partial_{t^\star}+ a \Xi \partial_{\phi} \right)+R 
         \nonumber \\
          &= \frac{1}{\Delta_{-}}\frac{2Mr}{1+\frac{r^2}{l^2}} K  - \frac{2Ma\Xi \left(r-r_+\right)}{\Delta_- \left(\left(r^2+a^2\right) \left(1+ \frac{r^2}{l^2}\right)\right)} \partial_\phi + R, \nonumber
\end{align}
where $K$ is the Hawking-Reall vector field. Since $K(\psi)=0$ and since the last two terms are regular vectorfields on the event horizon, it follows that $\partial_r \psi$ extends continously to any sphere arising from the intersection of a $t^*=const$ slice and the future event horizon. 

Multiplying (\ref{redux}) by $\frac{\Sigma |\sin \theta |}{\Xi} \psi$, we find after integration by parts on a $t^*=const$ slice\footnote{The integration is performed on $t^*=const$ slice to avoid the bifurcate sphere $\{t=const\} \cap \{r=r_+\}$ where all the coordinate systems that we have so far introduced break down.}:
\begin{eqnarray} \label{eq:posneg}
\int_{r=r_+}^\infty\int_{\theta,\phi}dr d\theta d\phi \Big\{ \sqrt{-g}\left( g^{rr} (\partial_r \psi)^2+g^{\theta \theta}(\partial_\theta \psi)^2+P (\partial_{\tilde{\phi}} \psi)^2\right) \nonumber \\ 
-\frac{\alpha}{l^2}\psi^2  \frac{\Sigma |\sin \theta|}{\Xi}\Big\}(t^*,r,\theta,\phi)=0,
\end{eqnarray}
where the boundary term at infinity has vanished in view of the decay of $\psi$ and the boundary term at the horizon vanished since $g^{rr}$ extends continuously to $0$ at the event horizon and since $\psi \partial_r \psi$ is regular at $r_+$, as established above. 

As in \cite{HolzegelAdS, gs:stab}, using the bound $|a| < \frac{l}{2}$ (or any of the conditions given in Theorem \ref{theorem1}) , the zero order term can be absorbed into the first term via a Hardy inequality, from which it follows that $\psi=0$. 

\begin{remark}
The recent results of \cite{GHHCMW} establish that the expression on the right hand side of (\ref{eq:posneg}) is zero if and only if $\psi=0$ using only the conditions $|a|<l$ and $r_+^2 > |a|l$. As the remainder of the proof also only requires these two conditions, Theorem \ref{theorem2} actually holds for the full range of admissible parameters.
\end{remark}

Step 2:  We use a microlocal decomposition to reduce the analysis to that of an ode with trivial data. More precisely, in view of the separability of the geodesic equation, we may restrict our attention to solutions $\psi$ of the form:
\begin{align} \label{ansatz}
\psi=e^{-i \omega t} f(r^*) S_{m\ell}( a\omega,\cos \theta)e^{i m \tilde{\phi}},
\end{align}
where $S_{m\ell}( a\omega,\cos \theta)e^{i m \tilde{\phi}}$ are the (modified) oblate spheroidal harmonics introduced in Section \ref{se:sphhar}.
In view of Step 1 we can assume $\omega-\omega_+\neq 0$.
Since $\psi$ is regular at the future event horizon and since the vector field $-\Delta_- Z=\partial_t+\frac{a \Xi}{r_+^2+a^2}\partial_{\phi}-\partial_{r^*}$ vanishes on $\mathcal{H}^+$ (cf.~(\ref{Zdef})), it follows that:
$$
f'+i\left(\omega - \frac{a \Xi m}{r_+^2+a^2}\right)f=0.
$$
In fact, we have on any $t^*=const$ slice, $-\Delta_- Z= \Delta_- D$ where $D$ is a regular vector field at the horizon. 
Hence we obtain that $$f'+i\left(\omega - \frac{a \Xi m}{r_+^2+a^2}\right)f= \mathcal{O}\left(\Delta_- \right),$$ near the horizon.
Since $\frac{dr}{dr^*}=\frac{\Delta_-}{r^2+a^2}$ vanishes at the horizon, the rescaled function $u=\sqrt{r^2+a^2} f$ also satisfies
\begin{align} \label{oov}
u'+i\left(\omega - \frac{a \Xi m}{r_+^2+a^2}\right)u= \mathcal{O}\left(\Delta_- \right).
\end{align}
Moreover, since $\psi$ is in the class $CH^1_{AdS}$, we have the following uniform\footnote{A priori, for a general solution, the $r$ decay is only locally uniform in the $t$ variable. Here, of course, we have global uniform radial decay in view of the ansatz.} radial decay estimates (recall that $'$ denotes $r^*$-differentiation):
\begin{eqnarray}
\psi r^{3/2}=o(1) \textrm{ \ \ \ \ , \ \ \ \ }
\psi' r^{1/2}=o(1).
\end{eqnarray}
Hence, we find for $u=\sqrt{r^2+a^2} \psi$, the decay rates
\begin{eqnarray} \label{udecrates}
u r^{1/2}=o(1) \textrm{ \ \ \ \ , \ \ \ \ }
u' r^{-1/2}=o(1).
\end{eqnarray}
In Section \ref{se:sepv}, we obtained the following ode \eqref{eq:uode} for each $u^{(a\omega)}_{m\ell}$, which we will write short hand as $u$:
\begin{align} \label{sode}
u''+ \left(\omega^2-V\left(r\right) \right)u=H.
\end{align}
Compared to \eqref{eq:uode}, here we have $H=0$ since no cut-off is needed in view of the ansatz \eqref{ansatz}. Recall the current (\ref{Tcurr}).
From the decay of $u$ at $\infty$, it follows that 
\begin{align} \label{imaux}
Q_T[u]=0   \ \ \  \ \ \ \textrm{and hence}  \ \ \ Im \left(u^\prime \bar{u}\right) = 0
\end{align}
globally, which in conjunction with the boundary condition (\ref{oov}) on the horizon implies that $u$ vanishes on the horizon.

Next we choose the free function $z$ in the red-shift current (\ref{redshift}) to be $z=V_{red}^{-1}$, where $V_{red}$ was defined in (\ref{vred}). 
Since 
$$\left| u' +i\left(\omega - \omega_+ \right)u \right|^2=\mathcal{O}(r-r_+)^2$$ 
from (\ref{oov}) and $u=0$ on the horizon, it follows that $Q_{red}^{V^{-1}_{red}}=0$ on the horizon. The derivative satisfies (use (\ref{imaux}))
\begin{align} \label{Qreddec} 
Q_{red}'= -\frac{V_{red}'}{V_{red}^2}\left| u' +i\left(\omega - \omega_+\right)u \right|^2 -\frac{1}{V_{red}}(|u|^2)'\left(V_{red}-\tilde{V}\right)\leq 0
\end{align}
in a neighborhood of the horizon, which implies that $Q_{red}^{V_{red}^{-1}}$ is non-increasing near the horizon. On the other hand, one easily checks that $Q_{red}^{V_{red}^{-1}} \ge 0$ holds near the horizon, which is only consistent with (\ref{Qreddec}) if $Q_{red}^{V_{red}^{-1}}$ and hence $u$ are identically zero in a neighbourhood of the horizon. Solving the linear ode (\ref{sode}) with trivial initial conditions we conclude $u=0$ identically.
\end{proof}

\section{Acknowledgements}
The authors thank Mihalis Dafermos and Igor Rodnianski for insightful discussions. We also acknowledge the hospitality of the Erwin Schr\"odinger Institute, Vienna, where part of this work was being completed in July 2011 during the workshop ``Dynamics of General Relativity". J.S.~thanks the Albert Einstein Institute, Potsdam for financial support during his postdoctoral fellowship. G.H. thanks NSF for support in DMS-1161607.

        

\appendix
\section{Basic Fourier estimates} \label{ap:suf}
Let $\varphi \left(t,r,\theta,\tilde{\phi}\right)$ be a smooth function in Boyer-Lindquist coordinates and assume that $\varphi$ is also $L^2$ in time. We have the Fourier-transformation of $\varphi$,
\begin{align}
\widehat{\varphi} \left(\omega,r,\theta,\tilde{\phi}\right) = \frac{1}{\sqrt{2\pi}} \int_{-\infty}^{\infty}  dt \ e^{i\omega t} \varphi \left(t,r, \theta,\tilde{\phi}\right) \, .
\end{align}
We can, for each $\omega$ and $r$, interpret $\widehat{\varphi}$ as a function on the Boyer-Lindquist spheres $\mathbb{S}^2$. There is an orthonormal basis of $L^2$-eigenfunctions on these spheres, which is obtained as the solutions of the eigenvalue problem associated with (\ref{def:pop}) and which we denote by $S_{m \ell} \left(\theta,\tilde{\phi}\right) = S_{m\ell} \left(a\omega, \cos \theta\right) e^{im\tilde{\phi}}$. Any (say smooth) function on the sphere can be expanded in terms of these eigenfunctions. Completeness and orthogonality of this basis yield:
\begin{align}
\widehat{\varphi} \left(\omega,r,\theta,\tilde{\phi}\right) = \sum_{ml} \widehat{\varphi}^{(a\omega)}_{m \ell} \left(r\right)S_{m \ell} \left(\theta,\tilde{\phi}\right) \, ,
\end{align}
\begin{align}
\widehat{\varphi}^{(a\omega)}_{m \ell} \left(r\right) = \int_{\mathbb{S}^2} \sqrt{\gamma} d\theta d\tilde{\phi} \ \widehat{\varphi} \left(\omega,r,\theta,\tilde{\phi}\right) S^\star_{m \ell} \left(\theta,\tilde{\phi}\right) \, ,
\end{align}
where $\sqrt{\gamma} = \sin \theta$ denotes the volume element with respect to which the orthogonality relation of the $S_{m \ell}$ holds, cf. again (\ref{def:pop}). We have the important Parseval identity on $\mathbb{S}^2$,
\begin{align}
\int_{\mathbb{S}^2} \sqrt{\gamma} d\theta d\tilde{\phi} \  |\widehat{\varphi} \left(\omega,r,\theta,\tilde{\phi}\right)|^2 = \sum_{m \ell} |\widehat{\varphi}^{(a\omega)}_{m \ell} \left(r\right)|^2 \, .
\end{align}
Note that both sides depend on $\omega$. 

In the paper, we restrict to low modes. More precisely, we define for given $L$, the quantity
\begin{align}
\widehat{\varphi}^{(a\omega)}_{m \ell} \Big|_{\flat}\left(r\right) = \widehat{\varphi}^{(a\omega)}_{m \ell} \left(r\right) \cdot \zeta \left(\frac{a^2\omega^2 + \lambda_{m\ell}}{L}\right)
\end{align}
and the associated quantity
\begin{align}
\widehat{\varphi}_\flat \left(\omega,r,\theta,\tilde{\phi}\right) = \sum_{ml} \widehat{\varphi}^{(a\omega)}_{m \ell} |_\flat \left(r\right)S_{m \ell} \left(\theta,\tilde{\phi}\right) \, .
\end{align}
Then, again by Parseval and the fact that $|\zeta|\leq 1$, we have
\begin{align}
\int_{\mathbb{S}^2}  \sqrt{\gamma} d\theta d\tilde{\phi} \  |\widehat{\varphi}_\flat  \left(\omega,r,\theta,\tilde{\phi}\right)|^2 = \sum_{m \ell} |\widehat{\varphi}^{(a\omega)}_{m \ell} |_\flat \left(r\right)|^2 \nonumber \\
\leq \sum_{m \ell} |\widehat{\varphi}^{(a\omega)}_{m \ell} \left(r\right)|^2 = \int_{\mathbb{S}^2}  \sqrt{\gamma} d\theta d\tilde{\phi} \  |\widehat{\varphi} \left(\omega,r,\theta,\tilde{\phi}\right)|^2 \, .\nonumber
\end{align}
Integrating this identity in $\omega$ and using that $\sqrt{\gamma}$ is independent of $\omega$, we obtain (using Parseval in $t$ and $\omega$) that
\begin{align} \label{jb}
 \int_{-\infty}^{\infty} dt \int_{\mathbb{S}^2}  \sqrt{\gamma} d\theta d\tilde{\phi} \  |\varphi_\flat  \left(t,r,\theta,\tilde{\phi}\right)|^2 \nonumber \\
\leq  \int_{-\infty}^{\infty} dt \int_{\mathbb{S}^2}  \sqrt{\gamma} d\theta d\tilde{\phi} \  |\varphi  \left(t,r,\theta,\tilde{\phi}\right)|^2 \, .
\end{align}
Of course, the same identity holds for $\varphi_\sharp$. 

\subsection*{Applications}

\begin{itemize}
\item Letting $\varphi = \psi^\tau$ we can apply the estimate (\ref{jb}) for fixed $r>r_+$ corresponding to $R^\star_{-\infty}$. In the limit $R^\star_{-\infty} \rightarrow -\infty$ one obtains an integral over the event horizon:
\begin{align}
\lim_{R^\star_{-\infty} \rightarrow -\infty} \int_{-\infty}^{\infty} dt \int_{\mathbb{S}^2}  \sqrt{\gamma} d\theta d\tilde{\phi} \  |\psi^\tau_\flat  \left(t,r\left(R^\star_{-\infty}\right),\theta,\tilde{\phi}\right)|^2 \nonumber \\
\leq \lim_{R^\star_{-\infty} \rightarrow -\infty} \int_{-\infty}^{\infty} dt \int_{\mathbb{S}^2}  \sqrt{\gamma} d\theta d\tilde{\phi} \  |\psi^\tau  \left(t,r\left(R^\star_{-\infty}\right),\theta,\tilde{\phi}\right)|^2 \nonumber \\
\leq \int_{\mathcal{H}\left(0,\tau\right)}  \sqrt{\gamma} dt^\star d\theta d\tilde{\phi} \  |\psi^\tau  \left(t,r_+,\theta,\tilde{\phi}\right)|^2 \, .
\end{align}

\item Letting $\varphi = F$, the inhomogeneity in the wave equation for $\psi^\tau_\flat$, which is compactly supported in time (in two strips, actually). Integrating (\ref{jb}) over $\int_{R^\star_{-\infty}}^{R^\star_{\infty}}  dr^\star$, we obtain that 
\begin{align}
\int_{\mathcal{R}} r^2 | F_\flat|^2 \leq \int_{\mathcal{R}} r^2 | F|^2 = \int_{\mathcal{R}\left(0,1\right) \cup \mathcal{R} \left(\tau, \tau+1\right)} r^2 | F|^2 \lesssim \| \psi \|^2_{H^1_{AdS}} \, .
\end{align}

\item The estimate (\ref{jb}) also holds for derivatives. For instance,
\begin{align} 
 \int_{-\infty}^{\infty} dt \int_{\mathbb{S}^2}  \sqrt{\gamma} d\theta d\tilde{\phi} \  |\left(-\partial_t + \partial_{r^\star} - \frac{a\Xi}{r_+^2 + a^2} \partial_{\tilde{\phi}}\right) \varphi_\flat |^2 \nonumber \\
 = \int_{-\infty}^\infty d\omega \sum_{m \ell} \Big| \left[\left(-\partial_t + \partial_{r^\star} - \frac{a\Xi}{r_+^2 + a^2} \partial_{\tilde{\phi}}\right) \varphi_\flat \right]^{a\omega}_{m \ell} \Big|^2
 \nonumber \\
  = \int_{-\infty}^\infty d\omega \sum_{m \ell} \Big| i \omega \widehat{\varphi}^{(a\omega)}_{m \ell} |_\flat + \partial_{r^\star} \widehat{\varphi}^{(a\omega)}_{m \ell} |_\flat  - i \frac{a\Xi m  }{r_+^2 + a^2} \widehat{\varphi}^{(a\omega)}_{m \ell} |_\flat \Big|^2 
 \nonumber \\
= \int_{-\infty}^\infty d\omega \sum_{m \ell}  \left[\zeta  \left(\frac{a^2 \omega^2 +  \lambda_{m \ell}}{L}\right) \right]^2 \Big| i \omega \widehat{\varphi}^{(a\omega)}_{m \ell}  + \partial_{r^\star} \widehat{\varphi}^{(a\omega)}_{m \ell}  - i \frac{a\Xi m }{r_+^2 + a^2} \widehat{\varphi}^{(a\omega)}_{m \ell}  \Big|^2
 \nonumber \\
\leq  \int_{-\infty}^\infty d\omega \sum_{m \ell}  \Big| i \omega \widehat{\varphi}^{(a\omega)}_{m \ell}  + \partial_{r^\star} \widehat{\varphi}^{(a\omega)}_{m \ell}  - i \frac{a\Xi m }{r_+^2 + a^2} \widehat{\varphi}^{(a\omega)}_{m \ell}  \Big|^2 \nonumber \\
 = \int_{-\infty}^\infty d\omega \sum_{m \ell} \Big| \left[\left(-\partial_t + \partial_{r^\star} - \frac{a\Xi}{r_+^2 + a^2} \partial_{\tilde{\phi}}\right) \varphi \right]^{a\omega}_{m \ell} \Big|^2 \nonumber \\
=   \int_{-\infty}^{\infty} dt \int_{\mathbb{S}^2}  \sqrt{\gamma} d\theta d\tilde{\phi} \  |\left(-\partial_t + \partial_{r^\star} - \frac{a\Xi}{r_+^2 + a^2} \partial_{\tilde{\phi}}\right) \varphi |^2  \, . \nonumber
\end{align}
Higher derivatives can be dealt with similarly. We summarize this as 
\begin{lemma}
Let $D$ denote any operator of the collection $\partial_t$, $\partial_{\tilde{\phi}}$, $\partial_\theta$, $\partial_{r^\star}$ and $Z$ (defined in (\ref{Zdef})) of Boyer-Lindquist derivative  operators. For a non-negative integer $k$ let $D^k$ denote any $k^{th}$-order combination of these derivatives. Then, for any $k$,
\begin{align} \label{alem}
 \int_{-\infty}^{\infty} dt \int_{\mathbb{S}^2}  \sqrt{\gamma} d\theta d\tilde{\phi} \  |D^k \varphi_\flat |^2 \leq  \int_{-\infty}^{\infty} dt \int_{\mathbb{S}^2}  \sqrt{\gamma} d\theta d\tilde{\phi} \  |D^k \varphi |^2 \, .
\end{align}
\end{lemma}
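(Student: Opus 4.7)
The proof is by induction on $k$. The base case $k=0$ is exactly the inequality (\ref{jb}), which follows from Parseval in $t$ combined with Parseval on $\mathbb{S}^2$ in the orthonormal basis $\{S_{m\ell}(a\omega,\cos\theta)e^{im\tilde\phi}\}$, together with the pointwise bound $|\zeta|\leq 1$ on the multiplier defining $\flat$.

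For the inductive step I first dispose of the four ``diagonal'' generators $\partial_t$, $\partial_{\tilde\phi}$, $\partial_{r^\star}$ and $Z$. Each of these acts on the Fourier--spheroidal coefficient $\widehat{\varphi}^{(a\omega)}_{m\ell}(r)$ either as a multiplier diagonal in $(\omega,m,\ell)$ (namely $-i\omega$ or $im$) or by differentiation in $r^\star$ of that coefficient, \emph{without touching} the basis function $S_{m\ell}(a\omega,\cos\theta)e^{im\tilde\phi}$. Since the cutoff $\zeta\bigl(\tfrac{a^2\omega^2+\lambda_{m\ell}}{L}\bigr)$ depends only on the triple $(\omega,m,\ell)$, each such operator commutes with the projection $\Pi$ defining $\flat$, so $D(\varphi_\flat)=(D\varphi)_\flat$. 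The induction hypothesis applied to $D^{k-1}(D\varphi)$ then closes the argument whenever every factor of $D^k$ is of this type, the explicit computation being precisely the template displayed for $Z$ immediately before the lemma statement.

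The main obstacle is the operator $\partial_\theta$, which does \emph{not} commute with $\Pi$ because $\partial_\theta S_{m\ell}(a\omega,\cos\theta)$ expands as a full $\ell'$-series $\sum_{\ell'} d^{(m,a\omega)}_{\ell\ell'}S_{m\ell'}(a\omega,\cos\theta)$ mixing different angular modes. My plan is to bypass $\partial_\theta$ by passing to the self-adjoint spectral operator $P_\alpha(a\omega)$, which commutes with $\Pi$ by construction. Using $\Delta_\theta\geq\Xi$ together with the integration-by-parts identity from the proof of Lemma~\ref{spheroidal}, one has
\begin{align*}
\Xi\int_{\mathbb{S}^2}|\partial_\theta f|^2\sin\theta\,d\theta d\tilde\phi \ \leq\ \int_{\mathbb{S}^2}\bigl(P_\alpha(a\omega)+a^2\omega^2\bigr)\,f\cdot \bar f\,\sin\theta\,d\theta d\tilde\phi
\end{align*}
for each fixed $(\omega,r)$, where the right-hand side admits the Parseval representation $\sum_{m\ell}(\lambda_{m\ell}+a^2\omega^2)|f_{m\ell}|^2$ with every summand non-negative by Lemma~\ref{spheroidal}. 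Applied to $f=\widehat{\varphi_\flat}(\omega,r,\cdot)$, the termwise bound $|\zeta|^2\leq 1$ dominates this sum by the same expression for $\varphi$; reverting the integration by parts bounds the latter by $\int|\partial_\theta\varphi|^2$ plus $\partial_{\tilde\phi}$- and zero-order contributions already controlled by the easy case. Integrating in $(t,r)$ settles the case $k=1$, $D=\partial_\theta$, up to a constant depending only on the parameters, which is harmless for the applications in Section~\ref{se:sephif}.

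For general $D^k$ one iterates: commute each diagonal generator past $\Pi$ using the easy case, and apply the spectral comparison above to each occurrence of $\partial_\theta$. The only bookkeeping subtlety is that $P_\alpha$ depends on $a\omega$, so $\partial_t$ passing a $\partial_\theta$ must track that dependence; but mode-by-mode in $\omega$ this is transparent because everything is already Fourier--spheroidally decomposed, and no additional error arises since $\Pi$ is the spectral projection of $P_\alpha(a\omega)$ itself.
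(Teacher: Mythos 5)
Your reduction of $\partial_t$, $\partial_{\tilde\phi}$, $\partial_{r^\star}$ and $Z$ to the Parseval-plus-$|\zeta|\leq 1$ argument is precisely the paper's displayed computation; for these ``diagonal'' operators there is nothing to add. The genuine content of your proposal is the observation that $\partial_\theta$ is different. The paper subsumes it under ``higher derivatives can be dealt with similarly,'' but $\partial_\theta$ does \emph{not} commute with the $\flat$-cutoff: at fixed $(m,\omega)$ the Gram matrix $\bigl(\langle\partial_\theta S_{m\ell},\partial_\theta S_{m\ell'}\rangle_{L^2(\sin\theta\,d\theta)}\bigr)_{\ell,\ell'}$ is not diagonal (already at $a=0$ one has $\int_{-1}^1 P^m_\ell P^m_{\ell'}(1-x^2)^{-1}\,dx\neq 0$ for $\ell\neq\ell'$), so the mode-wise bound $|\zeta|^2\leq 1$ does not directly give $\|\partial_\theta\varphi_\flat\|\leq\|\partial_\theta\varphi\|$ with constant one, and (\ref{alem}) as written is strictly stronger than what the paper's argument establishes for this generator.

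Your workaround is correct and, in my view, the right way to state the lemma: routing the estimate through the commuting operator $P_\alpha(a\omega)+a^2\omega^2$, using spectral monotonicity under the $\flat$-cutoff, and reverting the integration by parts yields (\ref{alem}) with a constant $C_{M,l,a,\alpha}$ and with same-order $\partial_t$- and $\partial_{\tilde\phi}$-derivatives joining $\partial_\theta$ on the right, coming from the $\omega$- and $m$-dependent coefficients in the quadratic form of $P_\alpha(a\omega)+a^2\omega^2$. That weakened form is exactly what the application needs: the Lemma feeds only into the subsequent Corollary and then (\ref{e2est}), used in Proposition~\ref{propsharp} (Section~\ref{se:esph}, not Section~\ref{se:sephif} as you wrote), where $e_2$ is already a symmetric mix of $t$-, $r$- and angular derivatives and a universal constant is harmless. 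Two small refinements to your sketch: for $k\geq 2$ involving repeated $\partial_\theta$, the iteration is cleanest carried out via elliptic regularity for $P_\alpha(a\omega)$ on the sphere rather than literal powers of $\partial_\theta$, since $\partial_\theta\varphi_\flat$ is not itself of $\flat$-form; and the pointwise bound $\Delta_\theta\geq\Xi$ you use relies on the standing hypothesis $|a|<l$. Neither affects the conclusion for $k\leq 2$, which is all that is used.
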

\begin{corollary}
Letting $\varphi = \psi^\tau$, we have for any non-negative weight function $w\left(r\right)$ the estimate
\begin{align}
\int_{\mathcal{D}}  w\left(r\right) \sqrt{\gamma} dt dr^\star d\theta d\tilde{\phi} \  \Big|D^k \psi^\tau_\flat \Big|^2 
\leq 
\int_{\mathcal{D}} w\left(r\right) \sqrt{\gamma} dt dr^\star d\theta d\tilde{\phi} \ \Big|D^k \psi^\tau \Big|^2  \, .
\end{align}
\end{corollary}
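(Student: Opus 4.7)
The plan is to reduce the corollary to the preceding lemma by noting that the weight $w(r)$ depends only on $r$ (equivalently, only on $r^\star$) and is non-negative, so it can be pulled outside the $t$, $\theta$, $\tilde\phi$ integrations and handled by Fubini--Tonelli.

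First I would apply the Lemma pointwise in $r^\star$ to the choice $\varphi = \psi^\tau$. Since $\psi^\tau$ is smooth and has compact support in $t$ (it is cut off in a bounded strip), its $t$-Fourier transform $\widehat{\psi^\tau}(\omega,r,\theta,\tilde\phi)$ is Schwartz in $\omega$ with values in $C^\infty(\mathbb{S}^2)$ for each $r>r_+$, so the spectral decomposition in $\omega$ and in the $(\ell,m)$ eigenbasis of $P_\alpha(a\omega)$ is justified and the hypotheses of the Lemma are met for every fixed $r^\star \in (R^\star_{-\infty},R^\star_\infty)$. Hence
\begin{align*}
\int_{-\infty}^{\infty}\!dt \int_{\mathbb{S}^2} \sqrt{\gamma}\, d\theta\, d\tilde\phi \ |D^k \psi^\tau_\flat|^2(t,r,\theta,\tilde\phi)
\leq \int_{-\infty}^{\infty}\!dt \int_{\mathbb{S}^2} \sqrt{\gamma}\, d\theta\, d\tilde\phi \ |D^k \psi^\tau|^2(t,r,\theta,\tilde\phi)
\end{align*}
holds for each $r^\star$.

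Next I would multiply this inequality by $w(r(r^\star)) \ge 0$, which is legitimate since the weight is independent of $t$, $\theta$, $\tilde\phi$, and integrate in $r^\star$. The integrands are non-negative, hence by Tonelli's theorem the order of the $dt\, d\theta\, d\tilde\phi$ and $dr^\star$ integrations can be exchanged, producing integrals over the full domain of outer communication $\mathcal{D}$ with volume element $dt\, dr^\star\, d\theta\, d\tilde\phi$. This yields the claimed estimate.

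I do not expect any genuine obstacle here: the Lemma does all the work (via Parseval in $t$ and in $(\ell,m)$ together with $|\zeta|\leq 1$), and the corollary is only a matter of commuting a positive, $(t,\theta,\tilde\phi)$-independent weight with the integration. The only point that requires a brief comment is the justification of Tonelli (finiteness or non-negativity of the integrand) and the fact that the Lemma's Boyer--Lindquist derivative operators $D$ are the ones appearing in $\mathcal{D}$'s volume decomposition, which is immediate from the definition of $Z$ in \eqref{Zdef} and the regularity of $\psi^\tau$ up to the horizon.
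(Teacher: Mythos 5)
Your proposal is correct and reflects the argument implicit in the paper: apply the Lemma pointwise in $r^\star$, multiply by the non-negative $(t,\theta,\tilde\phi)$-independent weight $w(r)$, and integrate in $r^\star$. The paper gives no separate proof of this Corollary precisely because, as you observe, it is a one-line consequence of the Lemma.
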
 
We may now choose $w\left(r\right)$ with the correct weights near the horizon and near infinity to mimic the volume element $\sqrt{g}$. Since the derivatives $D$ are all regular, and can readily be converted into the regular $\left(t^\star, r, \theta, \phi\right)$- coordinates, we finally obtain (using that the right hand side is only supported for $0\leq t^\star \leq \tau$):
\begin{align} \label{e2est}
\int_{\mathcal{D}}  \sqrt{g} dt^\star dr d\theta d\phi \  e_2\left[\psi^\tau_\flat\right] \leq \int_{\mathcal{R}\left(0,\tau\right)}  \sqrt{g} dt^\star dr d\theta d\phi \  e_2\left[\psi^\tau\right] \, 
\leq E_2\left[\psi\right] \tau \, .
\end{align}
The same estimate holds for $\psi^\tau_\sharp$.
\end{itemize}






\frenchspacing
\bibliographystyle{hacm}
\bibliography{thesisrefs}

\begin{thebibliography}{10}

\bibitem{AndAdS}
{\sc Anderson, M.~T.}
\newblock {On the uniqueness and global dynamics of AdS spacetimes}.
\newblock {\em Class. Quant. Grav. 23\/} (2006), 6935--6954, hep-th/0605293.

\bibitem{AndBlue}
{\sc Andersson, L., and Blue, P.}
\newblock {Hidden symmetries and decay for the wave equation on the Kerr
  spacetime}.
\newblock arXiv:0908.2265.

\bibitem{Aretakis}
{\sc Aretakis, S.}
\newblock {Decay of Axisymmetric Solutions of the Wave Equation on Extreme Kerr
  Backgrounds}.
\newblock arXiv:1110.2006.

\bibitem{Bachelot}
{\sc Bachelot, A.}
\newblock {The Dirac System on the Anti-de Sitter Universe}.
\newblock {\em Commun.~Math.~Phys. 283\/} (2008), 127--167, arXiv:0706.1315.

\bibitem{Bekenstein}
{\sc Bekenstein, J.~D.}
\newblock {Nonexistence of baryon number for static black holes}.
\newblock {\em Phys.Rev. D5\/} (1972), 1239--1246.

\bibitem{BizonAdS}
{\sc Bizon, P., and Rostworowski, A.}
\newblock {On weakly turbulent instability of anti-de Sitter space}.
\newblock {\em Phys.Rev.Lett. 107\/} (2011), 031102, 1104.3702.

\bibitem{Haefner}
{\sc {Bony, J.-F. and H\"afner, D.}}
\newblock {Decay and non-decay of the local energy for the wave equation in the
  De Sitter-Schwarzschild metric}.
\newblock {\em {Comm. Math. Phys.} 282\/} (2008), 697--719.

\bibitem{Breitenlohner}
{\sc Breitenlohner, P., and Freedman, D.~Z.}
\newblock {Stability in Gauged Extended Supergravity}.
\newblock {\em Ann. Phys. 144\/} (1982), 249.

\bibitem{nb:del}
{\sc Burq, N.}
\newblock D\'ecroissance de l'\'energie locale de l'\'equation des ondes pour
  le probl\`eme ext\'erieur et absence de r\'esonance au voisinage du r\'eel.
\newblock {\em Acta Math. 180}, 1 (1998), 1--29.

\bibitem{Cardosoinstability1}
{\sc Cardoso, V., and Dias, O.~J.}
\newblock {Small Kerr-anti-de Sitter black holes are unstable}.
\newblock {\em Phys.Rev. D70\/} (2004), 084011, hep-th/0405006.

\bibitem{blackholebomb}
{\sc Cardoso, V., Dias, O.~J., Lemos, J.~P., and Yoshida, S.}
\newblock {The Black hole bomb and superradiant instabilities}.
\newblock {\em Phys.Rev. D70\/} (2004), 044039, hep-th/0404096.

\bibitem{Cardosoinstability2}
{\sc Cardoso, V., Dias, O.~J., and Yoshida, S.}
\newblock {Classical instability of Kerr-AdS black holes and the issue of final
  state}.
\newblock {\em Phys.Rev. D74\/} (2006), 044008, hep-th/0607162.

\bibitem{Newtontalk}
{\sc Dafermos, M.}
\newblock {\em The Black Hole Stability problem}.
\newblock Newton Institute, Cambridge,
  http://www.newton.ac.uk/webseminars/pg+ws/2006/gmx/1010/dafermos/, 2006.

\bibitem{DafRod}
{\sc Dafermos, M., and Rodnianski, I.}
\newblock {A proof of Price's law for the collapse of a self- gravitating
  scalar field}.
\newblock {\em Invent. Math. 162\/} (2005), 381--457, gr-qc/0309115.

\bibitem{Mihalisnotes}
{\sc Dafermos, M., and Rodnianski, I.}
\newblock {Lectures on black holes and linear waves}.
\newblock {\em {Institut Mittag-Leffler Report no. 14, 2008/2009}\/} (2008),
  arXiv:0811.0354.

\bibitem{DafRodnew}
{\sc Dafermos, M., and Rodnianski, I.}
\newblock {A new physical-space approach to decay for the wave equation with
  applications to black hole spacetimes}.
\newblock {\em {XVI International Congress on Mathematical Physics, P. Exner
  (ed), World Scientific, London}\/} (2009), arXiv:0910.4957.

\bibitem{DafRod2}
{\sc Dafermos, M., and Rodnianski, I.}
\newblock {The red-shift effect and radiation decay on black hole spacetimes}.
\newblock {\em Comm.~Pure Appl.~Math. 62\/} (2009), 859--919, gr-qc/0512119.

\bibitem{DafRodsmalla}
{\sc Dafermos, M., and Rodnianski, I.}
\newblock {Decay for solutions of the wave equation on Kerr exterior spacetimes
  I-II: The cases $|a| \ll M$ or axisymmetry}.
\newblock arXiv:1010.5132.

\bibitem{DafRodlargea}
{\sc Dafermos, M., and Rodnianski, I.}
\newblock {The black hole stability problem for linear scalar perturbations}.
\newblock {\em {to appear in Proceedings of the $12^{th}$ Marcel Grossmann
  Meeting}\/} (2010), arXiv:1010.5137.

\bibitem{DafRodKerr}
{\sc Dafermos, M., and Rodnianski, I.}
\newblock {A proof of the uniform boundedness of solutions to the wave equation
  on slowly rotating Kerr backgrounds}.
\newblock {\em {Invent. Math. } {185 (3)}\/} (2011), 467--559, arXiv:0805.4309.

\bibitem{Dyatlov2}
{\sc Dyatlov, S.}
\newblock {Asymptotic distribution of quasi-normal modes for Kerr-de Sitter
  black holes}.
\newblock arXiv:1101.1260.

\bibitem{Dyatlov1}
{\sc Dyatlov, S.}
\newblock {Quasi-normal modes and exponential energy decay for the Kerr-de
  Sitter black hole}.
\newblock {\em {Comm.~Math.~Phys.} {306}\/} (2011), {119--163},
  arXiv:1003.6128.

\bibitem{Festuccia}
{\sc Festuccia, G., and Liu, H.}
\newblock {A Bohr-Sommerfeld quantization formula for quasinormal frequencies
  of AdS black holes}.
\newblock {\em {Advanced Science Letters} 2 (2)\/} (2009), 221--235,
  arXiv:0811.1033.

\bibitem{HawkingReall}
{\sc Hawking, S.~W., and Reall, H.~S.}
\newblock {Charged and rotating AdS black holes and their CFT duals}.
\newblock {\em Phys. Rev. D61\/} (2000), 024014, hep-th/9908109.

\bibitem{Henneaux}
{\sc Henneaux, M., and Teitelboim, C.}
\newblock {Asymptotically anti-De Sitter Spaces}.
\newblock {\em Commun. Math. Phys. 98\/} (1985), 391--424.

\bibitem{Heusler}
{\sc Heusler, M.}
\newblock {\em Black Hole Uniqueness Theorems}.
\newblock Cambridge Lecture Notes in Physics, Cambridge, 1996.

\bibitem{HolzegelAdS}
{\sc Holzegel, G.}
\newblock {On the massive wave equation on slowly rotating Kerr-AdS
  spacetimes}.
\newblock {\em Comm.~Math.~Phys. 294\/} (2010), 169--197, arXiv:0902.0973.

\bibitem{Holzegelwp}
{\sc Holzegel, G.}
\newblock {Well-posedness for the massive wave equation on asymptotically
  anti-de-Sitter spacetimes}.
\newblock {\em J. Hyperbolic Differ. Equ. 9\/} (2012), 239--261,
  arXiv:1103.0710.

\bibitem{gs:lwp}
{\sc Holzegel, G., and Smulevici, J.}
\newblock {Self-gravitating Klein-Gordon fields in asymptotically Anti-de
  Sitter spacetimes}.
\newblock {\em Ann. Henri Poincar\'e 13 (4)\/} (2012), 991--1038,
  arXiv:1103.0712.

\bibitem{gs:stab}
{\sc Holzegel, G., and Smulevici, J.}
\newblock {Stability of Schwarzschild-AdS for the spherically symmetric
  Einstein-Klein Gordon system}.
\newblock {\em accepted by Comm.~Math.~Phys.\/} (2012), arXiv:1103.3672.

\bibitem{GHHCMW}
{\sc Holzegel, G., and Warnick, C.}
\newblock {Boundedness and growth for the massive wave equation on
  asymptotically anti-de Sitter black holes}.
\newblock arXiv:1209.3308.

\bibitem{Hubeny}
{\sc Horowitz, G.~T., and Hubeny, V.~E.}
\newblock {Quasinormal modes of AdS black holes and the approach to thermal
  equilibrium}.
\newblock {\em Phys.Rev. D62\/} (2000), 024027, hep-th/9909056.

\bibitem{Ishibashi}
{\sc Ishibashi, A., and Wald, R.~M.}
\newblock {Dynamics in non-globally hyperbolic static spacetimes. III: anti-de
  Sitter spacetime}.
\newblock {\em Class. Quant. Grav. 21\/} (2004), 2981--3014, hep-th/0402184.

\bibitem{Toha1}
{\sc {J.~Marzuola, J.~Metcalfe, D.~Tataru, M.~Tohaneanu}}.
\newblock {Strichartz Estimates on Schwarzschild Black Hole Backgrounds}.
\newblock {\em Comm. Math. Phys. 293\/} (2010), arXiv:0802.3942.

\bibitem{Melrose}
{\sc {R.~Melrose, A.~S\'a Barreto and A.~Vasy}}.
\newblock {Asymptotics of solutions of the wave equation on de
  Sitter-Schwarzschild space}.
\newblock arXiv:0811.2229.

\bibitem{Toha2}
{\sc Tataru, D., and Tohaneanu, M.}
\newblock {Local energy estimate on Kerr black hole backgrounds}.
\newblock {\em Int. Math. Res. Not. 2011\/} (2011), 248--292, arXiv:0810.5766.

\bibitem{Vasy}
{\sc Vasy, A.}
\newblock {The wave equation on asymptotically de Sitter-like spaces}.
\newblock {\em {Advances in Mathematics} 223\/} (2010), 49--97,
  arXiv:0706.3669.

\bibitem{Vasy2}
{\sc Vasy, A.}
\newblock {The wave equation on asymptotically Anti-de Sitter spaces}.
\newblock {\em Analysis and PDE 5 (1)\/} (2012), 81--144, arXiv:0911.5440.

\bibitem{VasyDyatlov}
{\sc {Vasy, A. and Dyatlov, S.}}
\newblock {Microlocal analysis of asymptotically hyperbolic and Kerr-de Sitter
  spaces}.
\newblock arXiv:1012.4391.

\bibitem{claude:wp}
{\sc Warnick, C.}
\newblock {The Massive wave equation in asymptotically AdS spacetimes}.
\newblock 1202.3445.

\end{thebibliography}
\end{document}